\documentclass[hyperfootnotes=false,letterpaper]{article}

\usepackage[english]{babel}
\usepackage{cite}

\usepackage[utf8]{inputenc}
\usepackage{fullpage}
\usepackage{amsmath, amsfonts, amssymb, amsthm}
\usepackage{color}
\usepackage{xcolor}
\definecolor{darkgreen}{rgb}{0,0.5,0}
\definecolor{darkblue}{rgb}{0,0,0.6}

\usepackage[linesnumbered,noresetcount,vlined,ruled,procnumbered]{algorithm2e}
\usepackage[pagebackref=true]{hyperref}
\hypersetup{
  unicode=false,          %
  colorlinks=true,        %
  linkcolor=darkblue,          %
  citecolor=darkgreen,        %
  filecolor=magenta,      %
  urlcolor=blue
}

\usepackage[capitalize, nameinlink]{cleveref}

\usepackage{graphicx}
\usepackage{comment}
\usepackage{thm-restate}
\usepackage{bbm}
\usepackage{tikz}
\usepackage[linewidth=1pt]{mdframed}
\usepackage[most]{tcolorbox}
\definecolor{block-gray}{gray}{0.85}
\newtcolorbox{shadequote}{colback=block-gray,grow to right by=-2mm,grow to left by=-2mm,
  boxrule=0pt,boxsep=0pt,breakable}
\usepackage{color}
\usepackage[inline]{enumitem}

\usepackage{caption}
\usepackage{cancel}
\usepackage{subcaption}
\usepackage{mathrsfs}
\usepackage{xspace}

\usepackage{mathtools} %
\usepackage{booktabs} %
\usepackage{makecell}%
\usepackage{tabularx}

\SetKwComment{Comment}{$\quad\triangleright$\ }{}
\newcommand{\nosemic}{\renewcommand{\@endalgocfline}{\relax}}%

\usepackage[normalem]{ulem} %

\crefalias{AlgoLine}{line}%

\makeatletter
\let\cref@old@stepcounter\stepcounter
\def\stepcounter#1{%
  \cref@old@stepcounter{#1}%
  \cref@constructprefix{#1}{\cref@result}%
  \@ifundefined{cref@#1@alias}%
  {\def\@tempa{#1}}%
  {\def\@tempa{\csname cref@#1@alias\endcsname}}%
  \protected@edef\cref@currentlabel{%
    [\@tempa][\arabic{#1}][\cref@result]%
    \csname p@#1\endcsname\csname the#1\endcsname}}
\makeatother

\usepackage[titletoc,title]{appendix}

\newif\ifnotes
\notestrue
\ifnotes
    \newcommand{\keren}[1]{{\ifnotes \color{blue}{Keren: #1}
                \fi}}
    \newcommand{\tomer}[1]{{\ifnotes \color{brown}{Tomer: #1}
                \fi}}
    \newcommand{\virgi}[1]{{\ifnotes \color{red}{Virgi: #1}
                \fi}}
\else
    \newcommand{\keren}[1]{}
    \newcommand{\tomer}[1]{}
    \newcommand{\virgi}[1]{}
\fi

\newtheorem{theorem}{Theorem}[section]
\newtheorem{lemma}[theorem]{Lemma}
\newtheorem{remark}{Remark}
\newtheorem{definition}{Definition}

\newtheorem{proposition}{Proposition}
\newtheorem{claim}{Claim}
\newtheorem{corollary}{Corollary}
\newtheorem{conclusion}{Conclusion}

\newtheorem{hypothesis}{Hypothesis}

\makeatletter
\renewenvironment{proof}[1][\proofname]{\par
    \pushQED{\qed}%
    \normalfont \topsep6\p@\@plus6\p@\relax
    \trivlist
    \item\relax
    {\bfseries\boldmath
        #1\@addpunct{.}}\hspace\labelsep\ignorespaces
}{%
    \popQED\endtrivlist\@endpefalse
}
\makeatother

\newcommand{\apm}{(1\pm\eps)}
\newcommand{\apmp}{(1\pm\eps')}

\newcommand{\MM}[1]{\mathsf{MM}
(#1)}

\newcommand{\brak}[1]{\left(#1\right)}
\newcommand{\sbrak}[1]{\left[#1\right]}

\newcommand{\Exp}[1]{\mathbb{E}\left[ #1 \right]}
\newcommand{\Var}[1]{\mathsf{Var}\left[ #1 \right]}
\renewcommand{\Pr}[1]{{\mathrm{Pr}}\left[ #1 \right]}
\newcommand{\set}[1]{\left\{ #1 \right\}}
\newcommand{\sset}[1]{\{#1\}}

\newcommand{\polylog}[1]{\mathrm{polylog}\brak{#1}}

\newcommand{\ch}[2]{{#1 \choose #2}}
\newcommand{\Bin}[2]{\mathsf{Bin}\brak{{#1},{#2}}}

\newcommand{\BO}[1]{\mathcal{O}
(#1)}
\newcommand{\BC}{\mathcal{O}\brak{1}}
\newcommand{\TO}[1]{\tilde{\mathcal{O}}
(#1)}
\newcommand{\Omc}[1][1]{\Omega\brak{#1}}
\newcommand{\zrn}[1]{\set{0,1,\ldots,#1}}
\newcommand{\zrnone}[1]{\set{1,2,\ldots,#1}}

\newcommand{\eps}{\varepsilon}
\newcommand{\Em}[1]{\mathcal{E}_{#1}}

\newcommand{\HH}{\mathcal{H}}

\newcommand{\XC}{\mathcal{X}}

\newcommand{\AC}{\mathcal{A}}

\newcommand{\FF}{\mathcal{F}}
\newcommand{\FFall}{\mathcal{F}_{\mathrm{all}}}

\newcommand{\Ebb}{\mathcal{E}_{\mathrm{Find\mhyphen Heavy}}}

\newcommand{\FH}{\mathbb{H}}
\newcommand{\FHs}{\mathcal{C}}

\newcommand{\pnp}{1-\frac{1}{n^3}}
\newcommand{\fpr}[2]{1-\frac{{#1}}{n^{#2}}}

\DeclareMathSymbol{\mhyphen}{\mathord}{AMSa}{"39}

\makeatletter
\newcommand{\whp}{%
  w.h.p.\@ifnextchar.{\@gobble}{\xspace}%
}
\makeatother

\DeclarePairedDelimiter\abs{\lvert}{\rvert}%

\DeclarePairedDelimiter{\ceil}{\lceil}{\rceil}

\makeatletter
\let\oldabs\abs
\def\abs{\@ifstar{\oldabs}{\oldabs*}}

\crefname{claim}{Claim}{Claims} %
\crefname{ineq}{inequality}{inequalities} %
\crefname{proof}{Proof}{Proofs} %
\crefname{line}{Line}{Lines}
\crefname{algorithm}{Algorithm}{Algorithms}
\crefname{black box}{Black Box}{Black Boxes}

\newcommand{\DD}[1][\Lambda]{\hat{\mathsf{D}}\brak{#1}}
\newcommand{\DDF}{\hat{\mathsf{D}}(\Lambda p^h)}
\newcommand{\K}{\mathsf{K}}

\newcommand{\Thint}{{\mathsf{W}}}
\newcommand{\W}{{\mathsf{W}}}

\newcommand{\haT}{\hat{t}}
\newcommand{\htt}{\hat{t}}

\newcommand{\algCore}[1]{\mathsf{Template}_{\eps'}\brak{#1}}
\newcommand{\algCoreNP}{\mathsf{Template}}

\newcommand{\algWrap}[1]{\mathsf{Doubling\mhyphen Template}\brak{#1}}
\newcommand{\algWrapNP}{\mathsf{Doubling\mhyphen Template}}

\newcommand{\Cialg}{\mathsf{Count} \mhyphen\mathsf{Heavy}}
\newcommand{\FindHalg}{\mathsf{Find} \mhyphen\mathsf{Heavy}}

\newcommand{\neweps}{\frac{\eps}{4\log n}}

\newcommand{\Qb}{{8\log^{4}(n)}}
\newcommand{\BoundLam}[1]{\tau_{#1}\cdot \frac{\eps^2}{8Q}}

\newcommand{\rr}{400\log n}

\newcommand{\czlog}{{\brak{\log n}^{h^2}}}
\newcommand{\clog}{\czlog}

\newcommand{\Prod}{\mathsf{Product}_h}
\newcommand{\Prd}{\mathsf{Product}_h(\Lambda)}
\newcommand{\Prodd}{\mathsf{Product}}

\newcommand{\eo}{1-1/e}

\newcommand{\nemp}{\neq\emptyset}

\newcommand{\kVal}{\log^4 n}

\newcommand{\threeclass}{\varphi_{1},\varphi_{2},\varphi_{h}}
\newcounter{blackbox}[section] %
\newenvironment{blackbox}[1][]{%
  \refstepcounter{blackbox}%
  \begin{mdframed}[
    frametitle=#1,
    frametitlealignment=\centering,
    backgroundcolor=gray!10,
    linewidth=1pt,
    innerleftmargin=8pt,
    innerrightmargin=8pt,
    innertopmargin=10pt,
    innerbottommargin=10pt,
    splittopskip=\topskip,
    skipabove=\topsep,
    skipbelow=\topsep,
    userdefinedwidth=0.95\textwidth,
    align=center
  ]%
}{%
  \end{mdframed}%
}

\crefname{blackbox}{Black Box}{Black Boxes}

\newcommand{\omegaval}{2.371552}
\newcommand{\alphaval}{0.321334}

\makeatletter
\renewcommand\appendix{\par
  \setcounter{section}{0}%
  \setcounter{subsection}{0}%
  \gdef\thesection{\@arabic\c@section}}
\makeatother

\author{Keren Censor-Hillel \thanks{Department of Computer Science, Technion. \texttt{ckeren@cs.technion.ac.il}. The research is supported in part by the Israel Science Foundation (grant 529/23).} 
\and
Tomer Even \thanks{Department of Computer Science, Technion. \texttt{tomer.even@campus.technion.ac.il}.} 
\and 
Virginia Vassilevska Williams \thanks{Massachusetts Institute of Technology, Cambridge, MA, USA. \texttt{virgi@mit.edu}. Supported by NSF Grant CCF-2330048, BSF Grant 2020356, and a Simons Investigator Award.}}

\begin{document}
\title{Fast Approximate Counting of Cycles}
\date{}

\maketitle

\begin{abstract}
  We consider the problem of approximate counting of triangles and longer fixed length cycles in directed graphs. For triangles,  T\v{e}tek [ICALP'22] gave an algorithm that returns a $(1\pm\eps)$-approximation in $\tilde{O}(n^\omega/t^{\omega-2})$ time, where $t$ is the unknown number of triangles in the given $n$ node graph and $\omega<2.372$ is the matrix multiplication exponent. We obtain an improved algorithm whose running time is, within polylogarithmic factors the same as that
  for multiplying an $n\times n/t$ matrix by an $n/t \times n$ matrix. We then extend our framework to obtain the first nontrivial $(1\pm \eps)$-approximation algorithms for the number of $h$-cycles in a graph, for any constant $h\geq 3$. Our running time is
  \[\tilde{O}(\MM{n,n/t^{1/(h-2)},n}), \textrm{the time to multiply } n\times \frac{n}{t^{1/(h-2)}} \textrm{ by } \frac{n}{t^{1/(h-2)}}\times n \textrm{ matrices}.\]

  Finally, we show that under popular fine-grained hypotheses, this running time is optimal.
\end{abstract}

\setcounter{tocdepth}{3}
\tableofcontents

\pagebreak
\section{Introduction}
Detecting small subgraph patterns inside a large graph is a fundamental computational task with many applications.
Research in this domain has flourished, leading to fast algorithms for many tractable versions of the subgraph isomorphism problem: given a fixed (constant size) graph $H$, detect whether a large graph $G$ contains $H$ as a subgraph, list all copies of $H$ in $G$, count the copies (exactly or approximately) and more.

The topic of this paper is the fast estimation of the number of copies of a pattern $H$ in a graph $G$. One of the most studied patterns $H$ is the {\em triangle} whose detection, listing and approximate counting has become a prime testing ground for ideas in classic graph algorithms \cite{itai1977finding,alon1997finding,nesetril,BjorklundPWZ14,Patrascu10}, sublinear and distributed algorithms \cite{GonenRS11,EdenRS18,eden2022sublinear,EdenRR22,eden2017approximately,censor2022deterministic,czumaj2020detecting,izumi2017triangle,dolev2012tri,tvetek2022approximate,censor2020sparse,chang2021near,ChangS20}, streaming \cite{buriol2006counting,BecchettiBCG10,Bar-YossefKS02,JowhariG05,KaneMSS12}, parallel \cite{BiswasELRM22,KoldaPPST14,TangwongsanPT13} algorithms and more. 
This is largely because triangles are arguably the simplest subgraph patterns and moreover, often algorithms for the triangle version of the (detection, counting or listing) problem formally lead to algorithms for other patterns as well (see Ne\v{s}etril and Poljak \cite{nesetril}).

Detecting and finding a triangle, and counting the number of triangles in an $n$-vertex graph can all be reduced to fast matrix multiplication \cite{itai1977finding}, and the fastest algorithm for these versions has running time $O(n^\omega)$, where $\omega$ is the exponent of square matrix multiplication, currently $\omega\leq \omegaval$ \cite{williams2023new}. It is also believed that even detecting a triangle requires $n^{\omega-o(1)}$ time, due to known fine-grained reductions that show that Boolean matrix multiplication and triangle detection are equivalent, at least for combinatorial algorithms \cite{focsy}.

Obtaining an approximate count $\hat{t}$ to the number of triangles $t$ in an $n$-node graph such that $(1-\eps)t\leq \hat{t}\leq (1+\eps)t$ for an arbitrarily small constant $\eps>0$ can be used to detect whether a graph has a triangle, as the algorithm would be able to distinguish between $t=0$ and $t=1$. Thus, it is plausible that when the number of triangles is $O(1)$, $n^{\omega-o(1)}$ time is needed to obtain an approximate triangle count. 

When the number of triangles $t$ in $G$ is large, however, a simpler sampling approach can obtain a good estimate of $t$: repeatedly sample a triple of vertices and check whether they form a triangle; in expectation $O(n^3/t)$ samples are sufficient to get a constant factor approximation.

The best known algorithm for approximately counting triangles is by T\v{e}tek~\cite{tvetek2022approximate}, with running time $\tilde{O}(n^\omega/t^{\omega-2})$. When $t$ becomes constant, the running time becomes $\tilde{O}(n^\omega)$, which is believed to be optimal, as we mentioned earlier. 
When $t$ becomes $\Theta(n)$, the running time is the same as the na\"ive sampling algorithm, $\tilde{O}(n^2)$. This quadratic running time is provably necessary even for randomized algorithms (see \cite{eden2017approximately}).
Nevertheless, it is unclear whether  $\tilde{O}(n^\omega/t^{\omega-2})$ time is needed for all values of $t$ 
between $\BO{\polylog{n}}$ and $\Omega(n/\polylog{n})$.

\begin{center}
{\em Is there a faster algorithm for approximate triangle counting when the triangle count is in $[\Omega(1),O(n)]$?}
\end{center}

As triangle counting is an important special case of fixed subgraph isomorphism counting, a natural question is, what is the fastest algorithm for approximately counting arbitrary subgraphs $H$? 

Dell, Lapinskas and Meeks \cite{DellLM22} provide a general reduction from approximate $H$ counting to detecting a ``colorful'' $H$ in an $n$-node, $m$-edge graph, so that a $T(n,m)$ time detection algorithm can be converted into an $\tilde{O}(\eps^{-2}T(n,m))$ time $(1\pm \eps)$-approximation algorithm. For many patterns \footnote{This equivalence is not true in general: detecting  cycles of fixed even length is believed to be computationally easier than detecting colorful even cycles which are known to be equivalent to the directed version of the problem.} such as triangles and $k$-cliques or directed $h$-cycles, the colorful and normal versions of the detection problems are equivalent (e.g. via color-coding \cite{alon1995color} and layering). While the detection running time is provably necessary to approximately count when the number of copies of the pattern is constant, similarly to the case of triangles, when the number of copies $t$ is large, faster sampling algorithms are possible.
Unfortunately, the reduction of \cite{DellLM22} doesn't seem easy to extend to provide runtime savings that grow with $t$. Thus, we ask:
\begin{center}
{\em What is the best approximate counting algorithm for subgraph patterns $H$ with running time depending on the number $t$ of copies of $H$?}
\end{center}

As triangles are also cycles, one special case of the above question is when $H$ is a cycle on $h$ vertices.

\subsection{Our Contribution}
The main result of our paper is a new algorithm for approximating the number of $h$-cycles in a given {\em directed} graph, for any constant $h\geq 3$, together with a conditional lower bound from a fine-grained hypothesis, showing that the running time of our algorithm is likely tight.

Our main theorem is:

\begin{restatable}[Approximating the Number of $h$-Cycles]{theorem}{ThmZero}\label{thm:main0}
    Let $G$ be a given graph with $n$ vertices and let $h\geq 3$ be a fixed integer.
    There is a randomized algorithm that outputs an approximation $\hat{t}$ for the number $t$ of $h$-cycles in $G$ such that  
    $\Pr{(1-\eps)t \leq \htt\leq (1+\eps)t}\geq 1-1/n^2$, for any constant $\eps>0$. %
    The running time is bounded by $\TO{\MM{n,n/t^{1/(h-2)},n}}$, the fastest running time to multiply an $n\times n/t^{1/(h-2)}$ matrix by an $n/t^{1/(h-2)}\times n$ matrix.
\end{restatable}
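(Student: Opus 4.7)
The proof combines three ingredients: (i) color-coding to reduce to counting colorful $h$-cycles, (ii) a sampling-based estimator whose expectation is a known multiple of $t$ and whose variance can be controlled once ``heavy'' structures are removed, and (iii) a heavy/light decomposition implemented via the $\FindHalg$ and $\Cialg$ primitives, all wrapped in a doubling template that avoids needing to know $t$ in advance.

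First I would apply the standard color-coding trick: randomly assign each vertex one of $h$ colors, and from now on only count \emph{colorful} $h$-cycles, i.e., cycles whose vertices visit colors $1,2,\dots,h$ in cyclic order. Each fixed $h$-cycle becomes colorful with probability $h!/h^h=\Omega(1)$, so the expected number $t'$ of colorful cycles is $\Theta(t)$, and estimating $t'$ within $(1\pm\eps/2)$ suffices for the theorem after a constant rescaling and standard median-of-means amplification.

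The core estimator sub-samples the $h-2$ ``interior'' color classes $V_2,\dots,V_{h-1}$ independently at rate $p=\tilde\Theta(t^{-1/(h-2)})$ (times a $\polylog(n)/\eps^2$ boost), while keeping $V_1$ and $V_h$ intact. Let $Z$ be the number of colorful cycles whose interior vertices are all sampled; then $\Exp{Z}=p^{h-2}t'$, so $\htt:=Z/p^{h-2}$ is unbiased. Computing $Z$ exactly is a single chain matrix product
\[
A[V_1,V_2']\cdot A[V_2',V_3']\cdots A[V_{h-1}',V_h]\cdot A[V_h,V_1],
\]
evaluated left-to-right. Every intermediate multiplication has at most one dimension equal to $n$ and the remaining dimensions of size $|V_i'|=\tilde\Theta(np)=\tilde\Theta(n/t^{1/(h-2)})$; the overall cost is therefore dominated by $\TO{\MM{n,n/t^{1/(h-2)},n}}$, matching the target runtime.

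The main technical obstacle is the variance analysis. Two colorful cycles sharing $j$ interior vertices contribute proportionally to $p^{-j}$ to $\Var{\htt}$; the $j=0$ pairs give the harmless $\Theta(t^2)$ ``mean-squared'' mass, but $j>0$ pairs can be disastrous when some short sub-path lies in many cycles. To handle this I would use the paper's heavy/light decomposition: invoke $\FindHalg$ recursively (as the $\Prodd$ machinery suggests, over sub-paths of length $1,2,\dots,h-2$) to explicitly find all interior sub-paths that participate in at least $\tauVal\approx t/\polylog n$ colorful cycles, count the cycles through each such sub-path \emph{exactly} using $\Cialg$, and then run the sampling estimator on the residual graph where only light sub-paths remain. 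On the light part, Chebyshev's inequality yields the desired $(1\pm\eps)$ concentration, while the heavy part is counted deterministically; getting these two pieces to fit inside a single $\MM{n,n/t^{1/(h-2)},n}$ budget is the delicate balancing step.

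Finally, since $t$ is unknown, I would wrap the above as $\algCoreNP$ and plug it into $\algWrapNP$: geometrically guess $\Lambda\in\{n^h,n^h/2,n^h/4,\ldots\}$ as a putative upper bound on $t$, run the template with $p$ calibrated to $\Lambda$, and stop at the first $\Lambda$ for which the output $\htt$ is consistent with the guess (say $\htt\ge\Lambda/C$). The runtime is a geometric series dominated by the last iteration, namely $\TO{\MM{n,n/t^{1/(h-2)},n}}$, and a union bound over the $O(\log n)$ guesses keeps the failure probability below $1/n^2$.
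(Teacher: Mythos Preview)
Your proposal diverges from the paper's proof in its central mechanism, and the part where it diverges is also where it has a genuine gap.

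\textbf{What the paper actually does.} The heavy/light decomposition is on \emph{vertices}, not sub-paths. The recursive template $\algCoreNP(G,\Lambda)$ (i) calls $\FindHalg(G,\Lambda)$ to obtain a set $V_\Lambda$ containing all $\Lambda$-heavy vertices and no $\Lambda/\clog$-light ones, (ii) calls $\Cialg$ to $(1\pm\eps')$-approximate the number of cycles touching $V_\Lambda$, (iii) deletes $V_\Lambda$, (iv) samples the \emph{entire} remaining graph at fixed rate $p=1/2$, and (v) recurses with $\Lambda\leftarrow\Lambda p^h$. The variance bound driving concentration (\Cref{cor2:e3}) comes from the single-vertex fact $\tau_H\le\Lambda$ after deletion; there is no analysis of pairs of cycles sharing $j$ interior vertices for varying $j$. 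The headline runtime $\MM{n,n/\Lambda^{1/(h-2)},n}$ is the cost of $\FindHalg$, which uses the non-uniform sampling family $\Prod(\Lambda)$: each $P=(p_1,\dots,p_h)\in\Prod(\Lambda)$ is a vector of \emph{per-color-class sampling rates} with $\prod_i p_i\le 1/\Lambda$, and \Cref{lemma:simulate query} shows the resulting discovery experiment fits in $\MM{n,n,n/\Lambda^{1/(h-2)}}$. The $\Cialg$ black box runs in only $\TO{n^2}$ per call and avoids double-counting by estimating, for each $k\in[h]$, the number of cycles meeting $V_\Lambda$ in exactly $k$ vertices.

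\textbf{Where your proposal breaks down.} You read $\Prod$ as machinery ``over sub-paths of length $1,\dots,h-2$''; it is not---it is a finite grid of sampling-rate vectors, and it operates at the vertex level. Your plan to ``explicitly find all interior sub-paths that participate in at least $\tau$ colorful cycles'' and then ``count the cycles through each such sub-path exactly using $\Cialg$'' is the step that carries all the weight, and it is not supported: you give no algorithm for enumerating heavy sub-paths of every length within the $\MM{n,n/t^{1/(h-2)},n}$ budget, and $\Cialg$ as defined takes a vertex set, not a collection of paths (its whole design is the $k$-intersection trick, which has no analogue for paths). Even granting that, your variance sketch stops at ``pairs sharing $j$ interior vertices contribute $p^{-j}$''; bounding $\sum_j p^{-j}\cdot(\text{number of such pairs})$ after removing heavy sub-paths is exactly the hard part, and you defer it to a ``delicate balancing step''. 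The paper sidesteps all of this by never needing the fine-grained $j$-overlap bookkeeping: removing $\Lambda$-heavy \emph{vertices} already makes the single inequality $\Var{t_F}\le t_H p^h(1+2h)\tau_H$ (\Cref{cor2:e3}) go through.
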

As long as $\omega>2$, the running time in the theorem is always upper-bounded by 
$$\TO{n^\omega/t^{\frac{\omega-2}{(h-2)(1-\alpha)}} +n^2},$$
where  $\omega\leq \omegaval$ is the square matrix multiplication exponent mentioned earlier and $\alpha\geq \alphaval$ \cite{williams2023new} is the largest real number such that one can multiply $n\times n^\alpha$ by $n^\alpha\times n$ matrices in $n^{2+o(1)}$ time.\footnote{We use $\MM{a,b,c}$ to denote the time complexity of multiplying two matrices with dimensions $a\times b$ and $b\times c$.}
It is easy to see that for any value $\omega>2$, our $\TO{n^\omega/t^{\frac{\omega-2}{1-\alpha}}}$ time for {\em triangles} is faster than the previous state of the art $\TO{n^\omega/t^{\omega-2}}$ for approximate counting of triangles \cite{tvetek2022approximate} for all $t$ between $\Omega(1)$ and $O(n)$, answering our first question in the introduction.
Figure~\ref{fig:compare} plots our two running times for approximate triangle counting together with T\v{e}tek's algorithm, naive sampling and the $O(n^\omega)$ time exact counting algorithm.

\begin{figure}[ht]
    \begin{minipage}{.48\textwidth}
        \centering
        \includegraphics[width=8cm]{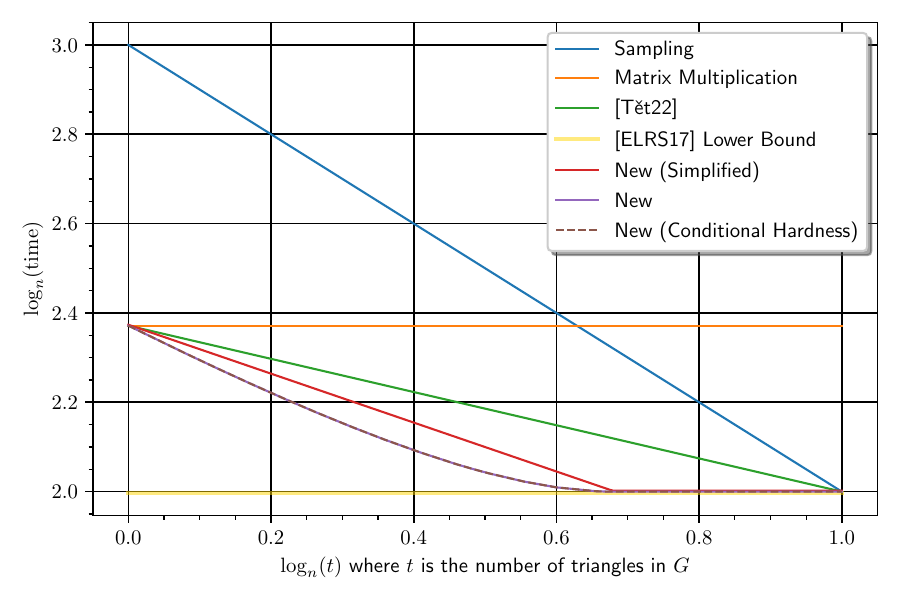}
        \caption{A comparison between our new running times for approximate triangle counting with prior work, together with the lower bounds, both conditional and unconditional.}
        \label{fig:compare}
    \end{minipage}%
    \begin{minipage}{.48\textwidth}
        \centering
        \begin{tabular}{ll}
            \toprule
            \textbf{Method}            & \textbf{Runtime}  \\
            \midrule
            Sampling      & $n^3/t$                         \\
            Matrix Multiplication           & $n^\omega$   \\
            \cite{tvetek2022approximate}  & $n^\omega/t^{\omega-2}$              \\
            \cite{eden2017approximately} Lower Bound & $n^2$ \\
            \midrule
            New (Simplified)   & $n^\omega/t^{(\omega-2)/(1-\alpha)}$ \\
            New   & $\mathsf{MM}(n,n,n/t)$ \\
            New (Conditional Hardness) & $\mathsf{MM}(n,n,n/t)$ \\
            \bottomrule
            \vspace{1cm}
        \end{tabular}
        \caption{Comparative Runtime Analysis}
        \label{table:runtime_analysis}
    \end{minipage}
\end{figure}

We obtain our algorithm via a simplification and generalization of T\v{e}tek's approach that allows us to both obtain an improved running time for triangles, but also to get faster algorithms for longer cycles. The approach can also extend to other patterns; we leave this as future work.

Our algorithm for longer cycles is arguably the first non-trivial algorithm for the problem with a negative dependence on the number of cycles $t$.
To our knowledge, prior to our work the only approximate counting algorithms for $h$-cycles for $h>3$ in directed graphs (or in undirected graphs when $h$ is odd\footnote{The detection problem for even $h$-cycles in undirected graphs is known to be much easier than that for odd cycles, and for directed graphs, as for every even constant integer $h$, an $O(n^2)$ time algorithm was developed by Yuster and Zwick \cite{evencycles}. Meanwhile, directed $h$-cycles and undirected odd $h$-cycles are believed to require $n^{\omega-o(1)}$ time to detect.}) were to either use naive random sampling resulting in an $O(n^4/t)$ time or to approximate the answer in the best $h$-cycle detection time, $O(n^\omega)$ (e.g. via \cite{DellLM22}), a running time that does not depend on $t$.

We complement our algorithms for approximately counting $h$-cycles with a {\em tight} conditional lower bound under a popular fine-grained hypothesis.
The $k$-Clique Hypothesis of fine-grained complexity (e.g. \cite{AbboudBW18,BackursT17,BringmannW17,dalirrooyfard2023listing})  postulates that the current fastest algorithms for detecting a $k$-clique in a graph (for constant $k\geq 3$) are optimal, up to $n^{o(1)}$ factors. We formulate a natural hypothesis about the complexity of {\em triangle detection} in unbalanced tripartite graphs that is motivated by and in part implied by the $k$-Clique Hypothesis. Then we show, under that hypothesis:

\begin{theorem}
Under fine-grained hypotheses, in the word-RAM model with $O(\log n)$ bit words, for any constant integer $h\geq 3$, any randomized algorithm that, when given an $n$ node directed graph $G$, can
    distinguish between $G$ being $h$-cycle-free and containing $\geq t$ $h$-cycles needs
    $\MM{n,n/{t}^{1/(h-2)},n}^{1-o(1)}$
    time. The same result holds for undirected graphs as well whenever $h$ is odd.
\end{theorem}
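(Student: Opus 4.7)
The plan is to prove the lower bound via a fine-grained reduction from unbalanced tripartite triangle detection to distinguishing $0$ versus $\geq t$ $h$-cycles. The underlying hypothesis (as noted, implied in part by the $k$-Clique Hypothesis) asserts that detecting a triangle in a tripartite graph with parts of sizes $n, s, n$ requires $\MM{n, s, n}^{1-o(1)}$ time; I instantiate it with $s := n/t^{1/(h-2)}$ so the triangle-detection complexity matches the target $h$-cycle bound.

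Given a triangle instance on $(A, B, C)$ with $|A|=|C|=n$, $|B|=s$, the reduction builds an $h$-partite directed cyclic graph $G$ as follows. Let $r := t^{1/(h-2)}$, set $V_1 := A$ and $V_h := C$, and define each of the $h-2$ middle layers as $V_i := B \times [r]$, giving $|V_i| = sr = n$ and total size $N = hn = \Theta(n)$. The edges are wired cyclically: the three ``nontrivial'' transitions $V_1 \to V_2$, $V_{h-1} \to V_h$, and $V_h \to V_1$ respectively encode $E_{AB}$, $E_{BC}$, and $E_{CA}$ (ignoring the extra $[r]$ coordinate when it appears), while the remaining middle transitions $V_i \to V_{i+1}$ are ``identity on the first coordinate'', i.e.\ $((b,j),(b',j')) \in E$ iff $b = b'$ (for any $j,j'$). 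Because edges exist only between cyclically consecutive layers, every $h$-cycle visits each layer exactly once; the identity middle edges force a single $b \in B$ throughout; and the remaining three edge constraints recover the triangle edges. Thus $h$-cycles are in bijection with pairs (triangle in the original instance) $\times$ (tuple $(j_2,\ldots,j_{h-1}) \in [r]^{h-2}$), and the cycle count equals (number of triangles) $\cdot r^{h-2} = $ (number of triangles) $\cdot t$, establishing the required $0$-versus-$\geq t$ gap.

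Combining the reduction with the hypothesis yields $\MM{n, s, n}^{1-o(1)} = \MM{N, N/t^{1/(h-2)}, N}^{1-o(1)}$ hardness for the gap problem on $N$-vertex graphs; the undirected case for odd $h$ follows from the same construction interpreted as undirected, together with the analogous undirected triangle hypothesis. The step I expect to be most delicate is the design of the middle layers: the choice $V_i = B \times [r]$ with ``identity'' edges on $B$ simultaneously (i) amplifies each triangle to exactly $r^{h-2} = t$ cycles, (ii) prevents spurious cycles coming from length-two paths in $A \to B \to C$ that do not extend to a triangle (which would arise if the middle transitions were complete bipartite), and (iii) keeps $|V_i| = n$ so the total stays $\Theta(n)$. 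Any alternative that loses the $b$-dependence over-counts by including length-two paths as full cycles, and any alternative using per-$(c,a)$ subdivision vertices blows up the vertex count beyond $O(n)$; justifying this particular balance is the main technical point of the proof.
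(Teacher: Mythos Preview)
Your proposal is correct and essentially identical to the paper's proof: the paper also blows up $B$ by a factor of $r = t^{1/(h-2)}$ into $h{-}2$ middle layers $B \times [r]$ with identity-on-$B$ edges between consecutive middle layers, yielding exactly your bijection between $h$-cycles and pairs (triangle, tuple in $[r]^{h-2}$). The only place the paper is more explicit is the undirected odd-$h$ case, where it justifies that every $h$-cycle hits each layer exactly once by observing that deleting any single layer leaves a bipartite graph (hence contains no odd cycle); you left this step implicit.
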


As any algorithm that can approximate the number $t$ of $h$-cycles multiplicatively, can distinguish between $0$ and $t$ $h$-cycles, we get that our algorithm running times are essentially tight.
We present our lower bound for triangles in Figure \ref{fig:compare} as a dotted line. Together with the lower bound by \cite{eden2017approximately}, our lower bound shows that our algorithm is (conditionally) optimal for all values of $t$.

Similarly to T\v{e}tek's algorithm, our algorithms for approximate $h$-cycle counting can be used to obtain improved $h$-cycle counting algorithms for {\em sparse} graphs, where the running time is measured in terms of the number of edges $m$. In particular, for triangles, one can simply substitute our new algorithm in terms of $n$ in T\v{e}tek's argument \cite{tvetek2022approximate} to obtain an approximate counting algorithm that runs in time $\TO{m^{2\omega/(\omega+1)}/t^{\frac{2(\omega-1)}{\omega+1}+\frac{\alpha(\omega-2)}{(1-\alpha)(\omega+1)}}}$. This running time is always faster than T\v{e}tek's $\TO{m^{2\omega/(\omega+1)}/t^{\frac{2(\omega-1)}{\omega+1}}}$ for any $\omega>2$.
One can similarly adapt the algorithms of Yuster and Zwick \cite{YusterZ04} and their analysis in \cite{DalirrooyfardVW21} to obtain approximate counting algorithms for longer cycles. We leave this to future work.

\subsection{Technical Overview}
To frame our technical contribution, we first briefly overview the approach of \cite{tvetek2022approximate}. The latter gives a randomized approximate counting algorithm for triangles, in time $\TO{n^{\omega}/t^{\omega - 2}}$.
In a nutshell, the algorithm finds a subset of vertices $S$ that contains all $\Lambda$-heavy vertices and no $\Lambda/\polylog{n}$-light vertices --- a vertex is called $\Lambda$-heavy if it participates in at least $\Lambda$ triangles, and otherwise it is called $\Lambda$-light. 
Then, the algorithm approximately counts the number of $\Lambda$-heavy triangles, which are triangles with at least one heavy vertex.
The algorithm then continues by sampling subsets of vertices from the set $V-S$, where each vertex is kept independently  uniformly at random with some probability, and processing  the sampled graphs by recursion.

Our technical contribution consists of three parts: (I) We simplify the above recursive approach, (II) we improve upon the component that finds heavy vertices, and (III) we improve upon the component that counts the number of cycles that contain heavy vertices. A compelling aspect of our technique is that it applies to any constant-length cycle.
In what follows, we overview each of these aspects.

\paragraph{I. The Recursive Template.}
In \cite{tvetek2022approximate}, each recursive invocation triggers seven further recursive calls and takes the median of their return values. As the depth of the recursion increases, the algorithm needs to use a precision parameter that becomes exponentially tighter.

In contrast, our algorithm initiates only a single recursive call. This allows us to avoid having to compute the median of several subcalls, which makes it easier
to apply standard amplification tools. 
In particular, it allows us to \emph{fix the precision parameter}. 

In addition, by reducing the recursive
tree to a ``path'', we \emph{simplify the analysis} of the running time.

A prime feature of our simplification of the recursion is that it allows us to present it as a \emph{template} for approximate subgraph counting for any fixed subgraph $\FH$, provided one designs the two black boxes (one for finding a superset $S$ of the $\Lambda$-heavy vertices with no $\Lambda/\polylog{n}$-light vertices, and another for approximately counting the number of copies of $\FH$ that intersect the set $S$).

For triangles, our improvement comes from simplifying the recursion and implementing the black box that finds heavy vertices faster, using rectangular matrix multiplication. Crucially, our implementations of these black boxes are general, in the sense that they apply to constant length $h$-cycle.
Specifically, we find the set $S$ in time $\TO{\MM{n,n/\Lambda^{1/(h-2)},n}}$. For such $S$, we find a
$(1+\epsilon)$ approximation for the number of copies of $\FH$ that intersect $S$ in time $\TO{n^2/\eps^3}$, which is independent of $\Lambda$ and of the cardinality of $S$.

\paragraph{II. Finding the Heavy Vertices.} 
The algorithm in \cite{tvetek2022approximate}
finds $\Lambda$-heavy vertices by sampling a subset of vertices uniformly at random, and using matrix multiplication to detect triangles inside the induced sampled subgraph. This takes $\TO{n^\omega/\Lambda^{\omega-2}}$ time, by $\TO{\Lambda^2}$ repetitions of multiplying $n/\Lambda \times n/\Lambda$ matrices.

At the heart of our approach for finding the heavy vertices lies non-uniform sampling, and computing the product of rectangular matrices rather than square ones. 
\sloppy{We obtain a running time of $\TO{\MM{n,n/\Lambda^{1/(h-2)},n}}=\TO{n^\omega/\Lambda^{\gamma_h}}$ for $h$-cycles, where $\gamma_h\triangleq\frac{\omega-2}{(1-\alpha)(h-2)}$. }
This comes from multiplying an $n\times n/\Lambda^{1/(h-2)}$ matrix by an $n/\Lambda^{1/(h-2)}\times n$ matrix.
In \cite{williams2023new} it was shown that $\alpha\geq\alphaval$, and therefore
$\frac{\omega-2}{1-\alpha}\geq 1.47(\omega-2)$, which establishes that our algorithm is never slower, and is faster (if $\omega>2$), where the gap increases with $\Lambda$ (for sufficiently large $\Lambda$, the folklore na\"ive sampling algorithm is superior).

Our starting point for finding the heavy vertices is the \emph{color-coding} technique of \cite{alon1995color}, which is widely employed for detecting $h$-cycles for $h=\BO{1}$.
This technique colors the vertices using $h$ colors uniformly independently at random and looks for \emph{colorful} $h$-cycles, which are $h$-cycles with exactly one vertex of each color. This restriction allows for faster detection but suffers some probability of missing $h$-cycles that are colored out of order, which can be overcome with sufficiently large probability by repeating this process.

To find colorful $h$-cycles, we utilize matrix multiplication. However, we do so in a refined manner. Rather than considering all vertices, we sample a subset of vertices from each color class in a nonuniform manner.
To illustrate this, consider the task of finding $\Lambda$-heavy vertices w.r.t. triangles.
We assign a random color to each vertex, and denote the three color classes by $V_1,V_2,V_3$.
We focus on identifying the $\Lambda$-heavy vertices within $V_1$.
Fix some $i\in[\log \Lambda]$.
We sample each vertex from $V_2$ with probability $2^i/\Lambda$, and we sample each vertex from $V_3$ with probability $1/2^i$. Let $H_i$ denote the induced graph obtained by all vertices from $V_1$ and the sampled vertices from $V_2$ and $V_3$, where we also direct edges from $V_j$ to $V_{j+1 \mod 3}$ and discard monochromatic edges.
We show that for every $\Lambda$-heavy vertex $v$, there exists an index $i\in[\log\Lambda]$ such that $v$ is in a triangle in $H_i$ with some probability at least $p_{\mathrm{heavy}}$, where $1/p_{\mathrm{heavy}}= \TO{1}$.
On the other hand, for $\Lambda/\polylog{n}$-light vertex $u$, we show that for every $i\in[\log \Lambda]$, the vertex $v$ is in a triangle in $H_i$ with probability at most $p_{\mathrm{heavy}}/2$. Therefore, we can distinguish between these cases.
Checking whether $v$ is in a triangle in $H_i$ can be done in $\TO{\MM{n,n,n/\Lambda}}$ time. 
Using amplification, we approximate the probability that $v$ is in a triangle in $H_i$ for every $v\in V_1$, and thus distinguish heavy vertices from lighter ones.

We generalize our approach for $h$-cycles by coloring the vertices with $h$ colors, and directing edges and discarding monochromatic edges, as for triangles. We also discard edges between non-consecutive color classes.
To find $\Lambda$-heavy vertices in the first color class, we sample in a nonuniform manner a subset of vertices from the $j$-th color class for $2\leq j\leq h$, where the product of the sampling probabilities of the color classes should be at most $1/\Lambda$, as for triangles. Let $H$ denote the obtained random induced subgraph.
The running time of computing the exact number of $h$-cycles each vertex in $H$ participates in, which is dominated by the size of the smallest color class in $H$, becomes $\TO{\MM{n,n,n/\Lambda^{1/(h-2)}}}$.
To see why, consider an $h$-partite graph $G$ with $n$ vertices in each part. Suppose $G$ has a vertex $v\in V_1$ with a neighbor $u\in V_2$, such that all $h$-cycles that intersect $v$, also intersect the edge $(u,v)$. Now, suppose each vertex set $V_j$, for $3\leq j\leq h$ has a subset $W_j$ of $\Lambda^{1/(h-2)}$ vertices, such that any $h$-tuple of the form $(v,u,w_3,w_4,\ldots, w_h)$ is an $h$-cycle in $G$, where $w_j\in W_j$ for $3\leq j\leq h$.
This implies that $v$ is $\Lambda$-heavy.
Note that if we keep each vertex from the $j$-th color class with a probability of $o(1/\Lambda^{1/(h-2)})$, we are unlikely to sample any vertex from $W_j$, and therefore we fail to learn that $v$ is $\Lambda$-heavy. On the other hand, if we sample vertices from each class with probability $\Omc[1/\Lambda^{1/(h-2)}]$, the smallest color class is of size $\Omc[n/\Lambda^{1/(h-2)}]$.

\paragraph{III. Counting the Heavy Copies.}

Given a graph $G$ and a subset of vertices $S$, where each vertex in $S$ participates in at least $a$ and at most $b$ copies of $h$-cycles for $h=\BC$, we show how to compute a $(1+\epsilon)$ approximation for the number of $h$-cycles that intersect the set $S$, in time $\TO{n^2 b/\eps a}$. In particular, the runtime is independent of size of the set $S$.

Consider a na\"ive approach, which approximates the average number of $h$-cycles that a vertex from $S$ intersects, and let us see why it fails to provide a good approximation for the total number of $h$-cycles intersecting $S$. Suppose $h=3$ and $\abs{S}=3$ and each vertex $v\in S$ participates in exactly one triangle in $G$.
Based solely on the number of triangles in which a vertex participates, it is impossible to distinguish the case where the set $S$ intersects one triangle in $G$ from the case in which it intersects three triangles in $G$. The issue here is double counting, as we did not avoid counting the same cycle more than once. For triangles, this obstacle can be avoided by replacing $G$ with a tripartite graph $G'$, where each of the three parts is a copy of $V$, and for each edge in $G$ there are six edges in $G'$, one for every ordered pair of parts. It is easy to see that every triangle in $G$ corresponds to six triangles in $G'$, and thus an estimate on $G'$ directly gives an estimate on $G$.
That is, we sample a subset $F$ of vertices from $S$, and for each copy $v'$ of $v\in F$ in the, say, first part of in the tri-partition $G'$,
we compute the number of triangles that go through it in $G'$. This avoids double counting, because each triangle in $G'$ intersects copies of the set $F$ from the first part at most once (as vertices in the same part form an independent set and hence cannot be in the same triangle). 
To summarize, restricting to triangles would significantly simplify this part to a single Chernoff inequality for independent random variables.
The source of this simplicity is that triangles are cliques. However, larger cycles are not cliques, and therefore we cannot apply this trick. 
If we simply repeated every vertex $h$ times to create a new graph $G'$ as in the triangle case above, we could create $h$-cycles in $G'$ that do not correspond to $h$-cycles in $G$: in fact, every {\em closed walk} of length $h$ would become an $h$-cycle. We use color-coding and a more careful analysis to overcome this issue.

First, we sample a subset of vertices from $S$ and for each sampled vertex $v$ we approximate the number of $h$-cycles which go through $v$ (and therefore intersect the set $S$). The crux of our algorithm is that in order to approximate the above, we approximate the number of $h$-cycles which go through $v$ and intersect the set $S$ exactly $k$ times, for each $1\leq k\leq h$.
The summation of these approximations yields our final result, and it naturally avoids the pitfall of double-counting.

To approximate the number of $h$-cycles intersecting the set $S$ exactly $k$ times for some $k$, we color the graph with $h-1$ colors and color vertex $v$ with the color $h$. This ensures that any colorful $h$-cycle intersects $v$.
Then, we choose $k-1$ color classes from the first $h-1$ classes, and retain only vertices of $S$ within those classes. For the remaining color classes, we keep only vertices that are not in $S$. The color class $h$ is fixed and always contains only $v$. This promises that each colorful $h$-cycle with $v$ in this auxiliary graph intersects $S$ exactly $k$ times.
The number of ways to choose exactly $k-1$ color classes that keep only vertices from $S$ is $\ch{h-1}{k-1}$.
We compute the number of $h$-cycles in each such auxiliary graph.
We prove that the expectation of this number is some fixed constant multiplicative factor off from the number of $h$-cycles intersecting $v$ and $S$ exactly $k$ times.
Finally, we prove that the variance of this random variable is suitably bounded. Therefore, conducting this process $\TO{b/(a\eps)}$ times enables us to obtain an $\apm$ approximation for its expectation 
by Chebyshev's inequality.
We compute the number of $h$-cycles in this auxiliary graph using rectangular matrix multiplication. Since the auxiliary graph is $h$-partite and one part contains only a single vertex, we get a running time of ${\MM{n,n,1}}=\TO{n^2}$. Thus, we achieve our claimed running time of $\TO{n^2b/\eps a}$. 

We mention that we invoke this procedure on the set of vertices given by the previous component of finding heavy vertices, which is called upon in every recursion step.
A crucial observation that we make is that not only does this set contain all $\Lambda$-heavy vertices and no $\Lambda/\polylog{n}$-light vertices, but rather we also know that it does not contain $(2^h\Lambda)$-heavy vertices, because those are handled during previous steps of the recursion. This means that 
we invoke this procedure for $a,b$ that differ only by $\polylog{n}$ and $\eps^2$ factors, and thus we effectively get a running time of $\TO{n^2/\eps^3}$ for counting $h$-cycles through $\Lambda$-heavy vertices.

\paragraph{Roadmap:}
\Cref{sec:template} contains our template for the recursion, and proves its correctness for any graph $\FH$ given implementations of two black boxes, one that finds heavy vertices and another that counts the copies of $\FH$ that contain heavy vertices. \Cref{sec:cycles} proves the running time that our template obtains for $h$-cycles, given the running times of implementations of the two black boxes. We implement our black boxes for $h$-cycles in \Cref{sec:count-heavy,sec:Find Heavy}.

\subsection{Preliminaries}
Let $G$ be a graph on $n$ vertices.
Let $\FH$ be a fixed graph with $h=\BO{1}$ vertices.
For a subgraph $G'\subseteq G$, and a subset of vertices $S$,  denote by $t_{G'}(S)$
number of copies of $\FH$ in $G'$ which intersect $S$.
Denote by $\tau=\tau_{G'}$ the maximal number of copies of $\FH$ in $G'$ in which a vertex participates.
We say that a vertex $v$ is $\Lambda$-heavy (in $G$) if $t_G(v)\geq \Lambda$, and otherwise it is $\Lambda$-light.
We say that a copy $C$ of $\FH$ is $\Lambda$-heavy if $C$ contains at least one $\Lambda$-heavy vertex.
Let $G$ be a graph and $p$ some parameter that could depend on $G$. We denote by $G[p]$ a random induced subgraph of $G$ obtained by keeping each vertex from $G$ independently with probability $p$.
We use $t\apm$ to denote the closed interval $[t(1-\eps),t(1+\eps)]$. We say that a value $\htt=t\apm$ if $\htt\in [t(1-\eps),t(1+\eps)]$.
We assume that $\eps\in(0,1/2]$, which might depend on $n$. If 
$\eps$ is bigger, our algorithm assumes that $\eps\leq 1/2$.
Finally, all logarithms in this paper are base $2$.

\begin{definition}[Fast Matrix Multiplication Definitions]
    We denote the time it takes to compute the product of two matrices of dimension $n^a\times n^b$ and $n^b \times n^c$ by either
    $\MM{n^a,n^b,n^c}$ or $n^{\omega(a,b,c)}$.
    We also abuse the notation and write $\omega=\omega(1,1,1)$,
    and $\omega(k)=\omega(1,k,1)$.
    Note that for any permutation $\pi:[3]\to[3]$ we have
        $\omega(x_1,x_2,x_3) = \omega(x_{\pi(1)},x_{\pi(2)},x_{\pi(3)})\;.$
    In addition to $\omega$, we will also use $\alpha$ to be the largest real number such that $n$ by $n^\alpha$ by $n$ matrix multiplication can be done in $n^{2+o(1)}$ time.
\end{definition}

We will need the following concentration bounds.
\begin{theorem}[Chernoff Bound {\cite{dubhashi2009concentration}}]\label{thm:chernoff 6}
    Let $X_1, \ldots, X_n$ be independent random variables with values in $[0,1]$ and $X=\sum_i X_i$. For $t\geq 6\Exp{X}$, and $\eps>0$ we have
    \begin{align*}
        \Pr{X\geq t}\leq 2^{-t}\;, &&   
        \Pr{X\leq (1-\eps)\Exp{X}}\leq \exp(-\eps^2\cdot \Exp{X}/2)
    \end{align*}
\end{theorem}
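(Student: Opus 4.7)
The plan is to prove both bounds via the standard exponential-moment (Bernstein/Chernoff) technique. For any $s>0$, I would start from Markov's inequality to get $\Pr{X\geq t}\leq e^{-st}\Exp{e^{sX}}$, then use independence to factor $\Exp{e^{sX}}=\prod_i \Exp{e^{sX_i}}$, and bound each factor via convexity of $x\mapsto e^{sx}$ on $[0,1]$: pointwise $e^{sX_i}\leq 1+(e^s-1)X_i$, hence $\Exp{e^{sX_i}}\leq 1+(e^s-1)\Exp{X_i}\leq \exp((e^s-1)\Exp{X_i})$ by $1+x\leq e^x$. Multiplying across $i$ and writing $\mu=\Exp{X}$ yields the master bound $\Pr{X\geq t}\leq \exp((e^s-1)\mu-st)$, with an analogous lower-tail version obtained by applying the same argument to $-X$.

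For the upper tail $\Pr{X\geq t}\leq 2^{-t}$, the plan is to optimize over $s$: setting $e^{s}=t/\mu$ gives $\Pr{X\geq t}\leq \exp(t-\mu - t\ln(t/\mu))$. Under the hypothesis $t\geq 6\mu$ one has $\ln(t/\mu)\geq \ln 6>1+\ln 2$, so $t(1-\ln(t/\mu))\leq -t\ln 2$; since $-\mu\leq 0$ the full exponent is at most $-t\ln 2$, giving $\Pr{X\geq t}\leq 2^{-t}$.

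For the lower tail, the same technique applied to $-X$ with parameter $s>0$ yields $\Pr{X\leq (1-\eps)\mu}\leq \exp((e^{-s}-1)\mu + s(1-\eps)\mu)$; optimizing via $e^{-s}=1-\eps$ gives the familiar form $\Pr{X\leq (1-\eps)\mu}\leq \bigl(e^{-\eps}/(1-\eps)^{1-\eps}\bigr)^{\mu}$. The exponential bound $\exp(-\eps^2\mu/2)$ then follows from the elementary inequality $(1-\eps)\ln(1-\eps)\geq -\eps+\eps^2/2$ on $(0,1)$, which I would verify by expanding $(1-\eps)\ln(1-\eps) = -\eps + \sum_{k\geq 2}\eps^k/(k(k-1))$ and retaining only the leading $k=2$ term.

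The hard part is essentially nonexistent: the only substantive checks are two elementary real-analysis facts, namely $\ln 6>1+\ln 2$ (equivalently $\ln 3>1$, true since $3>e$) and the Taylor-series inequality above. Everything else is mechanical — a single one-variable convex optimization for each tail, plus the MGF estimate enabled by $X_i\in[0,1]$ and independence — so I do not expect any genuine obstacle, which is consistent with this being cited from a standard reference.
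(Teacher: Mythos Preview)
The paper does not prove this theorem; it is stated as a preliminary and cited to \cite{dubhashi2009concentration}. Your proposal is the standard MGF/Chernoff argument and is correct: the convexity bound $e^{sX_i}\leq 1+(e^s-1)X_i$ for $X_i\in[0,1]$ gives the master inequality, the upper-tail optimization together with $\ln 6>1+\ln 2$ (i.e.\ $3>e$) yields $2^{-t}$, and the lower-tail optimization plus the series identity $(1-\eps)\ln(1-\eps)=-\eps+\sum_{k\geq 2}\eps^k/(k(k-1))$ yields $\exp(-\eps^2\mu/2)$. The only caveat is that your lower-tail optimization $e^{-s}=1-\eps$ requires $\eps\in(0,1)$; the theorem's hypothesis ``$\eps>0$'' should be read with this implicit restriction (and indeed all uses in the paper take $\eps<1$).
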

\begin{theorem}[Chebyshev's Inequality]\label{ineq:chev}
    Let $X$ be a random variable. Then,
    $\Pr{\abs{X-\Exp{X}}\geq a}\leq \frac{\Var{X}}{a^2}\;.$
\end{theorem}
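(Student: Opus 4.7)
The plan is to derive Chebyshev's inequality as a direct consequence of Markov's inequality applied to a suitably chosen nonnegative random variable. Let $\mu = \Exp{X}$ and define the nonnegative random variable $Y = (X - \mu)^2$. Observe that the event $\{\abs{X - \mu} \geq a\}$ coincides exactly with the event $\{Y \geq a^2\}$, since squaring is monotone on the nonnegatives.

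Next I would invoke Markov's inequality on $Y$, which states that for any nonnegative random variable $Y$ and any $b > 0$, $\Pr{Y \geq b} \leq \Exp{Y}/b$. Setting $b = a^2$ gives $\Pr{Y \geq a^2} \leq \Exp{Y}/a^2$. By the very definition of variance, $\Exp{Y} = \Exp{(X - \mu)^2} = \Var{X}$. Combining the two identifications yields the claimed bound
\[
\Pr{\abs{X - \Exp{X}} \geq a} \;=\; \Pr{Y \geq a^2} \;\leq\; \frac{\Var{X}}{a^2}.
\]

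If one does not wish to assume Markov's inequality as a black box, the only additional step is its short proof: write $\Exp{Y} \geq \Exp{Y \cdot \mathbbm{1}[Y \geq a^2]} \geq a^2 \cdot \Pr{Y \geq a^2}$, where the first inequality uses $Y \geq 0$ and the second uses the pointwise bound $Y \cdot \mathbbm{1}[Y \geq a^2] \geq a^2 \cdot \mathbbm{1}[Y \geq a^2]$. Rearranging gives Markov, from which Chebyshev follows as above.

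The only conceptual subtlety, and thus what I would flag as the ``main obstacle'' if any, is ensuring that $\Var{X}$ is finite; otherwise the bound is vacuous and the manipulations with $\Exp{Y}$ still hold in the extended sense. No further machinery is needed, so the proof is essentially two lines once the reduction to Markov's inequality is in place.
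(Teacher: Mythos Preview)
Your proof is correct and is the standard textbook derivation of Chebyshev's inequality from Markov's inequality. The paper itself does not supply a proof of this statement; it is listed in the preliminaries as a known fact, so there is nothing to compare against beyond noting that your argument is the usual one.
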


\section{The Recursive Template}\label{sec:template}
\paragraph{Organization.}
In this section, we present an algorithm for approximating the number of copies of a graph $\FH$ in a graph $G$, denoted by $t$, which builds upon two black boxes.
The first black box, called $\FindHalg$, takes a graph $H$ and a parameter $\Lambda$ as input and computes a superset of the $\Lambda$-\emph{heavy} vertices, excluding any $\Lambda/\polylog{n}$-light vertices. We denote the computed superset of heaviest vertices as $S$.
The second black box, called $\Cialg$, is used to compute an approximation for the number of \emph{heavy-copies} of $\FH$ in $G$, which is the set of all copies that contain at least one vertex from $S$.
In sections \Cref{sec:color code,sec:Find Heavy}, we provide details on how to implement these black boxes as randomized algorithms that produce valid outputs with high probability.

Our algorithm for subgraph approximate counting that uses the specified black boxes consists of two parts: a doubling algorithm called $\algWrapNP$, and a recursive algorithm called $\algCoreNP$, which is the main focus of this section.

\paragraph{The $\algCoreNP$ Algorithm.} The $\algCore{G,\Lambda}$ algorithm takes two parameters: a graph $G$ and a heaviness threshold $\Lambda$.
The output of the algorithm is a value $\htt$, which, with a probability of at least $2/3$, is within the range $t \pm \brak{t\cdot \eps' + \Lambda \cdot \polylog{n}/\eps'} $, where $\eps'$ is the fixed precision parameter of the algorithm.

We next explain how the recursive $\algCoreNP$ algorithm works. The algorithm does the following.
\begin{enumerate}
    \item Find the heaviest vertices using the $\FindHalg$ black box, and denote this set by $V_\Lambda$.
    \item Compute an approximation to the number of heavy copies of $\FH$, i.e., copies of $\FH$ with at least one vertex from $V_\Lambda$, using the $\Cialg$ black box, and denote the output by $\htt_\Lambda$.
    \item Let $H=G[V(G)-V_\Lambda]$, and let $F\gets H[p]$. That is, $F$ is an
          induced subgraph of $H$, where each vertex from $H$ joins $F$ independently with probability $p$.
    \item Make a recursive call to $\algCore{F,\Lambda\cdot p^{\abs{\FH}}}$ and let $\htt_F$ denote its output.
    \item Return $\htt_\Lambda +\htt_F/p^{\abs{\FH}}$.
\end{enumerate}
The analysis of the probability that the algorithm will produce a good approximation appears in the proof of \Cref{lemma2:ih}.

The running time of the algorithm depends on the implementation of the black boxes. In the next section, we analyze the running time of the algorithm.

\paragraph{The $\algWrapNP$ Algorithm.} The $\algWrapNP$ algorithm is a doubling algorithm, which starts with an initial guess for $t$, denoted by $\Thint_0=n^{\abs{\FH}}$. This is the maximal number of copies of $\FH$ an $n$ vertex graph can contain ($h!\cdot\ch{n}{\abs{\FH}}\leq n^{\abs{\FH}}$).
The algorithm then makes $\TO{1}$ calls to $\algCore{G,\Lambda_0}$, where $\Lambda_0\gets \Thint\cdot {\eps}^2/8Q$, where $Q=\Qb$, and computes their median, which we denote by $\htt_0$. 
If $\htt_0\geq \Thint_0$ the doubling algorithm stops and outputs $\htt_0$ as its approximation for $t$.
Otherwise, the guess for the value of $t$ is decreased by a factor of $2$.
The main point here is that the smaller the value of $\Lambda$ given to the recursive algorithm is, the better approximation we get, while simultaneously increasing the running time. The guess $\Thint$ is a guess for the highest heaviness threshold the algorithm can start with to output a good approximation, and not an actual guess for the value of $t$ (although both quantities are related).

Formally, the black boxes that we assume are the following.
\begin{restatable}[$\FindHalg{(G,\Lambda)}$]{blackbox}{reFH}\label{algo:find heavy bb}
    \begin{description}
        \item[Input:] A graph $G$, some parameter $\Lambda$.
        \item[Output:] A subset $V_\Lambda$ of vertices, such that with probability at least $1-\frac{1}{n^4}$, $\forall v\in V(G)$:
              \begin{enumerate}
                  \item If $t_G(v)\geq \Lambda$, then $v\in V_\Lambda$.
                  \item If $t_G(v)\leq \Lambda/\clog$, then $v\notin V_\Lambda$.
              \end{enumerate}
    \end{description}
\end{restatable}

\begin{blackbox}[${\Cialg_{\eps'}(G,V_{\Lambda},a,b)}$]\label{algo:app intersect fh}
    \begin{description}
        \item[Input:] A graph $G$, a precision parameter $\eps'\in[0,\tfrac{1}{2}]$, a subset of vertices $V_\Lambda$, and two real numbers $0<a\leq b$, such that
              $\forall v\in V_\Lambda$ we have $t_G(v)\in[a,b]$.
        \item[Output:]
              $\hat{t}_{\Lambda}$ which satisfies $\Pr{\hat{t}_{\Lambda}=t_G(V_\Lambda)\apmp}\geq 1-\frac{1}{n^4}$.
    \end{description}
\end{blackbox}
It should be noted that this black box cannot be applied to the entire graph, as it might contain a vertex $v$ with $t_G(v) = 0$. Moreover, even if all vertices have $t_G(v) > 0$, employing this black box on the entire graph might result in slower running time. Indeed, in our implementation of the black box, the runtime is contingent on the ratio $b/a$. Therefore, we only employ this black box with values of $a$ and $b$ where $b/a = \mathcal{O}(1/\eps^2)$.
\begin{algorithm}[H]
    \caption{$\algCore{G,\Lambda}$}\label[algorithm]{alg:template rec}
    \setcounter{AlgoLine}{0}
    \KwIn{A graph $G=(V,E)$, a heaviness threshold $\Lambda$, and a precision parameter $\eps'$.}

    \medskip
    $V_\Lambda\gets \FindHalg(G,\Lambda)$ \Comment*{\parbox[t]{3.5in}{$V_\Lambda$ is a superset of $\Lambda$-heavy vertices in $G$ with no $\Lambda/\clog$-light vertices}}
    $a_\Lambda\gets \frac{\Lambda}{\clog}\;,b_\Lambda\gets \Lambda\cdot \frac{8Q}{\eps^2}$ \label{line:l4} \Comment*{\parbox[t]{3.5in}{$Q=\Qb$}}
    \medskip
    $\haT_\Lambda\gets\Cialg_{\eps'}(G,V_\Lambda,a_\Lambda,b_\Lambda)$
    \Comment*{\parbox[t]{3.5in}{$\haT_\Lambda$ is a $\apmp$ approximation for $t_G(V_\Lambda)$ (the number of copies of $\FH$ intersecting $V_\Lambda$)}}
    \lIf{$\Lambda\leq 1$}{\Return{$\haT_\Lambda$}}

    $H\gets G[V-V_\Lambda]$\\
    $p\gets 1/2$\Comment*{\parbox[t]{3.5in}{We keep $p$ instead of $1/2$ for readability}}
    $F\gets H[p]$ \Comment*{\parbox[t]{3.5in}{$F$ is a random subgraph of $H$}}
    $\haT_F\gets \algCore{F ,\Lambda\cdot p^h}$\\
    \smallskip
    \Return{$\haT_{\Lambda}+\haT_F/p^{h}$}
\end{algorithm}

The guarantees for the template are given in the following lemma.

\begin{restatable}[Guarantees for the $\algCoreNP$ Algorithm]{lemma}{LemCoreSimp}\label{lemma:simp}
    For every $\eps\in(0,1/2]$ and every $\Lambda\geq \tau_G\cdot \frac{\eps^2}{8Q}$, we have
    \begin{align*}
         & \Pr{\algCore{G,\Lambda}=t_G(1\pm \eps/2) \pm \Lambda\cdot \tfrac{\log^4(n)}{2\eps}}\geq 2/3\;,
    \end{align*}
    where $\eps'=\neweps$.
\end{restatable}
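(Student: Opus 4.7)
The plan is to induct on the depth $D$ of the recursion, where $D$ is the smallest integer with $\Lambda p^{hD} \leq 1$; since $p=1/2$ and $h=\BO{1}$, we have $D=\BO{\log n}$. For the base case $\Lambda \leq 1$, every vertex lying on a copy of $\FH$ is $1$-heavy, so $V_\Lambda$ produced by Find-Heavy contains all of them with probability $\geq 1-1/n^4$, giving $t_G(V_\Lambda)=t_G$. The precondition of Count-Heavy is satisfied because for $v \in V_\Lambda$ we have $t_G(v) \in [\Lambda/\clog,\tau_G] \subseteq [a_\Lambda,b_\Lambda]$, where the upper inclusion uses the lemma's hypothesis $\Lambda \geq \tau_G\cdot\eps^2/(8Q)$ to conclude $\tau_G \leq b_\Lambda$. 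Thus $\hat t_\Lambda \in t_G(1\pm\eps')$, and since $\eps' \leq \eps/2$ the base case follows.

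For the inductive step, decompose $t_G = t_G(V_\Lambda) + t_H$ with $H=G[V-V_\Lambda]$. Conditioning on Find-Heavy succeeding, $H$ contains no $\Lambda$-heavy vertex, so $\tau_F \leq \tau_H < \Lambda$ deterministically, which verifies the precondition $\Lambda p^h \geq \tau_F\cdot\eps^2/(8Q)$ for the recursive call (since $p^h=2^{-h} \geq \eps^2/(8Q)$ for the parameters in use). Conditioning additionally on Count-Heavy yields $\hat t_\Lambda \in t_G(V_\Lambda)(1\pm\eps')$. The key ingredient is concentration of $t_F = \sum_C \mathbf{1}_{C \in F}$ around its mean $p^h t_H$: a standard pair-counting argument, split by the intersection size of pairs of copies and using $\sum_v t_H(v)^2 \leq h\tau_H t_H$, yields $\Var{t_F} = \BO{p^h t_H \tau_H} \leq \BO{p^h t_H \Lambda}$. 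Chebyshev's inequality then gives, with failure probability at most $1/(6D)$, a bound on $|t_F/p^h - t_H|$ by the sum of a per-level multiplicative error of order $\eps t_H/(2D)$ (when $t_H$ is large) and a per-level additive error of order $\Lambda\log^{\BO{1}}(n)/(\eps D)$ (when $t_H$ is small), chosen via a case split.

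Combining with the inductive hypothesis, which gives $\hat t_F \in t_F(1\pm\eps/2) \pm \Lambda p^h\log^4(n)/(2\eps)$, the recombination $\hat t_\Lambda + \hat t_F/p^h$ has total error at most $\eps t_G/2$ plus $\Lambda\log^4(n)/(2\eps)$: the factor $p^h$ cancels in the additive term so it does not inflate across levels, while the per-level multiplicative errors $\eps'=\eps/(4\log n)$ (from Count-Heavy) and $\eps/(2D)$ (from Chebyshev) telescope to at most $\eps/2$ over $D = \BO{\log n}$ levels. A union bound over the levels shows the failure probability is at most $D(2/n^4 + 1/(6D)) \leq 1/6 + o(1) < 1/3$, yielding the asserted $\geq 2/3$ success probability. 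The principal obstacle is the variance analysis of $t_F$ and the error bookkeeping: the Chebyshev step requires a careful case split between multiplicative and additive concentration, and the per-level budgets must be chosen so that aggregation over $\BO{\log n}$ recursive levels respects both the $\eps/2$ multiplicative and $\Lambda\log^4(n)/(2\eps)$ additive error targets simultaneously.
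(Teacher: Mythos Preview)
Your overall structure---base case, variance bound via pair counting, Chebyshev, and recombination---mirrors the paper's, and the individual ingredients (the precondition check $\Lambda p^h\ge \tau_F\eps^2/(8Q)$, the bound $\Var{t_F}=\BO{p^h t_H\Lambda}$) are correct. The gap is that the induction, as you state it, does not close.

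You take the inductive hypothesis to be the statement of the lemma itself: $\hat t_F\in t_F(1\pm\eps/2)\pm \Lambda p^h\log^4(n)/(2\eps)$. After dividing by $p^h$ and inserting the Chebyshev bound $t_F/p^h\in t_H(1\pm\eps_{\mathrm{level}})\pm\delta_{\mathrm{level}}$, the error on the $t_H$ part becomes $(1\pm\eps/2)(1\pm\eps_{\mathrm{level}})$, which is strictly wider than $(1\pm\eps/2)$ for any positive $\eps_{\mathrm{level}}$; likewise the additive term becomes $\Lambda\log^4(n)/(2\eps)+O(\delta_{\mathrm{level}})$, again exceeding the target. Your remark that the per-level errors ``telescope to at most $\eps/2$'' is the right intuition, but it is inconsistent with using the full $\eps/2$ from the hypothesis at each step: telescoping requires an inductive statement in which the error budget depends on the remaining depth.

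This is exactly why the paper does \emph{not} induct on the lemma directly but first proves a sharper, depth-indexed statement (their ``Induction Hypothesis'' lemma): with $\hat D=\hat{\mathsf D}(\Lambda)$ the recursion depth, they show
\[
\Pr{\algCore{G,\Lambda}\in t_G(1\pm\eps')^{\hat D}\pm 2\hat D\K\Lambda/\eps'}\ \ge\ 1-\tfrac{4h\hat D}{\K p^h},
\]
for a parameter $\K$. Because the recursive call has depth $\hat D-1$, the step adds exactly one factor $(1\pm\eps')$ and one additive $2\K\Lambda/\eps'$, and the induction closes cleanly. The present lemma is then a corollary obtained by setting $\K=\Theta(\log^2 n)$ and using $(1\pm\eps')^{\log n}\subseteq(1\pm\eps/2)$. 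Your case-split Chebyshev can be replaced by the single deviation $\sigma=p^h(t_H\eps'+\K\Lambda/\eps')$, for which $\sigma^2\ge 2p^{2h}t_H\K\Lambda$ makes the failure probability $\le (h+1)/(p^h\K)$ with no case analysis. If you rewrite your induction with a depth-graded error, your argument goes through and becomes essentially the paper's.
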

\begin{remark}
    The depth of the recursion in $\algCore{G,\Lambda}$ is at most $\log_{1/p^h}(\Lambda) + 1$. Since we will only call this algorithm with $\Lambda\leq \abs{V(G)}^h$, we can conclude that the depth of the recursion is at most $\log n + 1$.
\end{remark}
The final algorithm is the wrapper.

\begin{algorithm}[H]
    \caption{$\algWrap{G,\eps}$}\label[algorithm]{alg:template wrap}
    \setcounter{AlgoLine}{0}
    \KwIn{A graph $G=(V,E)$ with $t_G$ copies of $\FH$ and a precision parameter $\eps\leq 1/2$.}
    \KwOut{$\htt$, which is a $\apm$ approximation for $t$ \whp.}
    \medskip
    $\eps'\gets\neweps\;,\;\Thint\gets n^h\;,\;Q\gets \Qb\;,\; \Lambda\gets \Thint \cdot{\eps}^2/Q$\\ %

    \medskip
    \For{$i=0$ to $i=h\log n$}
    {
        $\htt_i\gets \mathsf{Median}\sbrak{\algCore{G,\Lambda/2^i},\rr}\label[line]{line:5med} $\Comment{$\htt_i$ is the median of $\rr$ \Indp\Indp \Indp  independent \indent\hspace{20pt}\indent \hspace{195pt}executions of $\algCore{G,\Lambda/2^i}$.}
        \If{$\haT_i\geq \Thint/2^i$}{
            \Return $\htt_i$
        }
    }
    \Return{Exact deterministic count of $t_G$.}
\end{algorithm}

\renewcommand{\Cialg}{{\mathsf{Count} \mhyphen\mathsf{Heavy}_{\eps'}}}
\begin{restatable}[Guarantees for the $\algWrapNP$ Algorithm]{lemma}{LemWrap}\label{lemma:correctness wrap}
    Let $G$ be a graph with $n$ vertices and $t_G$ copies of $\FH$.
    Fix some $\eps>0$ that may depend on $n$.
    Let $\htt$ denote the output of $\algWrap{G,\eps}$ (specified in \Cref{alg:template wrap}). Then,
    $\begin{aligned}
            \Pr{\htt=t\apm}\geq 1- 1/n^2\;.
        \end{aligned}$
\end{restatable}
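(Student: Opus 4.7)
The plan is to decouple the randomness from the deterministic analysis. First I would use median amplification to boost each of the $O(\log n)$ estimates $\htt_i$ produced in the loop, so that all of them simultaneously satisfy their Lemma~\ref{lemma:simp} guarantee with probability at least $1-1/n^2$. Then, conditioning on this good event, I would argue purely deterministically that the doubling loop halts at an iteration $i$ at which $\htt_i$ is already a $(1\pm\eps)$-approximation to $t_G$, and handle the small-$t_G$ fallback separately.

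For the amplification step, each single call to $\algCore{G,\Lambda/2^i}$ returns a value in $t_G(1\pm\eps/2)\pm(\Lambda/2^i)\log^4(n)/(2\eps)$ with probability at least $2/3$ by Lemma~\ref{lemma:simp}, as long as $\Lambda/2^i\geq \tau_G\eps^2/(8Q)$. Applying \Cref{thm:chernoff 6} to the $\rr=400\log n$ independent repetitions whose median is $\htt_i$ boosts this to $1-1/n^{\Omega(1)}$ (an arbitrarily large polynomial by adjusting the constant in $\rr$), and a union bound over the $h\log n+1$ loop iterations gives the required $1-1/n^2$. I would then condition on this event for the rest of the argument.

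For the deterministic step, write $\Lambda_i=n^h\eps^2/(Q\cdot 2^i)$ so that the additive error in Lemma~\ref{lemma:simp} simplifies to $n^h\eps/(16\cdot 2^i)$. Let $i^*$ be the smallest $i$ with $n^h/2^i\leq t_G/2$, setting $i^*=h\log n$ if $t_G$ is too small for this to land in range. Minimality of $i^*$ gives $n^h/2^{i^*}>t_G/4$, hence $\Lambda_{i^*}>t_G\eps^2/(4Q)\geq\tau_G\eps^2/(8Q)$, where I use $\tau_G\leq t_G$. The chain $\Lambda_i\geq \Lambda_{i^*}\geq \tau_G\eps^2/(8Q)$ then holds for every $i\leq i^*$, so Lemma~\ref{lemma:simp} fires at all of them. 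At $i=i^*$ the additive error is at most $t_G\eps/32$, so $\htt_{i^*}\geq t_G(1-\eps/2)-t_G\eps/32\geq t_G/2\geq n^h/2^{i^*}$, forcing the halting condition to trigger at some iteration $i\leq i^*$. At that iteration, combining $\htt_i\geq n^h/2^i$ with the Lemma's upper bound on $\htt_i$ yields $n^h/2^i\leq 2t_G$ (for $\eps\leq 1/2$), so the additive error is at most $t_G\eps/8$ and $\htt_i\in t_G(1\pm\eps/2)\pm t_G\eps/8\subseteq t_G(1\pm\eps)$. In the corner case $t_G\in\{0,1\}$, the loop may exit without halting, and then the exact count returned by the fallback is trivially valid.

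The main obstacle is the tension between Lemma~\ref{lemma:simp}'s precondition $\Lambda\geq \tau_G\eps^2/(8Q)$ and its additive-error term $\Lambda\log^4(n)/(2\eps)$: as $\Lambda$ shrinks along the loop the additive error eventually becomes absorbable into $\eps t_G$, but the precondition simultaneously risks failing. The critical computation is pinning $\Lambda_{i^*}$ in the narrow window where the Lemma still fires while the additive error is already $O(\eps t_G)$, which is exactly what the inequalities $\tau_G\leq t_G$ and $n^h/2^{i^*}\in (t_G/4,t_G/2]$ deliver.
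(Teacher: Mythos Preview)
Your proof is correct and follows essentially the same approach as the paper: median-amplify each estimate $\htt_i$ so that Lemma~\ref{lemma:simp}'s guarantee holds simultaneously for all relevant $i$ with probability $\geq 1-1/n^2$, then argue deterministically that the doubling loop halts at an iteration where the additive error is absorbed into $\eps t_G$. The only cosmetic differences are your choice of threshold index (you take the smallest $i$ with $\W_i\leq t_G/2$, the paper takes the largest with $\W_i\geq t_G/8$) and your slightly more compact case analysis, where you derive $\W_i\leq 2t_G$ directly from the halting condition rather than separately ruling out early termination as the paper does.
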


The next claim shows that \whp, all calls to the $\FindHalg$ black box generated by the doubling algorithm produce a valid output.
\begin{claim}[The event $\Ebb$]\label{claim:event ebb}
    Run the doubling algorithm.
    Let $\Ebb$ denote the event that all calls made to the $\FindHalg$ black box produce a valid output. Then $\Pr{\Ebb}\geq 1-\frac{\log^4}{n^4}$.
\end{claim}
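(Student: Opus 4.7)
The plan is to establish the claim via a straightforward union bound over every invocation of the $\FindHalg$ black box that can occur during a complete run of $\algWrapNP$. First I would enumerate the sources of these invocations by inspecting \Cref{alg:template wrap} and \Cref{alg:template rec}. The outer loop in $\algWrap{G,\eps}$ has at most $h\log n + 1$ iterations. Within each iteration, \Cref{line:5med} computes the median of $\rr = 400\log n$ independent executions of $\algCoreNP$. Finally, by the remark immediately following \Cref{lemma:simp}, each such execution of $\algCoreNP$ has recursion depth at most $\log n + 1$, and every recursive invocation makes exactly one call to $\FindHalg$ on its first line.

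Multiplying these three factors, the total number of invocations of $\FindHalg$ during a run of $\algWrap{G,\eps}$ is at most
\[
(h\log n + 1)\cdot 400\log n \cdot (\log n + 1) \;=\; \mathcal{O}(\log^3 n),
\]
which is upper bounded by $\log^4 n$ for all sufficiently large $n$, since $h = \mathcal{O}(1)$. By \Cref{algo:find heavy bb}, each individual invocation returns a valid $V_\Lambda$ with probability at least $1 - 1/n^4$. Applying a union bound over all invocations immediately yields
\[
\Pr{\Ebb} \;\geq\; 1 - \frac{\log^4 n}{n^4},
\]
as claimed.

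The only step requiring any care is the enumeration of invocation sources: one must not omit any of the three multiplicative factors — the outer doubling loop, the $400\log n$ median amplification, and the $\log n + 1$ recursion depth. Once this bookkeeping is done, the argument is an immediate union bound, and probabilistic dependencies between the random choices of different invocations are irrelevant, since the union bound is oblivious to them. Consequently, I do not anticipate any real obstacle beyond verifying the enumeration.
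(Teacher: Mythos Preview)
Your proposal is correct and follows essentially the same approach as the paper: enumerate the total number of $\FindHalg$ invocations as (doubling iterations) $\times$ (median repetitions) $\times$ (recursion depth) $= \mathcal{O}(\log^3 n)\leq \log^4 n$, then apply a union bound with the $1/n^4$ per-call failure probability. The paper's own proof is the same three-factor count followed by the same union bound.
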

\begin{proof}[Proof of \Cref{claim:event ebb}]
    The doubling algorithm makes at most $\rr\cdot h\log n$ calls to the recursive algorithm. Each call to the recursive algorithm generates at most $\log n$ calls to the $\FindHalg$ black box.
    Each call produces a correct output w.p. at least $1-\frac{1}{n^4}$. Therefore, by a union bound, the probability that all calls produce a valid output is at least $1-\frac{400h^2\log^3 n}{n^4}\geq 1-\frac{\log^4 n}{n^4}$.
\end{proof}
The next claim shows that if the conditions of the recursive algorithm are met, then so are the conditions of the $\Cialg$ black box.
We need it to claim that all calls to the $\Cialg$ black succeed \whp.

\begin{claim}\label{claim:event count heavy single}
    \sloppy{
    Let $\algCore{G,\Lambda}$ be a call to the recursive algorithm with $\Lambda\geq \tau_G\cdot \frac{\eps^2}{8Q}$, which calls $\Cialg(G,V_\Lambda,a_{\Lambda},b_{\Lambda})$.
    Define an event $\FF_G\triangleq \set{\bigwedge_{v\in V_{\Lambda}}t_{G}(v)\in [a_{\Lambda},b_{\Lambda}]}$. Then, $\Ebb\subseteq \FF_G$.
    }
\end{claim}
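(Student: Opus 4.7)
The plan is to unpack the definitions of $a_\Lambda$ and $b_\Lambda$ and verify the two-sided bound on $t_G(v)$ separately, using the guarantee on $V_\Lambda$ for the lower bound and the hypothesis $\Lambda \geq \tau_G \cdot \eps^2/(8Q)$ for the upper bound.

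First, I condition on the event $\Ebb$. Since $\Ebb$ asserts that \emph{every} call to $\FindHalg$ generated during the execution of the doubling algorithm produces a valid output, in particular the specific call $V_\Lambda \gets \FindHalg(G,\Lambda)$ inside the invocation $\algCore{G,\Lambda}$ does. Thus $V_\Lambda$ satisfies both guarantees of \Cref{algo:find heavy bb}.

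Next I handle the lower bound. Fix any $v \in V_\Lambda$. By the contrapositive of guarantee (2) of the $\FindHalg$ black box, since $v \in V_\Lambda$ we must have $t_G(v) > \Lambda/\clog = a_\Lambda$. In particular $t_G(v) \geq a_\Lambda$, as required.

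For the upper bound, I use the hypothesis of the claim, $\Lambda \geq \tau_G \cdot \tfrac{\eps^2}{8Q}$, which rearranges to
\[
\tau_G \;\leq\; \Lambda \cdot \frac{8Q}{\eps^2} \;=\; b_\Lambda.
\]
By the definition of $\tau_G$ as the maximum number of copies of $\FH$ that any single vertex participates in, we have $t_G(v) \leq \tau_G \leq b_\Lambda$ for every $v \in V(G)$, and in particular for every $v \in V_\Lambda$.

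Combining the two bounds, every $v \in V_\Lambda$ satisfies $t_G(v) \in [a_\Lambda, b_\Lambda]$, so $\FF_G$ holds whenever $\Ebb$ does, proving $\Ebb \subseteq \FF_G$. The reasoning is essentially definitional; the only subtlety worth flagging is that the upper bound relies on the input hypothesis $\Lambda \geq \tau_G \cdot \eps^2/(8Q)$ being maintained along the recursion, but here we only need it for the fixed call under consideration, so no additional work is required.
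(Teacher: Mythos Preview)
Your proof is correct and follows essentially the same approach as the paper's own proof: the lower bound comes from the contrapositive of guarantee~(2) of the $\FindHalg$ black box, and the upper bound comes from rearranging the hypothesis $\Lambda \geq \tau_G \cdot \eps^2/(8Q)$ to get $b_\Lambda \geq \tau_G \geq t_G(v)$.
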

\begin{proof}
    We first note that assuming $\Ebb$ occurs, we have that $\set{\bigwedge_{v\in V_{\Lambda}}t_{G}(v)\geq a_{\Lambda}}$ occurs. This follows by the guarantee of the $\FindHalg$ black box which ensures that $V_{\Lambda_j}$ does not contain $\Lambda_j/\clog$-light vertices, where $a_{\Lambda_j}=\Lambda_j/\clog$.
    We next prove that $\Ebb\subseteq \set{\bigwedge_{v\in V_{\Lambda}}t_{G}(v)\leq b_{\Lambda}}$.
    We have that $b_{\Lambda}=\Lambda \cdot \frac{8Q}{\eps^2}\geq \tau_G$, where the inequality is by the assumption on $\Lambda$, and therefore $b_{\Lambda}\geq \tau_G$ (always).
\end{proof}
In \Cref{lemma2:ih,lemma:correctness wrap}, the condition $\Lambda\geq \BoundLam{G}$ is employed solely due to the $\Cialg$ black box; part of its input comprises upper and lower bounds on $\min_{v\in V_{\Lambda'}} t_{G'}(v)$, and $ \;\max_{v\in V_{\Lambda'}} t_{G'}(v)$. Determining the lower bound based on the current threshold value of $\Lambda'$ is straightforward: we set $a_{\Lambda'}\gets \Lambda'/\clog$, which is a valid lower bound based on the guarantees of the $\FindHalg$ black box.
Next, we explain how to find the upper bound.
We differentiate between the initial call to $\algCoreNP$ and the recursive calls. For the recursive calls, determining such an upper bound based on the current threshold value of $\Lambda'$ is straightforward: we set $b_{\Lambda'}\gets \Lambda'/p^h$. This choice is a valid upper bound, supported by the guarantees of the $\FindHalg$ black box, which removed all $\Lambda'/p^h$-heavy vertices from the graph in the preceding call.
For the initial call, we can't set the upper bound $b_{\Lambda'}$ based on the value of $\Lambda'$ in general. Therefore, we establish the claim only for $\Lambda'\geq \tau_G\cdot \frac{\eps^2}{8Q}$ and then set $b\gets \Lambda' \cdot \frac{8Q}{\eps^2}$.
\subsection{Proof of \Cref{lemma:correctness wrap} using \Cref{lemma:simp}}
\newcommand*{\yy}{{y^\ast}}

Recall that $\W=n^h$, and let $\W_i=\W/2^i$.
Let $\yy$ denote the largest integer such that $\W/2^{\yy} \geq t_G/8$.
We emphasize that the value of $\yy$ is not known by the algorithm.

We will define an event $\FFall$, and show that $\Pr{\FFall}\geq 1-\frac{1}{n^2}$.
Then, we prove that the event $\set{\htt=t_G\apm}$ contains the event $\FFall$. This means that if $\FFall$ occurs, then so does $\set{\htt=t_G\apm}$.
This will complete the proof of \Cref{lemma:correctness wrap}.

\medskip\noindent
Let $\htt_{i,j}$ denote the output of the $j$-th execution of $\algCore{G,\Lambda/2^i}$ in \Cref{line:5med}, and let $\htt_i$ denote the median of $\set{\htt_{i,j}}_{j\in[\rr]}$.
We say that $\hat{s}\in\set{\htt_{i,j},\htt_{i}}$ is
\emph{good} if $\hat{s}= t_G\pm\frac{\eps}{2}\brak{t_G + \W_i/4}$.
\noindent We denote by $\FFall'$ the event that for $0\leq i\leq \yy$,
the median $\htt_i$ is good.
Let $\FFall\triangleq \FFall'\cap \Ebb$.
\begin{proposition}\label{prop:ffall}
    $\Pr{\FFall}\geq 1-\frac{1}{n^2}$.
\end{proposition}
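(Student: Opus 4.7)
The plan is to decompose $\FFall = \FFall' \cap \Ebb$ by a union bound. \Cref{claim:event ebb} already provides $\Pr{\Ebb} \geq 1 - \log^4(n)/n^4$, so it suffices to prove $\Pr{\overline{\FFall'}} \leq 1/(2n^2)$. Since $\FFall'$ is the intersection, over $i \in \{0, 1, \ldots, \yy\}$, of the events that the median $\htt_i$ is good, and $\yy \leq h\log n$, a second union bound reduces the task further to showing that each individual $\htt_i$ fails to be good with probability at most $n^{-c}$ for some sufficiently large constant $c$.

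To apply \Cref{lemma:simp} to each execution $\htt_{i,j} = \algCore{G, \Lambda/2^i}$, I would first verify its precondition $\Lambda/2^i \geq \tau_G \cdot \eps^2/(8Q)$. Using $\Lambda = \W \eps^2 / Q$, this reduces to $\W_i \geq \tau_G/8$, which holds because $\tau_G \leq t_G$ (each copy of $\FH$ through a vertex is a copy of $\FH$) and, by the definition of $\yy$, $\W_i \geq \W/2^{\yy} \geq t_G/8$ for every $i \leq \yy$. Next I would plug the lemma's additive error $(\Lambda/2^i) \log^4(n)/(2\eps)$ into $\Lambda/2^i = \W_i \eps^2/Q$ and $Q = \Qb$ to obtain $\W_i \eps / 16$, which sits strictly inside the slack $\W_i \eps/8$ appearing in the definition of good. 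Therefore each of the $\rr$ independent trials $\htt_{i,j}$ is good with probability at least $2/3$.

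From here the argument is standard. Since $\htt_i$ is the median of $\rr = 400 \log n$ independent trials each good with probability at least $2/3$, the Chernoff bound of \Cref{thm:chernoff 6} applied to the count of good trials (with deviation factor $1/4$ from its expectation $\geq 2\rr/3$) shows that fewer than $\rr/2$ of them land in the good interval with probability at most $\exp(-\Omega(\rr)) \leq n^{-10}$, which forces $\htt_i$ itself to lie in that interval except with probability $n^{-10}$. Union bounding over the at most $h\log n + 1$ values of $i \leq \yy$ and combining with $\Pr{\overline{\Ebb}}$ yields $\Pr{\overline{\FFall}} \leq 1/n^2$ for $n$ sufficiently large. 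The only delicate step is the parameter bookkeeping in the second paragraph: one must show that the additive error from \Cref{lemma:simp} collapses uniformly into the slack defining good, independent of the particular $i$; once that is in place, the Chernoff inequality and two union bounds finish the argument.
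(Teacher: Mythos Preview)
Your proposal is correct and follows essentially the same route as the paper: verify the precondition of \Cref{lemma:simp} via $\W_i\geq t_G/8\geq \tau_G/8$, conclude each $\htt_{i,j}$ is good with probability $\geq 2/3$, boost via the median trick, union-bound over $i\leq\yy$, and combine with \Cref{claim:event ebb}. If anything, you are slightly more explicit than the paper in checking that the additive error $\Lambda_i\log^4(n)/(2\eps)=\W_i\eps/16$ fits inside the slack $\W_i\eps/8$ defining ``good''; the paper leaves that arithmetic implicit and instead phrases the combination with $\Ebb$ via conditioning rather than your direct union bound, but these are cosmetic differences.
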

\begin{proof}
    We will show that $\Pr{\FFall'\mid \Ebb}\geq \fpr{\log^4 n}{4}$, and use the law of total probability to get the desired bound on $\FFall$, as
    $\Pr{\Ebb}\geq \fpr{\log^4 n}{4}$ by \Cref{claim:event ebb}.
    To prove $\Pr{\FFall'\mid \Ebb}\geq \fpr{\log^4 n}{4}$, we fix some $i\leq \yy$.
    If we show that $\htt_i$ is good with probability at least $1-1/n^4$, we can apply a union bound over all $i\leq \yy$ and complete the proof.
    We show that for every $j$ we have that $\htt_{i,j}$ is good with probability at least $2/3$.
    We then get that $\htt_i$ is also good with probability at least $1-1/n^5$. This can be shown by a standard Chernoff's inequality and is sometimes referred to as ``the median trick'' (See \Cref{claim:med trick0} for a proof).

    To show that $\htt_{i,j}$ is good with probability at least $2/3$, we want to apply \Cref{lemma:simp}, but we must ensure that the conditions of the lemma are met, that is, that the each direct call the doubling algorithm makes to the recursive algorithm, denoted by $\algCore{G,\Lambda}$, satisfies $\Lambda\geq \tau_{G}\cdot \frac{\eps^2}{8Q}$.
    We know that
    $\Lambda=\frac{\W}{2^i}\cdot\frac{\eps^2}{Q}$ as we are in the $i$-th iteration of the doubling algorithm. Then,
    $ \Lambda
        = \frac{\W}{2^i} \cdot \frac{\eps^2}{Q}
        \geq \frac{\W}{2^{\yy}}\cdot \frac{\eps^2}{Q}
        \geq t_G\cdot \frac{\eps^2}{8Q}
        \geq \tau_G\cdot \frac{\eps^2}{8Q}
    $.
    The penultimate inequality follows from the definition of $\yy$.
\end{proof}

We are now ready to prove \Cref{lemma:correctness wrap}.
\begin{proof}[Proof of \Cref{lemma:correctness wrap}]
    Let $\htt$ denote the output of the doubling algorithm.\\
    We prove that $\set{\htt=t_G\apm}\supseteq \FFall$. That is, we prove a \emph{deterministic} claim: If for each $0\leq i \leq \yy$ we have that $\htt_i$ is good, then $\htt=t_G\apm$.
    Assume $\FFall$ occurs and fix some index $i\leq \yy$.
    Informally, we first prove that if $\Thint_i$ is too large with respect to $t_G$, namely $\Thint_i \geq 4t_G$, then
    $\htt_i$ is smaller than $\Thint_i$. Therefore, \Cref{alg:template wrap} does not stop; instead, it retries with a finer threshold parameter, $\Thint_{i+1}=\Thint_i/2$. This implies that we need to prove $\htt=t_G\apm$ only when $\Thint_i \leq 4t_G$.
    We then show that if $\Thint_i$ is smaller than $4t_G$ then indeed $\htt_i=t_G\apm$. However, this is not sufficient to ensure that the algorithm stops. Therefore, we further show that if $\Thint_i$ is smaller than $t_G/2$ then $\htt_i\geq \Thint_i$ and therefore the algorithm always stops, and by the prior claim gives a correct estimation $\htt=t_G\apm$.

    \medskip
    The proof shows this by a case analysis on $\Thint_i$.
    We fix some $i\leq \yy$, which means $\W_i\geq t_G/8$.
    Let $L_i\triangleq(t_G(1\pm\eps/2)\pm \Thint_i\cdot \eps/8)=t_G \pm\frac{\eps}{2}\brak{t_G+\W_i/4}$. That is, $L_i$ denotes an interval that contains $\htt_i$, as we assumed $\FFall$ occurs.
    \begin{enumerate}
        \item For $i$ such that $\W_i\geq 4t_G$, we show that $\htt_i\leq \max(L_i)< \W_i$. This is because $\max(L_i)\leq \frac{\W_i}{2} + \frac{\W_i}{8}< \W_i$.
        \item For $i$ such that $\W_i\leq 4t_G$, we show that $\htt_i=t_G\apm$.
              This follows as $\frac{\eps}{2}\brak{t_G+\W_i/4}\leq \eps \cdot t_G$.
        \item For $i$ such that $\W_i\leq t_G/2$, we show that $\htt_i\geq \W_i$. To see this, we need to show that $t_G/2\leq \min(L_i)$, which follows since $\min(L_i)\geq t_G\brak{1-\eps(1/2 + 1/8)}> t_G/2$ as $\eps\leq 1/2$.
    \end{enumerate}
    To summarize, assuming $\FFall$ occurs, the doubling algorithm stops in the $i$-th iteration, for $i\in[\yy -5,\yy]$, with a correct approximation, as $\W/2^{\yy - 5}\leq 4t_G$.
    
\end{proof}
Later, we will need the following conclusion to analyze the running time.
\begin{conclusion}\label{conclusion:min thint}
    If the doubling algorithm calls $\algCore{G',\Lambda'}$ then
    $\FFall\subseteq \set{\Lambda'\geq t_G \cdot \frac{\eps^2}{8Q}}$.
\end{conclusion}
\begin{proof}
    Assuming $\FFall$ occurs, the doubling algorithm stops in the $i$-th iteration, for $i\in[\yy -5,\yy]$, and therefore
    $\Lambda'=\frac{\W}{2^i} \cdot \frac{\eps^2}{Q} \geq t_G \cdot \frac{\eps^2}{8Q}$.
\end{proof}

\subsection{%
    Proving \Cref{lemma:simp} Using a More Refined Version}\label{ssec:params}
\begin{restatable}[{$\DD$}]{definition}{defDepth}\label{def:Depth}
    We define the depth of the call $\algCore{F,\Lambda}$ as $\DD[\Lambda]=\max\brak{\ceil*{\log_{1/p^h}(\Lambda)},0}$. 
\end{restatable}
We assume $\Lambda\leq (h!)\cdot {\ch{n}{h}}\leq n^h$, and therefore that $\DD[\Lambda]\leq \log n +1$.
To prove \Cref{lemma:simp}, we state and prove a more refined version of the guarantees of the template algorithm, as follows.
\begin{restatable}[Induction Hypothesis]{lemma}{LemInduct}\label{lemma2:ih}
    Given a graph $G$ and $\eps$, we set $p=1/2$, and $\eps'=\neweps$.
    Then, for any $\Lambda\geq \BoundLam{G}$, and any $\K=o(n)$ that could depend on $n,\DD$ and $p$, we have
    \begin{align*}
        \Pr{\algCore{G,\Lambda}=t_G(1\pm \eps')^{\DD} \pm 2\DD\K \cdot \Lambda/\eps' }
        \geq 1- \frac{4h\DD}{\K p^h}\;.
    \end{align*}
\end{restatable}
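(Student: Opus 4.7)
My plan is to prove the statement by induction on the recursion depth $\DD = \DD[\Lambda]$. For the base case $\DD = 0$ (equivalently $\Lambda \leq 1$), the algorithm returns $\hat{t}_\Lambda$ from the Count-Heavy black box directly. Since $t_G(v)$ is a nonnegative integer, every vertex with $t_G(v)\geq 1$ is $\Lambda$-heavy, so the Find-Heavy guarantee places every vertex belonging to at least one copy of $\FH$ into $V_\Lambda$; hence $t_G(V_\Lambda)=t_G$, and the Count-Heavy guarantee yields $\hat{t}_\Lambda \in t_G(1\pm\eps')$, matching the claim with zero additive error.

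For the inductive step, I decompose $t_G = t_{\mathrm{hvy}} + t_H$, where $t_{\mathrm{hvy}} \eqdef t_G(V_\Lambda)$ counts copies meeting $V_\Lambda$ and $t_H$ counts copies entirely inside $H=G[V\setminus V_\Lambda]$. The algorithm outputs $\hat{t}_\Lambda + \hat{t}_F/p^h$, and I will establish three facts whose failure probabilities add: (i) the Count-Heavy guarantee, giving $\hat{t}_\Lambda \in t_{\mathrm{hvy}}(1\pm\eps')$; (ii) concentration of $t_F$ around its mean $t_H p^h$; and (iii) the inductive hypothesis applied at depth $\DD-1$ to $\algCore{F,\Lambda p^h}$, which gives $\hat{t}_F = t_F(1\pm\eps')^{\DD-1} \pm 2(\DD-1)\K\Lambda p^h/\eps'$ with failure probability at most $4h(\DD-1)/(\K p^h)$.

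The heart of the argument is (ii). Writing $t_F = \sum_C X_C$ with $X_C=\prod_{v\in C}\mathbbm{1}[v\in F]$, I will compute $\Exp{t_F}=t_H p^h$ and bound $\Var{t_F}\leq t_H p^h + h\tau_H t_H p^{h+1}$; the covariance bound uses that any two distinct copies $C\neq C'$ of $\FH$ satisfy $|C\cup C'|\geq h+1$, so each pairwise covariance is at most $p^{h+1}$, and that each copy shares a vertex with at most $h\tau_H$ others. The Find-Heavy guarantee gives $\tau_H\leq\Lambda$ because every vertex surviving in $H$ satisfies $t_H(v)\leq t_G(v)<\Lambda$. Setting the Chebyshev threshold $a=\eps' t_H p^h+\K\Lambda p^h/\eps'$, AM-GM yields $a^2\geq 4\K\Lambda t_H p^{2h}$, so $\Pr{|t_F-t_H p^h|\geq a}\leq \Var{t_F}/a^2\leq 4h/(\K p^h)$ with constants to spare. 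Dividing by $p^h$ gives $t_F/p^h\in t_H(1\pm\eps')\pm\K\Lambda/\eps'$ with the required probability.

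To finish, substituting (ii) into (iii) and dividing by $p^h$ yields $\hat{t}_F/p^h\in t_H(1\pm\eps')^\DD \pm [(1+\eps')^{\DD-1}+2(\DD-1)]\K\Lambda/\eps'$; since $\eps'(\DD-1)\leq \eps/4$, the factor $(1+\eps')^{\DD-1}\leq e^{\eps/4}\leq 2$, so the additive error is at most $2\DD\K\Lambda/\eps'$. Because $(1\pm\eps')\subseteq(1\pm\eps')^\DD$ for $\DD\geq 1$, adding $\hat{t}_\Lambda\in t_{\mathrm{hvy}}(1\pm\eps')^\DD$ preserves the multiplicative factor and gives $\hat{t}_\Lambda+\hat{t}_F/p^h\in t_G(1\pm\eps')^\DD\pm 2\DD\K\Lambda/\eps'$, as claimed. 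The failure probabilities in (ii) and (iii) sum to $4h\DD/(\K p^h)$, while cumulative black-box failures over $\DD=O(\log n)$ levels contribute only $O(\log n/n^4)$ and are absorbed. The main obstacle is step (ii): tuning the constants so the per-level Chebyshev tail lands at $4h/(\K p^h)$ while the additive error, once inflated by $(1+\eps')^{\DD-1}$ through the recursion, still telescopes to $2\DD\K\Lambda/\eps'$. A secondary bookkeeping issue is verifying that the IH may be invoked on the recursive call, i.e., that $\Lambda p^h$ is a valid threshold for $F$ in the sense needed by Count-Heavy, which relies on the upper bound $b_{\Lambda'}=\Lambda'/p^h$ passed to subsequent calls, as discussed around \Cref{claim:event count heavy single}.
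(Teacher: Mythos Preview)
Your proposal is correct and follows essentially the same approach as the paper: induction on $\DD$, with the inductive step combining the Count-Heavy guarantee, a Chebyshev concentration bound for $t_F$ around $t_H p^h$ (using $\Var{t_F}\lesssim h\tau_H t_H p^h$ and $\tau_H\leq\Lambda$), and the induction hypothesis on the recursive call; the additive errors telescope to $2\DD\K\Lambda/\eps'$ via $(1+\eps')^{\DD-1}\leq 2$, and the per-level failure probabilities sum to $4h\DD/(\K p^h)$. The paper organizes this via named events $\Em{1},\ldots,\Em{5}$ and proves $\Em{1}\cap\Em{2}\cap\Em{3}\subseteq\Em{5}$, but the content of your steps (i)--(iii) matches these events exactly, including the verification that $\Lambda p^h\geq\tau_F\cdot\eps^2/(8Q)$ needed to invoke the hypothesis recursively.
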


We use \Cref{lemma2:ih} to prove \Cref{lemma:simp}, which we restate here for convenience.
\LemCoreSimp*

We use $p\triangleq1/2$ throughout the section. This implies that $\DD[\Lambda]\leq \log n$.
We also set $\K= \frac{\log^2(n)}{16}$.
We choose finer error parameter $\eps'$ as a function of the desired error $\eps$, as follows $\eps'\gets \neweps$. This will ensure that
$\begin{aligned}
        (1\pm \eps')^{\log n}\subseteq(1\pm \eps/2)\end{aligned}$ and that
$\begin{aligned}
        (1+ \eps')^{\log n}\leq 2
    \end{aligned}$.
This refined choice of $\eps'$ will affect only lower order terms of the running time.
\begin{proof}[Proof of \Cref{lemma:simp} using \Cref{lemma2:ih}]
    First, for the multiplicative overhead in the approximation, we prove that $(1\pm \eps')^{\log n}\subseteq(1\pm \eps/2)$ and that $(1+ \eps')^{\log n}\leq 2$, as follows.
    \paragraph{Bounding $(1+\eps')^{\log n}$.}
    It holds that
    $ (1+\eps')^{\log n} =\brak{1+\frac{\eps}{4\log n}}^{\log n}\leq \exp\brak{\frac{\eps}{4}}$, which is at most $1+\frac{\eps}{4} +\brak{\frac{\eps}{4}}^2$ since for any $y\leq 1$ we have $\exp(y)\leq 1 + y + y^2$ \cite[Lemma 1.4.2 (b)]{doerr2019theory}. This is at most $1+ \frac{\eps}{2}$ since $\eps\leq 1$.

    Since $\eps\leq 1$, the above also implies that  $(1+\eps')^{\log n}\leq 2$.
    \paragraph{Bounding $(1-\eps')^{\log n}$.} By Bernoulli's inequality for $-\eps$ we have $(1-\eps')^{\log n}  \geq 1-\eps'\log n=1-\eps/4$.
    \medskip

    Next, for the additive overhead, note that by setting $\K=\log^2 n/16$,
    for any $\Lambda$ and $\eps'$, it holds that
    $2\DD\K\cdot\Lambda/\eps'= \frac{\log^3 n}{8\eps'}\cdot \Lambda=
        \frac{\log^4 n}{2\eps}\cdot \Lambda   \;.$
    \medskip\noindent
    For the error probability, note that for $\K=\log^2 n/16$, we have:
    $1-\frac{4h\DD}{\K p^h}=1-\frac{64h\log n}{2^{-h}\cdot \log^2 n }=1-\frac{\BO{1}}{\log n}\geq 2/3$,
    which completes the proof.
\end{proof}

\subsection{Proof of \Cref{lemma2:ih}}
It remains to prove \Cref{lemma2:ih}.
\LemInduct*

\bigskip\noindent We prove \Cref{lemma2:ih} using induction on the depth of the recursion.

\paragraph*{Base Case:}
Let $\Em{2}$ denote the event that
$\set{\algCore{G,\Lambda}=t_G(1\pm \eps')^{\DD} \pm 2\DD\K \cdot \Lambda/\eps' }$.
We show that $\Pr{\Em{2}}\geq 1- \frac{4h\DD}{\K p^h}$, when $\DD=0$, i.e., $\Lambda\leq 1$.
We show something stronger -- that $\Pr{\htt=t_G\apmp}\geq \pnp$.
Note that when $0<\Lambda\leq 1$, there is only one superset of $\Lambda$-heavy vertices, with no $\Lambda/\clog$-light vertices -- the set of all vertices $v$ with $t_G(v)>0$. 
The $\FindHalg$ black box will produce this set with probability at least $\fpr{1}{4}$. 
Assuming $V_\Lambda$ was computed correctly, we have $t_G=t_G(V_\Lambda)$, and therefore $\Pr{\htt_\Lambda=\apmp t}\geq \fpr{1}{4}$ by the $\Cialg$ black box. 
Overall, we get that $\Pr{\htt_\Lambda=\apmp t}\geq \fpr{2}{4}$, which completes the base case.

\paragraph*{Induction Hypothesis:}
If $\Lambda\geq \tau_G\cdot \frac{\eps^2}{8Q}$, then
$$\begin{aligned}
        \quad \Pr{\algCore{G,\Lambda}=t_G(1\pm \eps')^{\DD} \pm 2\DD\K \cdot \Lambda/\eps' }
        \geq 1- \frac{4h\DD}{\K p^h}\;.
    \end{aligned}$$

\paragraph*{Step:}
We first define some notation and events.
We have three graphs $F\subseteq H\subseteq G$, where $G$ is the input graph, $H$ is an induced subgraph of $G$ without the $\Lambda$-heavy vertices, and $F$ is a random induced subgraph of $H$ obtained by keeping each vertex of $H$ independently with probability $p$.
Let $\haT_G$ denote the output of $\algCore{G,\Lambda}$, and let $\haT_F$ denote the output of $\algCore{F,\Lambda \cdot p^h}$.
We use $t_\Lambda$ to denote $t_G(V_\Lambda)$, and let $\htt_\Lambda$
denote the output of $\Cialg(G,V_\Lambda)$.
Note that $\DD\geq \DDF+1$ (unless $\DD=1$, in which case $\DD=\DDF=1$).
\paragraph{Intuition.}
We compute two values, $\htt_\Lambda$ and $\htt_F$.
We then output $\htt=\htt_\Lambda + \htt_F/p^h$ as an approximation for $t_G$.
By the black box guarantees, we have that $\htt_\Lambda$ is a good approximation for $t_\Lambda$. What is left is to show that $\htt_F/p^h$ is a good approximation for $t_H$. We split this into two parts. First, we show that $t_F/p^h$ is ``close'' to the value of $t_H$ (See \Cref{cor2:e3}).
Next, we use the induction hypothesis, to show that $\htt_F$ is a good approximation of $t_F$. We need to show that the ``composition'' of these approximations is also good.

Recall that $\Ebb$ denotes the event that ${V_\Lambda}$ contains a superset of the $\Lambda$-heavy vertices, without any $\Lambda/\clog n$-light vertices.
    {\bfseries{We prove the correctness of the algorithm under the assumption that $\Ebb$ occurs}}.
\begin{enumerate}
    \item $\Em{1}\triangleq\set{\haT_\Lambda=t_\Lambda\apmp}$. The heavy copies of $\FH$ are approximated correctly.
    \item $\Em{2}\triangleq\set{\haT_F=t_F\apmp^{\DDF} \pm 2\DDF\cdot \K \cdot  (\Lambda\cdot p^h)/\eps'}$. This is the event specified in the induction hypothesis.
    \item $\Em{3}\triangleq\set{t_F/p^h=t_{H}\apmp\pm \K \cdot \Lambda/\eps'}$. This is a concentration bound on $t_F$.
    \item $\Em{4}\triangleq\set{\haT_F/p^h=t_H\apmp^{\DD}\pm 2\DD\K\cdot \Lambda/\eps'}$. This event contains $\Em{2}\cap \Em{3}$.
    \item $\Em{5}\triangleq\set{\haT_G= t_G\apmp^{\DD}\pm 2\DD\K\cdot \Lambda/\eps'}$. This is the event specified in \Cref{lemma2:ih}.
\end{enumerate}

\bigskip \noindent
\Cref{lemma2:ih} requires us to show that $\Pr{\Em{5}}\geq 1-\frac{4h\DD}{\K p^h}$. We will show this by proving that $\Em{1}\cap \Em{2}\cap \Em{3}\subseteq \Em{5}$, which implies that it is sufficient to prove that $\Pr{\Ebb\cap \Em{1}\cap \Em{2}\cap \Em{3}} \geq 1-\frac{4h\DD}{\K p^h}$. To show the latter, we will show that $\Pr{\Ebb\cap \Em{1}\cap \Em{2}}\geq 1-\frac{4h\DDF + 2}{\K p^h}$ and that $\Pr{\Em{3}}\geq 1-\frac{h+1}{\K p^h}$. Summing up the two error probabilities in the above expressions will give the desired result.

\begin{claim}\label{claim2:all events}
    $\Em{1}\cap \Em{2}\cap \Em{3}\subseteq \Em{5}$.
\end{claim}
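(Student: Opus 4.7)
The plan is to decompose the chain of implications into two steps: first I will show $\Em{2}\cap\Em{3}\subseteq \Em{4}$ (which the text already flags), and then $\Em{1}\cap \Em{4}\subseteq \Em{5}$. Since the algorithm outputs $\haT_G=\haT_\Lambda+\haT_F/p^h$ and since the total count splits as $t_G=t_\Lambda+t_H$ (every copy of $\FH$ either intersects $V_\Lambda$ or lies entirely in $H=G[V-V_\Lambda]$), the two steps combine to give $\Em{5}$. Throughout, I will use the refined choice $\eps'=\neweps$ and the estimate $(1+\eps')^{\DDF}\leq (1+\eps')^{\log n}\leq 2$, which was already verified in the proof of \Cref{lemma:simp}.

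For the first step, I start from $\Em{2}$ and divide through by $p^h$, which turns it into $\haT_F/p^h\in (t_F/p^h)\apmp^{\DDF}\pm 2\DDF\K\Lambda/\eps'$. I then substitute the bound from $\Em{3}$, $t_F/p^h=t_H\apmp\pm \K\Lambda/\eps'$, into this expression. The multiplicative factor $(1\pm\eps')^{\DDF}$ composed with $(1\pm\eps')$ telescopes to $(1\pm\eps')^{\DDF+1}=(1\pm\eps')^{\DD}$ (recall $\DD\geq \DDF+1$ in the recursive step). The additive error $\K\Lambda/\eps'$ picks up a multiplicative factor of $(1+\eps')^{\DDF}\leq 2$, contributing at most $2\K\Lambda/\eps'$, which when added to the existing $2\DDF\K\Lambda/\eps'$ yields at most $2(\DDF+1)\K\Lambda/\eps'\leq 2\DD\K\Lambda/\eps'$. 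This is exactly the inclusion $\Em{2}\cap\Em{3}\subseteq\Em{4}$.

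For the second step, I combine $\haT_\Lambda=t_\Lambda\apmp$ from $\Em{1}$ with $\haT_F/p^h=t_H\apmp^{\DD}\pm 2\DD\K\Lambda/\eps'$ from $\Em{4}$, adding the two. Since $(1-\eps')^{\DD}\leq (1-\eps')\leq 1\leq (1+\eps')\leq (1+\eps')^{\DD}$ (valid because $\DD\geq 1$ in the recursive step), I may loosen the multiplicative bracket on $t_\Lambda$ from $\apmp$ to $\apmp^{\DD}$, so that the sum lies in $(t_\Lambda+t_H)\apmp^{\DD}\pm 2\DD\K\Lambda/\eps'=t_G\apmp^{\DD}\pm 2\DD\K\Lambda/\eps'$. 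This is exactly $\Em{5}$.

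The only genuinely delicate aspect is the composition of a multiplicative error with an additive one when substituting $\Em{3}$ into the rescaled $\Em{2}$; the bound $(1+\eps')^{\DD}\leq 2$ is precisely what prevents the additive error from blowing up across the recursion, so I expect this to be the main point to justify carefully. Everything else is interval arithmetic that follows from the partition $t_G=t_\Lambda+t_H$ and the monotonicity $(1-\eps')^{\DD}\leq 1\pm\eps'\leq (1+\eps')^{\DD}$.
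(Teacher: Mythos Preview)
Your proposal is correct and follows essentially the same approach as the paper: both decompose the claim into the two inclusions $\Em{2}\cap\Em{3}\subseteq\Em{4}$ and $\Em{1}\cap\Em{4}\subseteq\Em{5}$, carry out the same interval arithmetic, and rely on the bound $(1+\eps')^{\DDF}\leq 2$ (the paper uses the slightly sharper $(1\pm\eps')^{\DDF}\subseteq[1/2,3/2]$, but both yield the same $2\K\Lambda/\eps'$ additive term). The only cosmetic difference is the order in which the two sub-inclusions are presented.
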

\begin{proof}
    We first show $\Em{1}\cap\Em{4}\subseteq \Em{5}$.
    \begin{align*}
        \haT_G & \overset{def}{=}\quad \haT_\Lambda +\haT_F/p^h                                                                                                             \\
               & \hspace{-4pt}\overset{\Em{1}\cap\Em{4}}{\subseteq}\hspace{-4pt}\quad t_\Lambda(1\pm \eps') +t_H(1 \pm \eps')^{\DD} & \pm \brak{2\DD \K\cdot \Lambda/\eps'} \\
               & \subseteq  (t_\Lambda + t_H) (1\pm \eps')^{\DD}                                                                    & \pm \brak{2\DD \K\cdot \Lambda/\eps'} \\
               & =  t_G (1\pm \eps')^{\DD}                                                                                          & \pm \brak{2\DD \K\cdot \Lambda/\eps'}
    \end{align*}

    We next show that $\Em{2}\cap\Em{3}\subseteq \Em{4}$.
    \begin{align*}
        \dfrac{\haT_F}{p^h}
         & \overset{\Em{2}}{=}\;
        \dfrac{t_F\apmp^{\DDF}\pm \brak{2\DDF\K p^h \Lambda/\eps'}}{p^h}                                                                                          \\
         & = \dfrac{t_F\apmp^{\DDF}}{p^h}                                                                                       & \pm\brak{2\DDF\K \Lambda/\eps'} \\
         & \overset{\Em{3}}{=}\; \sbrak{t_H \apmp\pm \K \cdot \Lambda/\eps'}\apmp^{\DDF}                                        & \pm\brak{2\DDF\K \Lambda/\eps'} \\
         & \overset{(\star)}{\subseteq} t_H\apmp^{\DD} \pm \brak{2\K\Lambda/\eps'}                                              & \pm\brak{2\DDF\K \Lambda/\eps'} \\
         & \subseteq t_H\apmp^{\DD}                                                      \pm\brak{2\DD\cdot\K \Lambda/\eps'}\;.
    \end{align*}
    The $(\star)$ transition is true because we chose sufficiently small $\eps'$ such that $(1\pm\eps')^{\DDF}\subseteq \apm\subseteq [-1/2,3/2]$.
    This completes the proof of \Cref{claim2:all events}.
\end{proof}

As explained, \Cref{claim2:all events} shows that in order to prove \Cref{lemma2:ih} it is sufficient to prove that
\begin{align}
    \Pr{\Ebb\cap \Em{1}\cap \Em{2}\cap \Em{3}} \geq 1-\frac{4h\DD}{\K p^h}\;.\label{eq:event}
\end{align}
We split the proof into two parts.
We first show that $\Pr{\Ebb\cap \Em{1}\cap \Em{2}}\geq 1-\frac{4h\DDF + 2}{\K p^h}$, which is the simpler task, and then we show that $\Pr{\Em{3}}\geq 1-2h/(\K\cdot p^h)$. 
\begin{claim}\label{claim:w3}
    $\Pr{\Em{2}\mid \Ebb}\geq 1-\frac{4h\DDF}{\K p^h}$
\end{claim}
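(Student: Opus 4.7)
The plan is to reduce \Cref{claim:w3} to a single application of the induction hypothesis on the recursive call $\algCore{F,\Lambda p^h}$. The event $\Em{2}$ is exactly the event that this recursive call returns a good approximation of $t_F$ at depth $\DDF$, so it suffices to verify that the precondition of the IH holds for the recursive invocation.

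First, I would establish, deterministically under $\Ebb$, that $\tau_F<\Lambda$. This follows because $\Ebb$ guarantees that $\FindHalg(G,\Lambda)$ returned a valid $V_\Lambda$, so every vertex left in $H=G[V-V_\Lambda]$ is $\Lambda$-light in $G$; hence $t_H(v)\leq t_G(v)<\Lambda$ for each $v\in H$, and since $F\subseteq H$ we have $\tau_F\leq \tau_H<\Lambda$.

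Second, I would verify the precondition of the IH for the recursive call, namely $\Lambda p^h\geq \tau_F\cdot \eps^2/(8Q)$. Using $\tau_F<\Lambda$ from the previous step, this reduces to the parameter inequality $p^h\geq \eps^2/(8Q)$, which is immediate from the choices $p=1/2$, $Q=\Qb$, $\eps\leq 1/2$, and constant $h$ (both sides are inverse polylogarithmic, and the left side dominates for sufficiently large $n$).

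Third, I would directly apply the induction hypothesis to $\algCore{F,\Lambda p^h}$, whose depth is $\DDF$. The IH then yields that the recursive call's output $\haT_F$ satisfies $\haT_F=t_F\apmp^{\DDF}\pm 2\DDF\K\cdot(\Lambda p^h)/\eps'$ with probability at least $1-\tfrac{4h\DDF}{\K p^h}$, which is exactly the desired bound $\Pr{\Em{2}\mid \Ebb}\geq 1-\tfrac{4h\DDF}{\K p^h}$.

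The main obstacle is the conditioning bookkeeping: the IH is phrased as an unconditional statement about a fixed graph, whereas here $F$ is random and $\Ebb$ is a global event over the entire recursion tree. The clean resolution is to note that, given the outcome of the $\FindHalg$ call at the current level, the realization of $F=H[p]$ and the coin flips of the recursive subtree (including the deeper-level $\FindHalg$ calls covered by $\Ebb$) are independent; hence the IH can be applied pointwise to each realization of $F$ with $\tau_F<\Lambda$ and then aggregated.
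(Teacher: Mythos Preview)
Your proposal is correct and follows essentially the same route as the paper: verify the precondition $\Lambda p^h\geq \tau_F\cdot\eps^2/(8Q)$ by observing that under $\Ebb$ the graph $F\subseteq H$ contains no $\Lambda$-heavy vertices (so $\tau_F\leq\Lambda$), combine this with the parameter inequality $p^h\geq \eps^2/(8Q)$, and then invoke the induction hypothesis on the recursive call. Your explicit discussion of the conditioning bookkeeping is a useful elaboration that the paper leaves implicit, but the underlying argument is the same.
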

\begin{proof}
    This is the induction hypothesis. But to use it, we need to make sure that $\Lambda\cdot p^h\geq \tau_F\cdot \frac{\eps^2}{8Q}$.
    This follows by the fact that $F\subseteq H\subseteq G$, where $F$ does not have $\Lambda$-heavy vertices (Assuming $\Ebb$ occurs).
    Therefore, $\tau_F\leq \Lambda$ which means $\tau_F \cdot \frac{\eps^2}{8Q} \leq \Lambda\cdot p^h$, and therefore we can use the induction hypothesis.
\end{proof}
\begin{claim}\label{claim:y17}
    $\Pr{\Ebb\cap \Em{1}\cap \Em{2}}\geq 1-\frac{4h\DDF + 2}{\K p^h}$.
\end{claim}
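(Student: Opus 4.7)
The plan is to bound $\Pr{\overline{\Ebb\cap \Em{1}\cap \Em{2}}}$ by a union bound over complements. I will write
\[\overline{\Ebb\cap \Em{1}\cap \Em{2}}\subseteq \overline{\Ebb}\cup \brak{\Ebb\cap \overline{\Em{1}}}\cup \brak{\Ebb\cap \overline{\Em{2}}},\]
and separately control the three probabilities on the right. Two of them are already handled by previously stated results in the section; the only term that requires fresh work is the one involving $\Em{1}$, where the $\Cialg$ black box enters.

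For $\Pr{\overline{\Ebb}}$, I would invoke \Cref{claim:event ebb} to obtain the bound $\log^4 n/n^4$. For $\Pr{\Ebb\cap \overline{\Em{2}}}$, I would apply \Cref{claim:w3}, which gives $\Pr{\overline{\Em{2}}\mid \Ebb}\leq 4h\DDF/(\K p^h)$; note that the application of the induction hypothesis underlying that claim is valid in the present setting because, assuming $\Ebb$, the random subgraph $F\subseteq H$ contains no $\Lambda$-heavy vertex, so $\tau_F\leq \Lambda$, and hence the scaled threshold $\Lambda\cdot p^h$ meets the precondition $\Lambda\cdot p^h\geq \tau_F\cdot \eps^2/(8Q)$ required by \Cref{lemma2:ih}. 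Since $\Pr{\Ebb}\leq 1$, this yields $\Pr{\Ebb\cap \overline{\Em{2}}}\leq 4h\DDF/(\K p^h)$.

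For the middle term $\Pr{\Ebb\cap \overline{\Em{1}}}$, the key ingredient is \Cref{claim:event count heavy single}, which says that conditional on $\Ebb$ the precondition $\bigwedge_{v\in V_\Lambda}t_G(v)\in [a_\Lambda,b_\Lambda]$ of the $\Cialg$ black box (for the chosen $a_\Lambda=\Lambda/\clog$ and $b_\Lambda=\Lambda\cdot 8Q/\eps^2$) is automatically satisfied. Hence the success guarantee of the black box gives $\Pr{\overline{\Em{1}}\mid \Ebb}\leq 1/n^4$, so $\Pr{\Ebb\cap \overline{\Em{1}}}\leq 1/n^4$. Summing the three bounds produces
\[\Pr{\overline{\Ebb\cap \Em{1}\cap \Em{2}}}\leq \frac{4h\DDF}{\K p^h}+\frac{\log^4 n}{n^4}+\frac{1}{n^4},\]
and the remaining step is the routine arithmetic check that the two $1/n^4$-scale tails sum to at most $2/(\K p^h)$. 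This is immediate because $\K p^h=\log^2(n)/(16\cdot 2^h)$ grows only polylogarithmically in $n$ while the other two terms decay polynomially. I do not anticipate a serious obstacle here; the only care needed is in correctly tracking which conditioning supplies which black-box precondition, which is essentially the content of \Cref{claim:event count heavy single}.
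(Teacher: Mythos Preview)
Your proposal is correct and follows essentially the same approach as the paper: a union bound over the complements of $\Ebb$, $\Em{1}$, and $\Em{2}$, invoking \Cref{claim:w3} for the $\Em{2}$ term, \Cref{claim:event count heavy single} together with the $\Cialg$ guarantee for the $\Em{1}$ term, and the $\FindHalg$ guarantee for $\Ebb$, then absorbing the $1/n^4$-scale errors into $2/(\K p^h)$ using $\K=o(n)$. If anything, your treatment of $\Em{1}$ is slightly more careful than the paper's, since you explicitly condition on $\Ebb$ before invoking the black-box guarantee (the precondition $t_G(v)\in[a_\Lambda,b_\Lambda]$ is only ensured once $\Ebb$ holds), whereas the paper states $\Pr{\Em{1}}\geq 1-1/n^2$ directly. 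One small remark: your citation of \Cref{claim:event ebb} for bounding $\Pr{\overline{\Ebb}}$ is slightly off, since that claim is phrased for the full doubling algorithm, while here $\Ebb$ refers to the success of the single $\FindHalg$ call at this level of the recursion; the direct black-box guarantee already gives the stronger bound $1/n^4$, but either bound suffices for the final arithmetic.
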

\begin{proof}[Proof of \Cref{claim:y17}]
    It holds that $\Pr{\Ebb}\geq 1-1/n^2$ by the promise of the $\FindHalg$ black box. Further, $\Pr{\Em{1}}\geq 1-1/n^2$ by \Cref{claim:event count heavy single}, and the guarantee of $\Cialg$. By \Cref{claim:w3}, we have that $\Pr{\Em{2}\mid \Ebb}\geq 1-\frac{4h\DDF}{\K p^h}$.
    Therefore, using the fact that $\K =o(n)$ and a union bound, we get that $\Pr{\Ebb\cap \Em{1}\cap \Em{2}}\geq 1-\frac{4h \DDF +2 }{\K p^h}$.
\end{proof}

We are left with proving that $\Pr{\Em{3}}\geq 1-\frac{4h-2}{\K p^h}$, and then by a union bound %
we get \Cref{eq:event} (since $\DDF+1=\DD$).
\begin{claim}\label{cor2:e3}
    $\Pr{\Em{3}}\geq 1-\frac{h+1}{\K p^h}$.
\end{claim}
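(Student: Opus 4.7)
The strategy is a standard second-moment argument on $t_F$, leveraging that $H = G[V - V_\Lambda]$ contains no $\Lambda$-heavy vertex (under the implicit conditioning on $\Ebb$ announced by the authors). Write $t_F = \sum_C X_C$, where the sum is over copies $C$ of $\FH$ in $H$ and $X_C = \mathbbm{1}[V(C) \subseteq V(F)]$. Since $F$ keeps each vertex of $H$ independently with probability $p$ and $|V(C)| = h$, we have $\Exp{X_C} = p^h$, so $\Exp{t_F} = p^h t_H$ and $\Exp{t_F/p^h} = t_H$. The event $\Em{3}$ is precisely $|t_F/p^h - t_H| \leq \eps' t_H + \K\Lambda/\eps'$, which I will control via Chebyshev's inequality (\Cref{ineq:chev}).

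For the variance, expand $\Var{t_F} = \sum_{C_1,C_2} \mathrm{Cov}(X_{C_1}, X_{C_2})$ and observe that $\mathrm{Cov}(X_{C_1}, X_{C_2}) \leq p^{2h-j}$ where $j = |V(C_1) \cap V(C_2)|$, with no contribution when $j = 0$. Letting $N_j$ denote the number of ordered pairs with intersection of size exactly $j$, and using $p^{2h-j} \leq p^h$ for $1 \leq j \leq h$ (since $p \leq 1$), I get $\Var{t_F} \leq p^h \sum_{j=1}^h N_j$. The key combinatorial step is the double counting identity $\sum_v t_H(v)^2 = \sum_{j \geq 1} j\cdot N_j \geq \sum_{j\geq 1} N_j$, obtained by counting triples $(v, C_1, C_2)$ with $v \in V(C_1) \cap V(C_2)$ in two ways. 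Here I invoke the $\FindHalg$ guarantee: every $v \in V(H)$ satisfies $t_H(v) \leq \Lambda$, so $\sum_v t_H(v)^2 \leq \Lambda \sum_v t_H(v) = h \Lambda t_H$. Putting it together, $\Var{t_F} \leq h\Lambda t_H p^h$.

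Finally, $\Em{3}$ fails iff $|t_F - p^h t_H| > p^h(\eps' t_H + \K\Lambda/\eps')$; AM-GM gives $(\eps' t_H + \K\Lambda/\eps')^2 \geq 4 \K\Lambda t_H$, so Chebyshev yields
\[
\Pr{\Em{3}^c} \;\leq\; \frac{\Var{t_F}}{p^{2h}(\eps' t_H + \K\Lambda/\eps')^2} \;\leq\; \frac{h\,\Lambda\, t_H\, p^h}{4\, p^{2h}\, \K\, \Lambda\, t_H} \;=\; \frac{h}{4\K p^h} \;\leq\; \frac{h+1}{\K p^h},
\]
as required. The main obstacle is the variance bound: a naive expansion produces binomial factors $\ch{h}{j}$ that blow up the constant to something like $(3/2)^h$, which is too large to match the claimed linear-in-$h$ bound. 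The crucial trick is to avoid the term-by-term bound and instead use the single identity $\sum_v t_H(v)^2 = \sum_j j N_j$, which together with $t_H(v) \leq \Lambda$ produces exactly the $h$ factor needed. Everything else (Chebyshev, AM-GM) is routine.
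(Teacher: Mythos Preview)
Your proof is correct and follows essentially the same second-moment/Chebyshev approach as the paper. The only cosmetic difference is in the variance bookkeeping: the paper bounds the number of ordered pairs of copies sharing exactly $2h-i$ vertices individually by $h\,\tau_H\,t_H$ (via an encoding argument) and then sums a geometric series in $p$, arriving at $\Var{t_F}\le (1+2h)\,\tau_H\,t_H\,p^h$; you instead use the aggregate identity $\sum_v t_H(v)^2=\sum_{j\ge 1} j\,N_j$ to bound $\sum_{j\ge 1} N_j\le h\Lambda t_H$ in one shot, giving the slightly tighter $\Var{t_F}\le h\,\Lambda\,t_H\,p^h$. Both arguments rest on the same combinatorial fact (pairs of intersecting copies are at most $h\,\tau_H\,t_H$) and both finish with the same Chebyshev step, so the approaches are the same in substance.
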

To prove \Cref{cor2:e3}, we first bound $\Exp{(t_F)^2}$, i.e., the second moment of the number of copies of $\FH$ in the induced random subgraph $F$. We then use Chebyshev's inequality.
Recall that we say that a vertex $v$ is $\Lambda$-heavy over a graph $G$ if it participates in at least $\Lambda$ copies of $\FH$ in $G$, and otherwise, we call it $\Lambda$-light.
We bound $\Exp{(t_F)^2}$ in terms of the number of triangles in $H$, denoted by $t_H$ (where $t_H\leq t_G$), and $\tau_H$, where $\tau_H$ is the smallest integer such that all vertices in $H$ are $(\tau_H+1)$-light.%
For brevity, we use $\tau=\tau_H$.
Note that every vertex in $H$ is $\Lambda$-light, and therefore $\tau\leq \Lambda$.

\begin{definition}
    For $i\in [h+1,2h]$, let $\FH^i$ denote the number of pairs of copies of $\FH$ in $H$ which share exactly $2h-i$ vertices, or equivalently, the union of their vertex sets contains exactly $i$ vertices.
\end{definition}
Note that $\Exp{(t_F)^2} = t_H\cdot p^h +\sum_{i= h+1}^{2h}\FH^i\cdot  p^i$.
The next claim will help us bound the term $\FH^i$.
\begin{claim}\label{claim:H share i vertices}
    For every
    $i\in[h+1,2h-1]$, it holds that $\FH^i\leq h\cdot \tau_H \cdot t_H.$
\end{claim}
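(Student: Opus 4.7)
My plan is to bound $\FH^i$ by a double-counting argument via triples $(C_1, v, C_2)$ where $C_1$ and $C_2$ are copies of $\FH$ in $H$ with $\abs{V(C_1) \cap V(C_2)} = 2h-i$ and $v \in V(C_1) \cap V(C_2)$. The key observation is that since $i \le 2h-1$, any such pair $(C_1, C_2)$ has $\abs{V(C_1) \cap V(C_2)} = 2h-i \ge 1$, so every pair counted by $\FH^i$ gives rise to \emph{at least one} such triple. Hence $\FH^i$ is bounded above by the total number of such triples.

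To count the triples, I would fix an ordering: first choose $C_1$, then $v$, then $C_2$. The number of choices for $C_1$ is at most $t_H$ by definition. Given $C_1$, the vertex $v$ must lie in $V(C_1)$, which has exactly $h$ elements, so there are at most $h$ choices for $v$. Finally, given $v$, the copy $C_2$ must contain $v$; by the definition of $\tau_H$, every vertex of $H$ participates in at most $\tau_H$ copies of $\FH$, so there are at most $\tau_H$ choices for $C_2$. Multiplying yields at most $t_H \cdot h \cdot \tau_H$ triples, hence $\FH^i \le h \cdot \tau_H \cdot t_H$, as claimed.

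There is no real obstacle here; the one thing to watch is the direction of the inequality: the map (triple) $\mapsto$ (pair) is surjective onto pairs counted by $\FH^i$ but not injective (a pair sharing $2h - i \ge 2$ vertices corresponds to several triples). Since we only need an upper bound, this over-counting is harmless. Note as well that the argument genuinely needs $i \le 2h-1$: for $i = 2h$ the two copies are vertex-disjoint and no shared vertex $v$ exists, which is why that case is treated separately (and trivially bounded by $t_H^2 \cdot p^{2h}$) in the second-moment computation above.
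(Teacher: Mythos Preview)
Your proof is correct and is essentially the same argument as the paper's: both exploit that any pair $(C_1,C_2)$ counted by $\FH^i$ shares at least one vertex (since $i\le 2h-1$), then bound via the factors $t_H$ (choice of $C_1$), $h$ (choice of a vertex in $C_1$), and $\tau_H$ (choice of $C_2$ through that vertex). The only cosmetic difference is that the paper fixes a \emph{canonical} shared vertex (the one of minimal index) to obtain an injective encoding $(C_1,C_2)\mapsto (C_1,j,r)\in [t_H]\times[h]\times[\tau_H]$, whereas you allow any shared vertex and observe that the resulting over-count only helps the upper bound; the underlying idea is identical.
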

\begin{proof}[Proof of \Cref{claim:H share i vertices}]
    Let $(C_1,C_2)$ be two copies of $\FH$ sharing $2h-i$ vertices, where $i\in [h+1,2h-1]$.
    Let $v_0$ be the vertex with minimal index among the vertices that $C_1$ and $C_2$ share.
    We can encode $(C_1,C_2)$ using $(C_1,j,r)$ where $j\in[h]$ and $r\in [\tau_H]$. To see this, choose $j$ such that $v_0$ is the $j$-th vertex in $C_1$, and let $r$ indicate that $C_2$ is the $r$-th copy of $\FH$ in which $v_0$ participates.
    We need only $[\tau_H]$ indices for $r$ as every vertex participates in at most $\tau_H$ copies of $\FH$.
\end{proof}
\begin{proof}[Proof of \Cref{cor2:e3}]
    First note that $\Exp{t_F}=t_H\cdot p^h$, as each copy of $\FH$ in $H$ will also appear in $F$ if all of its vertices where sampled into $F$, which occurs with probability $p^h$.
    We next bound $\Exp{(t_F)^2}$. By \Cref{claim:H share i vertices} we get
    \begin{align*}
        \Exp{(t_F)^2}
         & = t_Hp^h +\sum_{i= h+1}^{2h}\FH^i\cdot  p^i
         &                                                            & \leq  t_H p^{h} + h \tau_H \cdot t_H\sum_{i= h+1}^{2h-1} p^i + (t_Hp^{h})^2 \\
         & \leq t_Hp^{h}\brak{1+h\tau_H\sum_{i=1}^{2h-1}p^i + t_Hp^h}
         &                                                            & \leq t_Hp^{h}\brak{1+2h\tau_H + t_Hp^h}\;.
    \end{align*}
    We also get
    \begin{align*}
        \Var{t_F}=\Exp{(t_F)^2}-\Exp{t_F}^2\leq t_Hp^{h}\brak{1+2h\tau_H + t_Hp^h} - (t_Hp^h)^2\leq t_Hp^{h}\brak{1+2h\tau_H}\leq t_Hp^{h}\brak{1+2h}\tau_H\;.
    \end{align*}
    Next, we use Chebyshev's inequality:
    $
        \Pr{\abs{Y-\Exp{Y}}\geq \psi}\leq \frac{\Var{Y}}{\psi^2}\;.
    $

    Recall that
    $\Em{3}\triangleq\set{t_F/p^h=t_{H}\apmp\pm \K \cdot \Lambda/\eps'}$.
    We can also write $\Em{3}$ as follows $\set{\abs{t_F-\Exp{t_F}}\leq \sigma}$ where $\sigma=p^h(t_H\eps' + \K\Lambda/\eps')$ as:
    \begin{align*}
        \Em{3}
         & =\set{t_F/p^h-t_{H} = \pm \brak{t_H\eps'+\K \cdot \Lambda/\eps'}}                \\
         & =\set{\abs{t_F/p^h-t_{H}} = t_H\eps'+\K \cdot \Lambda/\eps'}                     \\
         & = \set{\abs{t_F-t_{H}\cdot p^h} \leq  p^h\brak{t_H\eps'+\K \cdot \Lambda/\eps'}} \\
         & = \set{\abs{t_F-\Exp{t_F}} \leq  \sigma}
    \end{align*}
    We apply Chebyshev's inequality on $t_F$ and get that
    \begin{align*}
        \Pr{\Em{3}}
        =\Pr{\abs{t_F-\Exp{t_F}}\leq \sigma}
        =1-\Pr{\abs{t_F-\Exp{t_F}}> \sigma},
    \end{align*}
    where
    \begin{align*}
        \Pr{\abs{t_F-\Exp{t_F}}\geq \sigma}
        \leq \frac{\Var{t_F}}{\sigma^2}
        \leq \frac{t_Hp^{h}\brak{1+2h}\tau_H}{2p^{2h}\cdot t_H\K\Lambda}
        \leq \frac{1+2h}{2p^{h}\cdot \K}\leq \frac{h+1}{p^{h}\cdot \K}\;.
    \end{align*}
    The penultimate inequality uses the fact that $\Lambda\geq \tau_H$, which is true because $H$ does not have $\Lambda$-heavy vertices.
    Overall, we get  $\Pr{\Em{3}}\geq 1-\frac{h+1}{p^{h}\K}$
    which completes the proof of \Cref{cor2:e3}.
\end{proof}
This completes the proof of \Cref{lemma2:ih}, and therefore also \Cref{lemma:simp,lemma:correctness wrap}.

\section{Application: Approximating the Number of $h$-Cycles}
\label{sec:cycles}
\newcommand{\f}{\Psi_{\algCoreNP}}
\newcommand{\gh}{{1/(h-2)}}

In this section, we analyze the running time of the recursive and doubling algorithm, when the counted subgraph is an $h$-cycle for $h=\BO{1}$.
For this task, we assume the black boxes can be implemented in specific runtime, as stated in the following lemma which is proven in the next sections.
\begin{lemma}\label{lemma:assumption}
    Let $G$ be a graph with $n$ vertices, let $\FH$ be an $h$-cycle for some $h=\BO{1}$, and let $\eps'$ be some parameter. Then, 
    \begin{itemize}
        \item Each call to $\FindHalg(G,\Lambda)$ can be implemented in time $\TO{\MM{n,n,\frac{n}{\Lambda^{1/(h-2)}}}}$.
        \item Each call to $\Cialg(G,V_{\Lambda},a_{\Lambda},b_{\Lambda})$ can be implemented in time $\TO{\MM{n,n,1}\cdot \frac{b_{\Lambda}}{a_{\Lambda}\cdot \eps}}=\TO{n^2\cdot \frac{b_{\Lambda}}{a_{\Lambda}\cdot \eps}}$.
    \end{itemize}
\end{lemma}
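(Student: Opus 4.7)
The plan is to establish Lemma~\ref{lemma:assumption} by independently implementing the two black boxes and bounding each one's running time in terms of rectangular matrix multiplication. Both implementations rely on the color-coding paradigm combined with non-uniform sampling and fast matrix multiplication, and the analysis of each reduces to a concentration inequality plus a matrix-multiplication dimension count.

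For $\FindHalg(G,\Lambda)$, I would first color the vertices of $G$ with $h$ colors uniformly and independently, discard monochromatic edges and edges between non-consecutive color classes, and orient the surviving edges so that every $h$-cycle traverses the color classes in the order $V_1\to V_2\to \cdots \to V_h\to V_1$. To identify the $\Lambda$-heavy vertices in $V_1$, I would then sweep over a family of $\polylog(n)$ sampling profiles $(p_2,\ldots,p_h)$ that satisfy $\prod_j p_j = \Theta(1/\Lambda)$ and $\min_j p_j = \Omc[1/\Lambda^{1/(h-2)}]$, sampling each $V_j$ independently at rate $p_j$. For each profile I would count colorful $h$-cycles through every $v\in V_1$ by multiplying the chain of biadjacency matrices $A_{12}A_{23}\cdots A_{h-1,h}$ and pairing entrywise with $A_{h,1}$; after balancing, the dominant multiplication has shape $\MM{n,n,n/\Lambda^{1/(h-2)}}$. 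The classification claim to prove is that every $\Lambda$-heavy $v$ lies on a colorful $h$-cycle in the sampled subgraph for some profile with probability $p_{\mathrm{heavy}}=1/\polylog(n)$, whereas every $\Lambda/\clog$-light $v$ attains probability at most $p_{\mathrm{heavy}}/2$ for every profile; amplifying by $\polylog(n)$ independent repetitions and applying a Chernoff bound then separates the two cases with failure probability $1/n^4$, giving total time $\TO{\MM{n,n,n/\Lambda^{1/(h-2)}}}$.

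For $\Cialg(G,V_\Lambda,a_\Lambda,b_\Lambda)$, I would sample $s=\TO{b_\Lambda/(a_\Lambda \eps')}$ uniformly random vertices from $V_\Lambda$, estimate $t_G(v)$ for each sampled $v$, and average with scaling by $\abs{V_\Lambda}$. To estimate $t_G(v)$ without double-counting, I would partition the $h$-cycles through $v$ by the number $k\in[1,h]$ of vertices they share with $V_\Lambda$, color the remaining $h-1$ positions uniformly with $h-1$ colors (placing $v$ in its own singleton color class $h$), and for each of the $\ch{h-1}{k-1}$ choices of a $(k-1)$-subset of the first $h-1$ color classes restrict those classes to $V_\Lambda$ while restricting the complementary classes to $V\setminus V_\Lambda$. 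The number of colorful $h$-cycles through $v$ in each auxiliary graph is computed by a matrix product chain whose thinnest dimension is $1$ (the class containing only $v$), costing $\TO{n^2}$ per configuration; summing and renormalizing over $k$ and over the color-class choices yields an (approximately) unbiased estimator for $t_G(v)$, from which $t_G(V_\Lambda)$ is recovered by the averaging.

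The main obstacle lies in the two concentration arguments that tie these constructions together. For $\Cialg$, I would need a tight second-moment bound on the per-vertex estimator establishing that $\TO{b_\Lambda/(a_\Lambda\eps')}$ samples suffice for a $(1\pm\eps')$ approximation via Chebyshev's inequality (\Cref{ineq:chev}); the ratio $b_\Lambda/a_\Lambda$ is forced into the sample complexity because the per-vertex quantity $t_G(v)$ ranges over $[a_\Lambda,b_\Lambda]$, which is precisely why the lemma assumes this guarantee on $V_\Lambda$. For $\FindHalg$ the delicate step is to exhibit, for every $\Lambda$-heavy $v$, a sampling profile under which at least one colorful $h$-cycle through $v$ survives with constant probability: the intended pigeonhole partitions the $\log\Lambda$ budget among the $h-2$ interior color classes according to the combinatorial structure of the cycles through $v$, so that a dense sub-family of cycles has every coordinate simultaneously retained, and conversely that no $\Lambda/\clog$-light vertex matches this survival probability for any profile. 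Verifying these existence and non-existence statements uniformly over $v$ and the random coloring, while union-bounding over only $\polylog(n)$ profiles, is the technical heart of the plan.
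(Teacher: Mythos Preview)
Your overall plan matches the paper's: both black boxes are built from color coding, with $\FindHalg$ relying on non-uniform sampling across color classes and $\Cialg$ isolating a single vertex in its own color class so that the chain of matrix products has thinnest dimension $1$. Two points need correction.

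For $\FindHalg$, the constraint $\min_j p_j = \Omega\bigl(1/\Lambda^{1/(h-2)}\bigr)$ is too restrictive for $h>3$ and would break correctness. Consider a $\Lambda$-heavy $v$ whose colorful $h$-cycles all pass through one fixed vertex in each of $V_2,\dots,V_{h-1}$ and through $\Lambda$ distinct vertices in $V_h$: discovering $v$ forces $p_2=\cdots=p_{h-1}=1$ and hence $p_h=\Theta(1/\Lambda)\ll 1/\Lambda^{1/(h-2)}$. The paper's profile family $\Prod(\Lambda)$ accordingly lets $\min_j p_j$ range all the way down to about $1/\Lambda$. The running-time bound $\MM{n,n,n/\Lambda^{1/(h-2)}}$ holds not because the smallest class has that size, but because the cost is governed by the largest, second-largest, and smallest class sizes, and one shows (\Cref{lemma:simulate query} together with \Cref{claim:a2}) that for any profile with $\prod_j p_j\le 1/\Lambda$ the product of the two largest and the single smallest coordinate is at most $\Lambda^{-1/(h-2)}$, so the relevant rectangular product is always at most $\MM{n,n,n/\Lambda^{1/(h-2)}}$.

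For $\Cialg$, the phrase ``unbiased estimator for $t_G(v)$'' is incompatible with the rest of your plan: if the per-vertex quantity really estimates $t_G(v)$, then scaling its average by $|V_\Lambda|$ returns $\sum_{v\in V_\Lambda} t_G(v)$, which over-counts every cycle by its intersection size with $V_\Lambda$ and is not $t_G(V_\Lambda)$. What your partition by $k$ is actually set up to do (and what the paper does) is estimate $t^k(v)$ for each $k$ and use $\tfrac{1}{k}\sum_{v\in V_\Lambda} t^k(v)=t^k$; the resulting per-sample quantity $\sum_k t^k(v)/k$ has expectation $t_G(V_\Lambda)/|V_\Lambda|$ over a uniform draw of $v$, but it is not $t_G(v)$. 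Once this is stated correctly, your variance sketch goes through: $\Var{Y_v}=O(b_\Lambda^2)$ while $t_G(V_\Lambda)\ge a_\Lambda|V_\Lambda|/h$, so Chebyshev plus the median trick with $\TO{b_\Lambda/(a_\Lambda\eps)}$ repetitions suffice.
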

We will prove the following theorem.

\ThmZero* 

We denote the running time of the recursive algorithm $\algCore{G,\Lambda}$ by $\f(G,\Lambda,\eps')$, and prove that given \Cref{lemma:assumption}, it is bounded by $\TO{\MM{n,n,\frac{n}{\Lambda^{1/(h-2)}}} + \frac{n^2}{\eps^3}}$ with probability at least $1-\exp(-\log^2 n)$ (in \Cref{prop:rt bound} below).
Then, we show that the complexity of the doubling algorithm is bounded by
$\TO{\frac{1}{\eps^3}\cdot \MM{n,n,\frac{n}{t^{1/(h-2)}}}}$ with probability at least $\pnp$ (in \Cref{thm:main0}).
\begin{proposition}\label{prop:rt bound}
\sloppy{
    Let $G$ be a graph with at most $n$ vertices. Then,
    $\f\brak{G,\Lambda,\eps'}=\TO{\MM{n,n,\frac{n}{\Lambda^{1/(h-2)}}} + \frac{n^2}{\eps^3}}$, with probability at least $1-\exp(-\log^2 n)$.
}
\end{proposition}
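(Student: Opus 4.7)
The plan is to analyze $\f(G,\Lambda,\eps')$ by walking down the recursion path of $\algCoreNP$. Since each invocation produces exactly one recursive subcall, the recursion is a chain of depth at most $\DD[\Lambda] \le \log n + 1$ (see the remark after \Cref{lemma:simp}). Denote the successive subproblems along the chain by $(G_0,\Lambda_0), (G_1,\Lambda_1), \ldots$, where $G_0 = G$, $\Lambda_0=\Lambda$, each $G_{j+1}$ is a $p$-random induced subgraph of $G_j - V_{\Lambda_j}$, and $\Lambda_{j+1}=\Lambda_j\cdot p^h$ with $p=1/2$. Write $n_j = |V(G_j)|$. The total cost splits into the sum of the $\FindHalg$ costs and the sum of the $\Cialg$ costs across levels; I handle these two pieces separately, after first pinning down the subgraph sizes.

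First I would control the subgraph sizes. Since each vertex of $G_j - V_{\Lambda_j}$ enters $G_{j+1}$ independently with probability $1/2$, a Chernoff bound yields $n_{j+1} \le (3/4)\cdot n_j$ with probability at least $1-\exp(-\Omega(n_j))$ whenever $n_j\ge \log^3 n$; for deeper levels (when $n_j < \log^3 n$) the trivial bound $n_j \le n_{j-1}$ suffices. A union bound over the $O(\log n)$ recursion levels then gives $n_j \le \max(C\cdot n/2^j,\; C\log^3 n)$ for all $j$ simultaneously, for an absolute constant $C$, with probability at least $1 - \exp(-\log^2 n)$ -- exactly matching the probability in the proposition statement. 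The rest of the analysis is deterministic, conditioned on this "good event''.

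Second, I would sum the $\Cialg$ costs. By \Cref{lemma:assumption}, level $j$ costs $\TO{n_j^2 \cdot b_{\Lambda_j}/(a_{\Lambda_j}\eps)}$. As discussed after \Cref{claim:event count heavy single}, $a_{\Lambda_j}=\Lambda_j/\clog$ while $b_{\Lambda_j}$ is either $\Lambda_j\cdot 8Q/\eps^2$ (initial call) or $\Lambda_j/p^h$ (recursive calls), so the ratio $b_{\Lambda_j}/a_{\Lambda_j}$ is at most $\polylog{n}/\eps^2$. Hence each level contributes $\TO{n_j^2/\eps^3}$, and the geometric series $\sum_j (n/2^j)^2$ yields the $\TO{n^2/\eps^3}$ term in the claimed bound.

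Third -- and this is the main obstacle -- I would bound the sum of the $\FindHalg$ costs. By \Cref{lemma:assumption}, level $j$ contributes $\TO{\MM{n_j,n_j,n_j/\Lambda_j^{1/(h-2)}}}$. Substituting $n_j = n/2^j$ and $\Lambda_j = \Lambda/2^{jh}$, the third dimension becomes $(n/\Lambda^{1/(h-2)})\cdot 2^{2j/(h-2)}$: it \emph{grows} with $j$, while the first two dimensions shrink, so simple coordinatewise monotonicity against the level-$0$ cost fails. The key observation is that the recursion reaches the base case $\Lambda_j \le 1$ at depth $j^\star \approx (\log \Lambda)/h$, which is precisely when all three dimensions coincide at $n/\Lambda^{1/h}$. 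Using the piecewise-linear upper bound $\omega(1,1,r)\le 2 + (r-\alpha)(\omega-2)/(1-\alpha)$ for $r \in [\alpha,1]$, together with $\omega(1,1,r)=2$ for $r\le\alpha$, a direct calculation shows that the exponent $(1 - j/\log n)\cdot\omega(1,1,r_j)$ at level $j$ is at most the level-$0$ exponent $\omega(1,1,\log_n(n/\Lambda^{1/(h-2)}))$ throughout $j\in[0,j^\star]$. Thus every level's matrix-multiplication cost is at most $\MM{n,n,n/\Lambda^{1/(h-2)}}$, and summing the $O(\log n)$ levels (absorbed into $\TO{\cdot}$) completes the proof.
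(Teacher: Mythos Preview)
Your overall decomposition---a linear recursion chain, Chernoff control on the $n_j$, and separate tallies for the $\Cialg$ and $\FindHalg$ costs---matches the paper's proof, and the $\Cialg$ sum is handled correctly.

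The gap is in your third step. You assert that $(1-j/\log n)\cdot\omega(1,1,r_j)\le\omega(1,1,r_0)$, which is exactly the inequality needed, and it is in fact true. But the justification you offer cannot establish it: the piecewise-linear bound $\omega(1,1,r)\le f(r):=2+(r-\alpha)\tfrac{\omega-2}{1-\alpha}$ is only an \emph{upper} bound on the left-hand side. At best your calculation yields $(1-j/\log n)\,\omega(1,1,r_j)\le(1-j/\log n)f(r_j)\le f(r_0)$, and $f(r_0)$ can strictly exceed $\omega(1,1,r_0)$ whenever $r_0\in(\alpha,1)$. So your argument proves only the ``simplified'' bound $n^\omega/\Lambda^{(\omega-2)/((h-2)(1-\alpha))}$ from the introduction, not the sharper $\MM{n,n,n/\Lambda^{1/(h-2)}}$ the proposition actually states.

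The paper sidesteps exponent bounds entirely. It uses a tensor-rank ``balancing'' inequality (\Cref{claim:a2}): for any $p_1,p_2,p_3\in[0,1]$ one has $\MM{np_1,np_2,np_3}\le\MM{n,n,n\,p_1p_2p_3}$. At level $k$ take $p_1=p_2=p^k$ and $p_3=p^k/\Lambda_k^{1/(h-2)}$ (all in $[0,1]$ since $\Lambda_k\ge 1$); the product equals $p^{2k(h-3)/(h-2)}/\Lambda^{1/(h-2)}\le 1/\Lambda^{1/(h-2)}$ for every $h\ge 3$, and monotonicity of $\MM{n,n,\cdot}$ then gives the level-$0$ bound directly (this is \Cref{claim:z mm}). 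Once you replace the piecewise-linear argument by an appeal to \Cref{claim:a2}, your proof goes through.
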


\begin{proof}[Proof of \Cref{thm:main0} using \Cref{prop:rt bound}]
    Consider the $\algWrap{G,\eps}$. Its correctness follows from \Cref{lemma:correctness wrap}. We analyze its running time given \Cref{lemma:assumption}.
    Let $\Psi_{\algWrapNP}(G,\eps)$ denote the running time of $\algWrap{G,\eps}$.
    The doubling algorithm makes a sequence of $y=\TO{1}$ calls to the recursive algorithm, with the parameters $\set{(G,\Lambda/2^i, \eps')}_{i=0}^y$.
    By \Cref{conclusion:min thint} we have that for every $i$, $\Lambda/2^i\geq t_G\cdot\frac{ \eps^2}{8Q}$ with probability at least $1-\frac{h\log n}{n^4}$. This allows us to bound the running time of the doubling algorithm as follows:
    \begin{align*}
        \Psi_{\algWrapNP}(G,\eps)=
          & \sum_{i=0}^{y}\f\brak{G,\frac{\Lambda}{2^i},\eps'}                                                      \\
        = & \sum_{i=0}^{y}\TO{\MM{n,n,\frac{n}{(\Lambda/2^i)^{1/(h-2)}}} + \frac{n^2}{\eps^3}}                      \\
        = & \sum_{i=0}^{y}\TO{\MM{n,n,\frac{n}{\brak{\frac{t\cdot {\eps}^2}{8Q}}^{1/(h-2)}}}  + \frac{n^2}{\eps^3}} \\
          & = \TO{\frac{1}{\eps^3}\cdot \MM{n,n,\frac{n}{t^{1/(h-2)}}}}
    \end{align*}
    The first equality follows from the definition of $\Psi_{\algWrapNP}(G,\eps)$.
    The second equality follows by \Cref{prop:rt bound}.
    The third equality follows because $\MM{n,n,x}\leq \MM{n,n,y}$ for $x\leq y$, and the fact that $\Lambda/2^y\geq t_G\cdot\frac{ \eps^2}{8Q}$ by \Cref{conclusion:min thint}.
    The last equality follows because for every $x\geq 1$, we have $\MM{n,n,x}\geq n^{2}$, as only reading the input takes $\BO{n^2}$ time. 
    Therefore, $\MM{n,n,x} + n^2/\eps^3\leq \MM{n,n,x}/\eps^3$.

    Therefore, the running time of the doubling algorithm is $\TO{\frac{1}{\eps^3}\cdot \MM{n,n,\frac{n}{t^{1/(h-2)}}}}$ with probability at least $1-\brak{\frac{h\log n}{n^4} + \frac{\TO{1}}{n^{\log n}}}\geq \pnp$, which completes the proof.
\end{proof}

For the rest of this section, we prove \Cref{prop:rt bound}.
We unroll the recursion of the algorithm:
it makes a single call to $\FindHalg$, then a single call to $\Cialg$, and then a recursive call.
The algorithm's runtime can be analyzed by bounding $\set{\FindHalg(G_k,\Lambda_k)}_{k=0}^r$, and $\set{\Cialg(G_k,V_{\Lambda_k},a_{\Lambda_k},b_{\Lambda_k})}_{k=0}^r$,
where $r$ is the recursion depth and $G_k$ denotes the input graph for the $k$-th call of the recursive algorithm, where $G_0=G$.
Let $\Lambda_k=\Lambda_0\cdot p^{kh}$ denote the heaviness threshold, and let $V_{\Lambda_k}$ denote the set of $\Lambda_k$-heavy vertices in the $k$-th iteration.
We bound the running time of the $k$-th call to each of the black boxes.

The following claim address the total running time of all calls to the $\Cialg$ black-box.
\begin{claim}\label{conclusion:z1}
    The total running time for all calls to  the $\Cialg$ black-box is  $\TO{n^2/\eps^3}$.
\end{claim}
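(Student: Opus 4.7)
The plan is to apply the per-call runtime bound from \Cref{lemma:assumption} and show that the ratio $b_{\Lambda_k}/a_{\Lambda_k}$ stays polylogarithmic (modulo $\eps^{-2}$) at every level of the recursion, then sum over the $\BO{\log n}$ levels.

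First I would recall from \Cref{alg:template rec} that at each recursive level $k$ we set $a_{\Lambda_k} = \Lambda_k/\clog$, and from the discussion in \Cref{ssec:params} that the upper bound is $b_{\Lambda_0} = \Lambda_0\cdot 8Q/\eps^2$ for the initial call and $b_{\Lambda_k} = \Lambda_{k-1} = \Lambda_k/p^h$ for every recursive call $k\geq 1$ (since the preceding application of $\FindHalg$ removed all $\Lambda_{k-1}$-heavy vertices from the graph). Combining these with $Q = \Qb$, for every $k\geq 0$ we obtain
\begin{align*}
\frac{b_{\Lambda_k}}{a_{\Lambda_k}} \;\leq\; \frac{8Q}{\eps^2}\cdot \clog \;=\; \TO{1/\eps^2}\;.
\end{align*}

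Next I would invoke \Cref{lemma:assumption} to bound the cost of the $k$-th call to the $\Cialg$ black box by
\begin{align*}
\TO{n^2\cdot \frac{b_{\Lambda_k}}{a_{\Lambda_k}\cdot \eps}} \;=\; \TO{n^2/\eps^3}\;.
\end{align*}
Since the recursion depth is bounded by $\DD[\Lambda_0]\leq \log n + 1$ (by the remark following \Cref{lemma:simp}), summing this bound over $k=0,1,\ldots,r$ yields $(\log n + 1)\cdot \TO{n^2/\eps^3} = \TO{n^2/\eps^3}$, which absorbs the extra $\log n$ factor into the $\TO{\cdot}$ notation.

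There is essentially no obstacle here: the whole point of maintaining the invariant $b_{\Lambda_k}/a_{\Lambda_k} = \polylog{n}/\eps^2$ along the recursion, which was already established in \Cref{ssec:params} via \Cref{claim:event count heavy single} and the $\FindHalg$ guarantee, is precisely to make this summation telescope to a single $\TO{n^2/\eps^3}$ term independent of $\Lambda$ and of the depth.
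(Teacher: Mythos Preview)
Your proof is correct and follows essentially the same approach as the paper: bound $b_{\Lambda_k}/a_{\Lambda_k}=\TO{1/\eps^2}$, apply \Cref{lemma:assumption} to get $\TO{n^2/\eps^3}$ per call, and sum over the $\BO{\log n}$ levels. One minor discrepancy: in the actual algorithm (\Cref{line:l4}) the upper bound is set uniformly to $b_{\Lambda_k}=\Lambda_k\cdot 8Q/\eps^2$ at \emph{every} level, not to $\Lambda_k/p^h$ for $k\geq 1$ as you write; the discussion in \Cref{ssec:params} only explains why this uniform choice is valid. This does not affect your argument, since either choice yields the same $\TO{1/\eps^2}$ ratio.
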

\begin{proof}
    Fix some $k$.
    By \Cref{lemma:assumption}, the call $\Cialg(G_k,V_{\Lambda_k},a_{\Lambda_k},b_{\Lambda_k})$ takes $\TO{n^2\cdot \frac{b_{\Lambda_k}}{a_{\Lambda_k} \cdot \eps}}$ time, as $G_k$ has at most $n$ vertices, and $\frac{b_{\Lambda_k}}{a_{\Lambda_k}}=\TO{1/{\eps}^2}$, as 
    $a_\Lambda= \frac{\Lambda}{\clog}\;,b_\Lambda= \Lambda\cdot \frac{8Q}{\eps^2}$, where $Q=\Qb$ (See \Cref{line:l4}).
We conclude that the total running time of all of the calls
$\set{\Cialg(G_k,V_{\Lambda_k},a_{\Lambda_k},b_{\Lambda_k})}_{k=0}^r$ takes at most $r$ times the runtime of a single call. As $r\leq \log n$, we get that all calls take a total of $\TO{n^2/\eps^3}$ time.
\end{proof}

\medskip\noindent Next, we bound the running time of the $k$-th call to the $\FindHalg$ black-box.
Note that for any $k$, the call $\FindHalg(G_k,\Lambda_k)$ takes $\TO{\MM{\abs{V(G_k)},\abs{V(G_k)},\abs{V(G_k)}/\Lambda_k^\gh}}$ by \Cref{lemma:assumption}.

\smallskip\noindent We first provide a concentration bound on the number of vertices in $G_k$.
\begin{claim}\label{claim:Gk vertices}
    For any $k$, the following hold: 
    \begin{enumerate}
        \item If $np^k\geq \log^3 n$ then $\Pr{\abs{V(G_k)}\leq 2np^k}\geq 1-\exp(-\log^2 n)$.
        \item If $np^k\leq \log^3 n$ then $\Pr{\abs{V(G_k)}\leq \log^8 n}\geq 1-\exp(-\log^2 n)$.
    \end{enumerate}
\end{claim}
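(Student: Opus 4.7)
The plan is to dominate $|V(G_k)|$ by a simple binomial random variable and then apply Chernoff. The first step is a coupling argument. By definition, a vertex $v \in V(G_0) = V(G)$ survives into $V(G_k)$ only if, at every level $j = 0, \ldots, k-1$, (i) $v$ was not removed as part of $V_{\Lambda_j}$ by the call to $\FindHalg$, and (ii) $v$ was included in the subsample $F$ at that level, which happens independently of everything else with probability $p$. Dropping the heavy-removal requirement (which only removes vertices) gives the stochastic domination
\begin{equation*}
    |V(G_k)| \;\preceq\; Z \;\sim\; \Bin{n}{p^k},
\end{equation*}
since the $k$ independent sampling coin flips per vertex are independent across vertices and each vertex needs all $k$ to come up heads.

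For part~(1), assume $np^k \geq \log^3 n$. Then $\Exp{Z} = np^k$, and a standard multiplicative Chernoff bound (from the reference underlying \Cref{thm:chernoff 6}) gives $\Pr{Z \geq 2\Exp{Z}} \leq \exp\bigl(-\Exp{Z}/3\bigr) \leq \exp(-\log^3 n /3) \leq \exp(-\log^2 n)$ for $n$ sufficiently large. This yields $\Pr{|V(G_k)| \leq 2np^k} \geq 1 - \exp(-\log^2 n)$ as required.

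For part~(2), assume $np^k \leq \log^3 n$. Then $\Exp{Z} \leq \log^3 n$, and for $n$ large enough we have $\log^8 n \geq 6\log^3 n \geq 6 \Exp{Z}$, so we may apply the first bound of \Cref{thm:chernoff 6} with $t = \log^8 n$ directly to $Z$ (viewed as a sum of independent $[0,1]$-valued indicators). This yields $\Pr{Z \geq \log^8 n} \leq 2^{-\log^8 n} \leq \exp(-\log^2 n)$, completing the claim by stochastic domination.

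The main conceptual step is the coupling: one must recognize that although the heavy-vertex removals at each level depend on the random graph seen at that level, they can only shrink $V(G_k)$, so replacing them with the identity map gives a clean independent-coin-flip process that is dominated by $\Bin{n}{p^k}$. Everything else is a direct application of Chernoff, with the choice of threshold ($2np^k$ vs.\ $\log^8 n$) depending on whether the expectation is large enough for a multiplicative bound or so small that an absolute bound via the $t \geq 6\Exp{X}$ regime is cleaner.
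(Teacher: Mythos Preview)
Your proof is correct and follows essentially the same approach as the paper: bound the vertex count at depth $k$ via a Chernoff-type inequality, splitting into the case where the mean $np^k$ is large (multiplicative bound) versus small (absolute threshold via the $t\geq 6\Exp{X}$ regime of \Cref{thm:chernoff 6}). If anything, your treatment is cleaner: the paper simply asserts $\Exp{X_k}=np^k$, whereas you correctly observe that heavy-vertex removals can only shrink $V(G_k)$ and make the stochastic domination $|V(G_k)|\preceq\Bin{n}{p^k}$ explicit, which is what actually justifies applying Chernoff to the binomial upper bound.
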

\begin{proof}
\renewcommand{\whp}{with probability at least $1-\exp(-\log^2 n)$, }
    Fix $k$.
    Note that $G_k$ is a random variable, and so is its number of vertices $X_k$. We have $\Exp{X_k}=np^k$.
    We use a Chernoff bound (see \Cref{thm:chernoff 6}).
    \begin{enumerate}
        \item If $np^k\geq\log^3 n$, then $\Pr{X_k> 2\Exp{X_k}}\leq 2^{-2\Exp{X_k}/6}=2^{-np^k/3}\leq \exp\brak{-\log^2 n}$.
        \item If $np^k\leq \log^3 n$, then 
        $\Pr{X_k\geq\log^8 n}\leq \Pr{X_k\geq \Exp{X_k}\cdot \log^5 n}\leq 2^{-\log^5 n/6}\leq \exp(-\log^2 n)$.
    \end{enumerate}
\end{proof}
The next claim is on matrix multiplication. It will allow us to show that the bound on the running time of the first call to the $\FindHalg$ black box
also applies to subsequent calls.
\begin{claim}\label{claim:z mm}
    For any $k\geq 0$, we have
    $
        \MM{n p^k,n p^k,n p^k/\brak{\Lambda\cdot p^{hk}}^\gh}       \leq \MM{n,n,{n/\Lambda^\gh}}
    $.
\end{claim}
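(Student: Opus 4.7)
The plan is to derive the inequality from the standard balanced-regime bound $\MM{m, m, c} = \BO{m^2 c^{\omega-2}}$ (valid for $c \leq m$), rather than from monotonicity in each dimension, which fails here because the third LHS dimension is larger than the third RHS dimension.

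First I would simplify the LHS third dimension by direct algebra:
\[
\frac{n p^k}{(\Lambda \cdot p^{hk})^{1/(h-2)}} \;=\; \frac{n}{\Lambda^{1/(h-2)}} \cdot p^{-2k/(h-2)}.
\]
This quantity is at most the first LHS dimension $np^k$ exactly when $\Lambda \cdot p^{hk} \geq 1$, which holds throughout the recursion, since $\algCoreNP$ halts as soon as its heaviness threshold drops to $1$. Hence the LHS is a balanced rectangular multiplication and the above bound applies:
\[
\MM{np^k,\, np^k,\, \tfrac{np^k}{(\Lambda p^{hk})^{1/(h-2)}}} = \BO{(np^k)^\omega \cdot (\Lambda p^{hk})^{-(\omega-2)/(h-2)}} = \BO{\tfrac{n^\omega}{\Lambda^{(\omega-2)/(h-2)}} \cdot p^{\,2k(h-\omega)/(h-2)}},
\]
using the algebraic identity $k\omega - hk(\omega-2)/(h-2) = 2k(h-\omega)/(h-2)$. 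Applying the same bound to the RHS gives $\MM{n,n,n/\Lambda^{1/(h-2)}} = \BO{n^\omega/\Lambda^{(\omega-2)/(h-2)}}$. Since $\omega \leq \omegaval < 3 \leq h$ and $p = 1/2 < 1$, the exponent $2k(h-\omega)/(h-2)$ is non-negative and the $p$-factor is at most $1$, so the LHS bound is at most the RHS bound.

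The main obstacle I expect is that the argument compares upper bounds derived from the balanced MM formula rather than the $\MM{\cdot,\cdot,\cdot}$ complexities themselves, so the inequality is obtained only up to the constants hidden in $\BO{\cdot}$. This is harmless for the overall $\TO{\cdot}$ analysis of \Cref{prop:rt bound}, where the claim is summed over at most $\log n$ recursion levels and such constant factors are absorbed into the $\TO{\cdot}$ in the final statement of \Cref{thm:main0}. A fully tight proof would argue directly with the rectangular MM exponent $\omega(a,b,c)$ via its scaling identity $\omega(ta,tb,tc) = t\,\omega(a,b,c)$, but the coarser argument above suffices for the intended application.
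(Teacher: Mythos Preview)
Your argument has a real gap, and it is not the harmless kind you describe. You show
\[
\text{LHS} \;\leq\; O\!\left(\frac{n^\omega}{\Lambda^{(\omega-2)/(h-2)}}\right),
\]
and you observe that the same block-decomposition bound holds for the RHS. But comparing two \emph{upper bounds} does not yield LHS $\leq$ RHS: you have LHS $\leq f$ and RHS $\leq f$, not LHS $\leq$ RHS. The quantity $\MM{n,n,n/\Lambda^{1/(h-2)}}$ can be strictly smaller than $n^\omega/\Lambda^{(\omega-2)/(h-2)}$ --- that gap is precisely the improvement coming from fast \emph{rectangular} matrix multiplication (e.g.\ when $n/\Lambda^{1/(h-2)} \leq n^\alpha$ the RHS is $n^{2+o(1)}$, while your block bound is larger). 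So this is not a matter of hidden constants: plugging your bound into the analysis would yield a final running time of $\TO{n^\omega/t^{(\omega-2)/(h-2)}}$ rather than the claimed $\TO{\MM{n,n/t^{1/(h-2)},n}}$, losing the main point of the paper's improvement.

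The paper instead uses a structural inequality about rectangular matrix multiplication (its \Cref{claim:a2}): for $p_1,p_2,p_3\in[0,1]$,
\[
\MM{np_1,np_2,np_3}\;\leq\;\MM{n,n,n\,p_1p_2p_3},
\]
which follows from the interpolation/convexity property $\omega(1,1,a)\geq\omega(1,1-\eps,a+\eps)$. Applying this with $p_1=p_2=p^k$ and $p_3=p^k/(\Lambda p^{hk})^{1/(h-2)}$ gives
\[
\MM{np^k,np^k,np^k/(\Lambda p^{hk})^{1/(h-2)}}\;\leq\;\MM{n,n,\tfrac{n}{\Lambda^{1/(h-2)}}\cdot p^{\,k(3-h/(h-2))}},
\]
and since $3-h/(h-2)=2(h-3)/(h-2)\geq 0$ for $h\geq 3$, the extra $p$-factor is at most $1$ and monotonicity of $\MM{n,n,\cdot}$ finishes. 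Your closing remark about arguing via $\omega(a,b,c)$ is pointing in exactly this direction; that argument is what is actually needed, and the scaling identity alone is not enough --- the key ingredient is the balancing inequality above.
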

\begin{proof}
    \begin{align*}
         & \MM{n p^k,n p^k,n p^k/\brak{\Lambda\cdot p^{hk}}^\gh}            \\
         \leq\; &\MM{n,n,\frac{n}{\Lambda^\gh} \cdot p^{k\cdot (3-h/(h-2))}} \\
         \leq\; &\MM{n,n,{n/\Lambda^\gh}}
    \end{align*}
    The first inequality follows by \Cref{claim:a2}.
    Note that $3-h/(h-2)= 2\frac{h-3}{h-2}$ is non-negative for $h\geq 3$
    Therefore, $p^{k\cdot (3-h/(h-2))}\leq 1$ and the last inequality follows.
\end{proof}

Next, we bound the running time of the $k$-th call to the $\FindHalg$ black box.
\begin{claim}\label{claim:f bound}
    Fix some $k$. The call
    $\FindHalg(G_k,\Lambda_k)$ takes at most $\BO{\MM{n,n,n/\Lambda^\gh}}$ time, with probability at least $1-\exp(-\log^2 n)$.
\end{claim}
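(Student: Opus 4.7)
The plan is to deduce the claim directly from Lemma~\ref{lemma:assumption} (which gives the runtime of a single call to $\FindHalg$ in terms of $|V(G_k)|$ and $\Lambda_k$) together with the two preceding claims, \Cref{claim:Gk vertices} (which controls $|V(G_k)|$ with high probability) and \Cref{claim:z mm} (which converts the resulting matrix multiplication expression into the target one). The argument splits on whether $np^k$ is above or below $\log^3 n$, matching the two cases of \Cref{claim:Gk vertices}.

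First I would write: by \Cref{lemma:assumption}, $\FindHalg(G_k,\Lambda_k)$ runs in time $\TO{\MM{|V(G_k)|,|V(G_k)|,|V(G_k)|/\Lambda_k^{\gh}}}$. In the case $np^k \geq \log^3 n$, \Cref{claim:Gk vertices} gives $|V(G_k)| \leq 2np^k$ with probability at least $1-\exp(-\log^2 n)$. Substituting this bound and absorbing the constant factor of $2$ into the $\BO{\cdot}$, and recalling $\Lambda_k=\Lambda\cdot p^{hk}$, the runtime is at most
\[
\BO{\MM{np^k,np^k,np^k/(\Lambda\cdot p^{hk})^{\gh}}},
\]
which by \Cref{claim:z mm} is bounded by $\BO{\MM{n,n,n/\Lambda^{\gh}}}$, as desired.

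In the other case $np^k < \log^3 n$, \Cref{claim:Gk vertices} gives $|V(G_k)| \leq \log^8 n$ with probability at least $1-\exp(-\log^2 n)$. The runtime is then at most $\TO{\MM{\log^8 n,\log^8 n,\log^8 n}}=\polylog{n}$, which is trivially dominated by $\MM{n,n,n/\Lambda^{\gh}}$ since the latter is $\geq n^2$ (merely reading two $n\times n$ matrices already costs $n^2$, and $n/\Lambda^{\gh}\geq 1$ for the regime where the claim is meaningful; if $n/\Lambda^{\gh}<1$ then the bound in the claim should be read as $\MM{n,n,1}=\BO{n^2}$, which still dominates $\polylog(n)$).

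The only real bookkeeping is to ensure the high-probability bound is uniform across the two cases: both branches of \Cref{claim:Gk vertices} already yield failure probability at most $\exp(-\log^2 n)$, so the overall failure probability is at most $\exp(-\log^2 n)$ as claimed. There is no substantive obstacle here; the proof is essentially a two-line chaining of the three prior ingredients, and the main thing to watch is that the exponent identity $3 - h/(h-2) = 2(h-3)/(h-2) \geq 0$ used inside \Cref{claim:z mm} requires $h\geq 3$, which holds by our standing assumption on the cycle length.
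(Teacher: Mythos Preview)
Your proposal is correct and follows essentially the same approach as the paper: split on whether $np^k$ is above or below $\log^3 n$, invoke \Cref{claim:Gk vertices} for the vertex bound, and in the large case apply \Cref{claim:z mm}. The only cosmetic difference is that in the small case the paper bounds the runtime by $\MM{n^{1/3},n^{1/3},n^{1/3}}\leq n^{\omega/3}\leq n$ rather than $\polylog{n}$, but either bound is trivially absorbed by $\MM{n,n,n/\Lambda^{\gh}}$.
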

\begin{proof}
    If $np^k\geq \log^3 n$, then $G_k$ has at most $2np^k$ vertices, with probability at least $1-\exp(-\log^2 n)$ by \Cref{claim:Gk vertices}.
    Then by \Cref{claim:z mm}, the running time is at most $\BO{\MM{n,n,{n/\Lambda^\gh}}}$.

    If $np^k\leq \log^3 n$, then $G_k$ has at most $\log^8 n$ vertices, with probability at least $1-\exp(-\log^2 n)$ by \Cref{claim:Gk vertices}.
    Then, the running time is at most $\BO{n}$:
    \begin{align*}
        \MM{\abs{V(G_k)},\abs{V(G_k)},\abs{V(G_k)}/\Lambda^\gh}
        \leq \MM{n^{1/3},n^{1/3},n^{1/3}}\leq n^{\omega/3}\leq n\;.
    \end{align*}
    which completes the proof.
\end{proof}
By \Cref{claim:f bound}, we conclude that the running time of each of the calls
$\set{\FindHalg(G_k,\Lambda_k)}_{k=0}^r$ is at most $r\cdot\BO{\MM{n,n,n/\Lambda^\gh}}$. As $r\leq \log n$, we get that all calls take a total of $\TO{\MM{n,n,n/\Lambda^\gh}}$ time.

Together with \Cref{conclusion:z1}, this completes the proof of \Cref{prop:rt bound}: All calls to the $\FindHalg$ black box and the $\Cialg$ black box take at most $\TO{{\MM{n,n,n/\Lambda^\gh}} + n^2/\eps^3}$ time in total.

\renewcommand{\Cialg}{{\mathsf{Count} \mhyphen\mathsf{Heavy}_{\eps}}}
\section{Implementing the Black Box $\Cialg$}\label{sec:count-heavy}
\label{sec:color code}
In this section, we implement the following black box.
\begin{blackbox}[${\Cialg(G,S,a,b)}$]%
    \begin{description}
        \item[Input:] A graph $G$, a precision parameter $\eps$, a subset of vertices $S$, and two real numbers $0<a\leq b$, such that
            $\forall v\in S$ we have $t_G(v)\in[a,b]$.
        \item[Output:]
            $\hat{t}$ which satisfies $\Pr{\hat{t}=t_G(S)(1\pm \eps)}\geq 1-\frac{1}{n^4}$.
    \end{description}
\end{blackbox}
In this section, we prove the second part of \Cref{lemma:assumption}, which is stated in the next theorem.
\begin{theorem}\label{thm:cialg main}
    \sloppy{There is an algorithm that implements the $\Cialg(G,S,a,b)$ black box, when $\FH$ is an $h$-cycles, for $h=\BO{1}$, in time $\TO{n^2\cdot \frac{b}{a\cdot\eps}}$.}
\end{theorem}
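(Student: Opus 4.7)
The plan is to build an unbiased estimator whose expectation is $t_G(S)$, whose value can be computed in $\TO{n^2}$ time per trial, and whose variance is controlled enough that $\TO{b/(a\eps)}$ independent trials suffice. The starting point is a decomposition that sidesteps double counting: for each $v \in S$ and each $1 \le k \le h$, let $Y_k(v,S)$ denote the number of $h$-cycles through $v$ using exactly $k$ vertices of $S$; then
\[
t_G(S) \;=\; \sum_{v \in S}\sum_{k=1}^{h} \frac{Y_k(v,S)}{k},
\]
because each $h$-cycle intersecting $S$ in exactly $k$ places is enumerated $k$ times in the double sum. Sampling $F \subseteq S$ by including each vertex independently with probability $q$ and outputting $\hat t = q^{-1} \sum_{v \in F}\sum_k Y_k(v,S)/k$ is therefore an unbiased estimator for $t_G(S)$; in practice we will replace each $Y_k(v,S)$ by its own unbiased color-coded approximation.

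For a fixed $v \in F$ and $k \in [h]$, I would approximate $Y_k(v,S)$ by color coding. Color each vertex of $V-\{v\}$ independently and uniformly with one of the colors $1,\dots,h-1$, assign the color $h$ to $v$, and keep only the edges whose endpoints lie in consecutive color classes in the cyclic order $1\to 2\to \dots \to h \to 1$. For each subset $T\subseteq\{1,\dots,h-1\}$ of size $k-1$, form the auxiliary graph in which vertices of color in $T$ are restricted to $S$, while vertices of the other colors (apart from $h$) are restricted to $V\setminus S$. Every colorful $h$-cycle in this graph passes through $v$ and uses exactly $k$ vertices of $S$, namely $v$ together with the $k-1$ chosen classes. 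Summing over the $\binom{h-1}{k-1}$ choices of $T$ and rescaling by the inverse colorfulness probability yields an unbiased estimator of $Y_k(v,S)$. The auxiliary graph is $h$-partite with one part equal to $\{v\}$, so counting colorful $h$-cycles through $v$ reduces to multiplying the $h-1$ adjacency matrices between consecutive color classes and reading off the $v$-to-$v$ entry; since one of the three dimensions of the final product is $1$, the cost is $\TO{\MM{n,n,1}} = \TO{n^2}$ per trial.

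For concentration I would bound the variance of a single trial by observing that its contribution never exceeds $\BO{b}$ and that its second moment is $\BO{b \cdot t_G(S)}$ up to factors depending only on $h$. Using $t_G(S) \ge a|S|/h$ together with Chebyshev's inequality, $N = \TO{b/(a\eps)}$ averaged trials give a $\apm$ approximation with constant probability; boosting to $1 - 1/n^4$ is standard via independent repetitions plus the median trick. The total running time is $N$ calls to the $\TO{n^2}$ color-coded subroutine, yielding the claimed $\TO{n^2 \cdot b/(a\eps)}$.

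The hard part is the variance bound. Two colorful $h$-cycles through $v$ can share several extra vertices, producing cross terms that a priori could blow the variance up by a factor polynomial in $b$. I would stratify pairs of $h$-cycles through $v$ by how many vertices they share and how many of those lie in $S$, bounding each stratum via the trivial bound $t_G(w) \le b$ for shared vertices $w \in S$ (supplied by the input promise) together with a uniform bound on $t_G(w)$ for $w \notin S$ coming from the recursion invariant. Careful combinatorial accounting, in the same spirit as the second-moment computation in the proof of \Cref{cor2:e3}, then gives a variance proportional to $b \cdot t_G(S)$ up to constants depending only on $h$, which is exactly what Chebyshev's inequality needs to yield the asserted $\TO{b/(a\eps)}$ trial count.
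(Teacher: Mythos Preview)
Your high-level approach matches the paper's: decompose $t_G(S)=\sum_k t^k$ by the number $k$ of $S$-vertices per cycle, sample a vertex $v$ from $S$, color-code with $v$ forced to its own color $h$, and for each $k$ restrict exactly $k-1$ of the remaining color classes to lie inside $S$ (the paper phrases this as a sum over binary strings of weight $k$ and over permutations, but it is the same construction). The paper samples a \emph{single} vertex uniformly per trial rather than a random subset with inclusion probability $q$; this makes the ``$\TO{n^2}$ per trial'' statement unambiguous, whereas in your version the per-trial cost is $|F|\cdot\TO{n^2}$ and you never fix $q$.

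The real divergence is the variance analysis, which you flag as the hard part and propose to handle by stratifying pairs of cycles through $v$ by their shared vertices. In the paper this step is trivial. For a fixed $v$ with $t_G(v)\le b$ and a random coloring $\varphi$ that gives $v$ color $h$, the number of $\varphi$-colorful $h$-cycles through $v$ is at most $t_G(v)\le b$ \emph{deterministically}; hence the rescaled per-vertex estimator $Y_v$ satisfies $0\le Y_v\le b/q$ for the constant $q=(h-1)!/(h-1)^{h-1}$, and $\Var{Y_v}\le (b/q)^2=O(b^2)$ immediately. No second-moment computation over pairs of cycles is needed, and in particular you do not need any bound on $t_G(w)$ for $w\notin S$. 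That bound is not part of the black-box contract, so invoking ``the recursion invariant'' would break the abstraction you are supposed to implement; the paper's argument stays strictly inside the interface.

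One smaller point: your claimed second-moment bound $O(b\cdot t_G(S))$ is not what single-vertex sampling gives (after scaling by $|S|$ you get $O(|S|\cdot b\cdot t_G(S))$, or $O((|S|b)^2)$ via the trivial bound above). This does not change the final trial count once you plug in $t_G(S)\ge a|S|/h$, but it is worth stating the right quantity.
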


\newcommand{\tv}{t_{\varphi}(S)}
\newcommand{\htv}{Y}
\newcommand{\ACalgC}{\mathsf{Approx}_{t^k}}
\newcommand{\ACalgB}{\mathsf{Approx}_{\Exp{t^k}}}
\newcommand{\ACalgA}{\mathsf{Approx}_{\Exp{\mathsf{vertex}\mhyphen t^k}}}
For this entire section, the graph $G$ is fixed, and the set $S$ is a fixed subset of vertices, where for every $v\in S$ we have $t_G(v)\in[a,b]$.
We emphasize that $a$ is only a lower bound on $\min_{v\in S}t_G(v)$ and $b$ is only an upper bound on $\max_{v\in S}t_G(v)$.
Denote ${N_k\triangleq\abs{S}/k}$.
Before implementing the $\Cialg$ black box, we explain why the following simple idea fails: sample a few vertices from $S$, estimate $t_G(v)$ for each one, and apply a concentration bound to compute $t_G(S)$.
Formally, this approach fails because $\sum_{v\in S}t_G(v)\neq t_G(S)$.
To see why, note that the value $t_G(v)$ does not tell us how many copies of $\FH$ intersect $S$ more than once, and are therefore counted multiple times.
We present a simple example to demonstrate this.
Assume $\FH$ is a triangle, $\abs{S}=3$ and for every $v\in S$ we have $t_G(v)= 1$. Solely based on $t_G(v)$, one cannot distinguish between the cases  $t_G(S)=1$ and $t_G(S)=3$.

We overcome this issue by sampling a small subset of vertices $S'\subseteq S$, and then, for every $v\in S'$ we approximate the number of copies of $\FH$ which intersect $v$ and exactly $i$ additional vertices from $S$ for $0\leq i\leq h-1$. This will allow us to estimate the number of multiple countings and therefore get an estimation of $t_G(S)$.

\bigskip\noindent
Our approach for approximating $\hat{t}$ as required by $\Cialg$ is to approximate the number of cycles that intersect $S$ in exactly $k$ vertices. To this end, we define the following.

\begin{definition}
    Define $\FHs$ denote the set of copies of $\FH$ in $G$.
    For a subset of vertices $U$, we define $\FHs(U)\triangleq\set{C\in \FHs\mid C\cap U\nemp}$.
    We denote $t\triangleq\abs{\FHs}$ and $t(U)\triangleq\abs{\FHs(U)}$.
    In general, we replace the symbol $\FHs$ by $t$, to denote the cardinality of a set.
\end{definition}

\begin{definition}
    Let $\FHs^k$ denote the set of copies $C\in\FHs$ with $\abs{C\cap S}=k$.
    Let $\FHs^k(v)\triangleq \FHs^k\cap \FHs(v)$. We let $t^k=\abs{\FHs^k}$.
\end{definition}
This definition helps us overcome the double-counting issue because for any $k$ we have $\tfrac{1}{k}\sum_{v\in V}t^k(v)=t^k$, and $\sum_{k\in[h]}t^k=t_G(S)$. We now claim that we can efficiently approximate $t^k$.
\begin{lemma}[Algorithm $\ACalgC$]\label{lemma3:A4}
    There exists a randomized algorithm $\ACalgC$ with the following characteristics.
    The input is a graph $G$, a set $S$, a precision parameter $\delta\in[0,1]$, a parameter $k \in [h]$, and a tuple $(a,b)$, such that for every $v\in S$ we have $t_G(v)\in[a,b]$.
    The algorithm produces an output $\hat{t}^k$ which satisfies $\Pr{\hat{t}^k = t^k \pm t_G(S) \cdot \delta}\geq 1-\frac{1}{n^4}$.
    The running time of the algorithm is bounded by $\TO{n^2\cdot \frac{b}{a}\cdot \frac{1}{\delta}}$.
\end{lemma}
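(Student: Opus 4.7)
The plan is to combine color coding with vertex sampling. For a fixed $v\in S$, I will build an unbiased estimator for $t^k(v)$ as follows. Color $V\setminus\{v\}$ uniformly with colors from $[h-1]$, assign color $h$ to $v$, and keep only directed edges from color $i$ to color $i+1\pmod h$. For each subset $T\subseteq[h-1]$ of size $k-1$, form the auxiliary graph $G_{v,T}$ by retaining only $S$-vertices in color classes indexed by $T$ and only non-$S$ vertices in the remaining $h-k$ classes in $[h-1]$. Every colorful $h$-cycle in $G_{v,T}$ necessarily passes through $v$ and intersects $S$ in exactly $k$ vertices. Letting $Z_{v,T}$ count these cycles and $Y_v=\sum_T Z_{v,T}$, a direct computation of the probability that a fixed $C\in \FHs^k(v)$ becomes colorful under the random coloring yields $\Exp{Y_v}=c_h\cdot t^k(v)$ for a fixed constant $c_h=c(h)>0$ arising from the number of cyclic color orderings.

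The estimator for $t^k$ is then obtained by sampling $v_1,\ldots,v_m$ independently and uniformly from $S$, running one fresh random coloring for each $v_i$ to obtain $Y_{v_i}$, and outputting $\hat t^k=\frac{|S|}{c_h k}\cdot\frac{1}{m}\sum_i Y_{v_i}$. Unbiasedness follows from $\sum_{v\in S}t^k(v)=k\cdot t^k$. A single $Y_{v_i}$ is computed in $\TO{n^2}$ time: each $Z_{v,T}$ counts $h$-cycles in an $h$-partite graph whose $v$-part has size $1$, which (for constant $h$) reduces to a constant-length chain of rectangular matrix multiplications whose bottleneck is $\MM{n,n,1}=\TO{n^2}$, and there are only $\binom{h-1}{k-1}=O(1)$ values of $T$. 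Hence the total running time is $\TO{m\cdot n^2}$.

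The main obstacle is the concentration analysis. I would apply Chebyshev's inequality using that $Y_{v_i}\in[0,O(b)]$ deterministically (since each $Z_{v,T}\leq t_G(v)\leq b$ and there are only $O(1)$ summands), so $\Var{Y_{v_i}}\leq \Exp{Y_{v_i}^2}\leq O(b)\cdot\Exp{Y_{v_i}}=O(b)\cdot c_h k t^k/|S|$. Combining with the key inequality $t_G(S)\geq |S|a/h$ (because every cycle in $\FHs(S)$ is counted at most $h$ times in $\sum_{v\in S}t_G(v)$ while each summand is at least $a$) and with $t^k\leq t_G(S)$, Chebyshev gives that $\hat t^k$ deviates from $t^k$ by more than $t_G(S)\delta$ with probability at most $O\bigl(b/(k a m\delta^2)\bigr)$ up to constants depending on $h$; amplifying via the median of $O(\log n)$ independent copies then drives this to $1/n^4$. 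The technical wrinkle will be to shave one factor of $\delta$ so that $m=\TO{b/(a\delta)}$ suffices rather than $m=\TO{b/(a\delta^2)}$. The way to achieve this is to exploit that the additive tolerance $t_G(S)\delta$ is large relative to the per-sample fluctuation whenever $t^k$ is not much smaller than $t_G(S)$, and in the complementary regime to argue separately that $\hat t^k$ is zero or tiny with high probability once all $Y_{v_i}$ vanish. Putting everything together yields the claimed bound of $\TO{n^2\cdot b/(a\delta)}$.
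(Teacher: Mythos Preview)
Your approach is essentially identical to the paper's: the paper also fixes a vertex $v\in S$, gives it its own color, colors the rest with $h-1$ colors, and for each choice of $k-1$ color classes restricted to $S$ (the remaining $h-k$ classes restricted to $V\setminus S$) counts colorful $h$-cycles through $v$; this yields an unbiased estimator for $t^k(v)$ computable in $\TO{\MM{n,n,1}}=\TO{n^2}$ time. The paper then samples $v$ uniformly from $S$, rescales by $N_k=|S|/k$, averages $\ell$ independent copies, and applies Chebyshev followed by the median trick. You correctly put your finger on the one non-routine point: with the only available deterministic bound $Y_v=O(b)$, Chebyshev together with $t_G(S)\geq a|S|/h$ and $t^k\leq t_G(S)$ yields failure probability $O\bigl(b/(a\,m\,\delta^2)\bigr)$, so a priori $m=\TO{b/(a\delta^2)}$ rather than the stated $\TO{b/(a\delta)}$.

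Your proposed fix does not recover the missing factor of $\delta$. The ``small $t^k$'' regime ($t^k\leq c\,t_G(S)\delta$) is indeed handled by Markov on the nonnegative estimator, but in the ``large $t^k$'' regime you are back to Chebyshev, and the sharpest variance bound $\Var{\hat t^k}\leq O(b)\,|S|\,t^k/(k m)$ combined with $t^k\leq t_G(S)$ and $|S|\leq h\,t_G(S)/a$ still only gives $O(b)/(a m\delta^2)$; the hypothesis $t^k\geq c\,t_G(S)\delta$ points the wrong way and cannot be used to cancel a $\delta$. In fact the paper's own proof of this lemma contains an arithmetic slip at exactly this step: it writes $\Var{\tfrac{1}{\ell}\sum_{i}X_i}=\tfrac{1}{\ell^2}\Var{X_1}$ for $\ell$ i.i.d.\ copies (the correct denominator is $\ell$), which is why it appears to reach $\ell=O(b/(a\delta))$. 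The honest bound that both your argument and the paper's (corrected) argument deliver is $\TO{n^2\cdot b/(a\delta^2)}$. This weaker version is entirely adequate for the paper's downstream use, since there $b/a=\TO{1/\eps^2}$ and $\delta=\Theta(\eps)$, so one merely gets $\TO{n^2/\eps^4}$ instead of $\TO{n^2/\eps^3}$ for the $\Cialg$ black box, and the main theorem is stated for constant $\eps$.
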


We now show how to prove \Cref{thm:cialg main} using \Cref{lemma3:A4}.
\begin{proof}[Proof of \Cref{thm:cialg main} using \Cref{lemma3:A4}]
    Set $\delta = \eps/2k$.
    Compute $\hat{t}^k$ for each $k\in[h]$ using \Cref{lemma3:A4}.
    Note that for each $k\in[h]$ we have
    $\Pr{\hat{t}^k=t^k\pm t_G(S)\cdot \delta} \geq\fpr{1}{4}$.
    Denote $\hat{t}_G(S)=\sum_{k\in[h]}\hat{t}^k$.
    We show that $\hat{t}_G(S)=t_G(S)\apm$, assuming $\hat{t}^k=t^k\pm t_G(S)\cdot \delta$ for every $k$, which happens with probability at least $\fpr{k}{4}
    $, by \Cref{lemma3:A4}.
    \begin{align*}
        \hat{t}_G(S)
        =\sum_{k\in[h]}\hat{t}^k
        =\sum_{k\in[h]} t^k\pm t_G(S)\cdot \delta
        =(\sum_{k\in[h]} t^k) \pm k\cdot t_G(S)\cdot \delta
        =t_G(S) \pm t_G(S)\cdot \frac{\eps}{2}
        =t_G(S)(1\pm \frac{\eps}{2}) \;.
    \end{align*}
    We make $h=\BC$ calls to the algorithm $\ACalgC$, so we get the desired running time, completing the proof of \Cref{thm:cialg main}.
\end{proof}

Thus, it remains to prove \Cref{lemma3:A4}.
To approximate $t^k$, we find a value whose expectation is $t^k$ and whose variance is at most $\BO{\brak{N_k\cdot b}^2}$ (recall that $N_k=\abs{S}/k$). The following lemma states that we can do this efficiently.
\begin{lemma}[Algorithm $\ACalgB$]\label{lemma3:A3}
    There is a randomized algorithm $\ACalgB$ whose input is $G,S,\delta$ and $k$. Note that unlike $\ACalgC$, the algorithm $\ACalgB$ does not require $a$ and $b$ as part of its input parameters.
    $\ACalgB$ computes a value $X$ such that
    $\Exp{X}=t^k$ and $\Var{X}\leq C\cdot \brak{N_k\cdot b}^2$, where $C$ is a constant.
    The running time of the algorithm is bounded by $\TO{n^2}$.
\end{lemma}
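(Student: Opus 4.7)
My plan is to realize $X$ from a single randomized experiment and then compute it through a short chain of vector--matrix products.

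\textbf{The estimator.} Perform three independent random steps: (i) sample $v$ uniformly from $S$; (ii) color each vertex of $V(G)\setminus\set{v}$ independently and uniformly in $[h-1]$ and assign $v$ the color $h$; (iii) sample a uniformly random $T\subseteq [h-1]$ with $\abs{T}=k-1$. Build the auxiliary directed graph $G'$ that keeps $v$, keeps a vertex of color in $T$ iff it lies in $S\setminus\set{v}$, keeps a vertex of color in $[h-1]\setminus T$ iff it lies in $V\setminus S$, and retains only edges going from color $i$ to color $i+1 \bmod h$. Let $X_{\mathrm{raw}}$ be the number of colorful directed $h$-cycles through $v$ in $G'$, and set
\[
X \;\triangleq\; N_k \cdot (h-1)^{h-1}\binom{h-1}{k-1} \cdot X_{\mathrm{raw}}.
\]

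\textbf{Expectation.} Fix $C\in\FHs^k$. The event that $C$ becomes a colorful $h$-cycle through $v$ in $G'$ factors as: (i) $v$ is one of the $k$ vertices of $C\cap S$, probability $k/\abs{S}$; (ii) the $h-1$ non-$v$ vertices of $C$ receive precisely the coloring dictated by the cyclic order starting at $v$, probability $(h-1)^{-(h-1)}$; (iii) $T$ equals the set of $k-1$ colors assigned to $C\cap S\setminus\set{v}$, probability $1/\binom{h-1}{k-1}$. On this joint event every $S$-vertex of $C$ has a color in $T$ and survives, every non-$S$ vertex of $C$ has a color outside $T$ and survives, and $v$ survives with color $h$, so $C$ is realized exactly once. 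Conversely, any colorful cycle through $v$ in $G'$ meets $S$ in exactly $\set{v}$ together with its $k-1$ vertices colored in $T$, giving intersection size $k$, so no $C\notin\FHs^k$ contributes. Summing the probability over $\FHs^k$ and multiplying by the scaling factor $1/p$ with $p=k/(\abs{S}(h-1)^{h-1}\binom{h-1}{k-1})$ yields $\Exp{X}=t^k$.

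\textbf{Variance.} Because $v\in S$, the number of colorful cycles through $v$ in $G'$ is at most the number of $h$-cycles through $v$ in $G$, so $X_{\mathrm{raw}}\le t_G(v)\le b$ pointwise. Hence
\[
\Var{X_{\mathrm{raw}}}\leq \Exp{X_{\mathrm{raw}}^2}\leq b\cdot\Exp{X_{\mathrm{raw}}}.
\]
Using $t^k\leq t_G(S)\leq \sum_{u\in S}t_G(u)\leq \abs{S}\cdot b$ gives $\Exp{X_{\mathrm{raw}}}=t^k\cdot p\leq \BO{b}$ (with constant depending only on $h$), and therefore $\Var{X_{\mathrm{raw}}}=\BO{b^2}$. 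Scaling by $1/p=N_k\cdot \BC$ squares into $\Var{X}=\BO{(N_k\cdot b)^2}$, matching the lemma.

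\textbf{Runtime and main obstacle.} Counting colorful directed $h$-cycles through $v$ in $G'$ reduces to evaluating $r^{\top}A_{1\to 2}A_{2\to 3}\cdots A_{(h-2)\to (h-1)}s$, where $r,s\in\set{0,1}^n$ encode the out-neighbors of $v$ in color class $1$ and the in-neighbors of $v$ in color class $h-1$, and each $A_{i\to i+1}$ is the bipartite color-class adjacency submatrix of $G'$. Evaluating this as $h-2=\BC$ successive vector--matrix products costs $\BO{n^2}$ each, and sampling, coloring, and building $G'$ cost $\BO{n^2}$, giving total time $\TO{n^2}$. The main subtlety I expect is bookkeeping in the expectation argument: I have to check that the pointwise survival of a fixed $C$ in $G'$ depends only on the coloring and the choice of $T$, which is precisely why $T$ must be drawn from the $h-1$ non-$v$ colors (so that $v$ is never excluded) and why membership in $T$ is tied to membership in $S$ rather than to the cycle $C$ itself.
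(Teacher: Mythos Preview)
Your proposal is correct and follows the paper's outline---sample $v$ uniformly from $S$, give it the unique color $h$, color the rest in $[h-1]$, and scale by $N_k$---but with one more layer of randomization in the inner estimator. The paper (via \Cref{lemma3:A2} and \Cref{prop:tk color}) computes $t_\varphi^k(v)$ \emph{exactly} for the sampled coloring by summing over all $\binom{h-1}{k-1}$ choices of which color classes are restricted to $S$ and over all cyclic labelings, then divides by $q=(h-1)!/(h-1)^{h-1}$. You instead sample a single subset $T$ and fix one cyclic orientation, so $X_{\mathrm{raw}}$ is computed by one $\BO{n^2}$ vector--matrix chain rather than $\Theta((h!)^2)$ of them. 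This buys a cleaner algorithm at the cost of a larger constant (depending only on $h$) in the variance bound; the pointwise inequality $X_{\mathrm{raw}}\le t_G(v)\le b$ you invoke is exactly the crude bound the paper uses for $Y_v$. One minor caveat: your expectation computation assigns a single valid coloring to each cycle, which is correct for directed $h$-cycles (the paper's main setting); for undirected cycles there are two orientations and the scaling constant picks up a factor of $2$, but this does not affect the argument otherwise.
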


The reason that \Cref{lemma3:A3} is helpful is because we can run the algorithm it provides $r$ times and take the median of means of these invocations.
Formally, the proof of \Cref{lemma3:A4} is as follows.
\begin{proof}[Proof of \Cref{lemma3:A4} using \Cref{lemma3:A3}]
    Let $\ell\triangleq\frac{b}{a\cdot \delta}\cdot 3C$.
    Let $Y$ denote the mean of $\ell$ invocations of the algorithm $\ACalgB(G,S,\delta,k)$. Note that $Y$ is a random variable, where the probability space is over the internal choices of the algorithm.
    We claim that $\Pr{Y=t^k \pm t_G(S)\cdot \delta}\geq 2/3$.
    Assuming this is true, we can implement the algorithm $\ACalgC$ as follows.
    Make $\ell \cdot 400\log n$ calls to the algorithm $\ACalgB$.
    The algorithm $\ACalgC$ gets $a,b,$ and $\delta$ as inputs and therefore it can compute $\ell$.
    Split the outputs into batches of size $\ell$. Compute the mean of each batch, and output the median of means of the batches.
    Since the mean of each batch falls inside the desired interval with probability at least $2/3$, we can apply the ``median trick'' (see \Cref{claim:med trick0}) to show that the median of the means falls within this interval with probability at least $\fpr{1}{4}$.

    Therefore, to complete the proof, we are left with proving that $\Pr{Y=t^k \pm t_G(S)\cdot \delta}\geq 2/3$. We do so by
    bounding $\Var{Y}$ and using Chebyshev's inequality.
    We have
    \begin{align*}
        \Exp{Y}
        =\Exp{\frac{1}{\ell}\sum_{i\in \ell}\ACalgB(G,S,\delta,k)}
        =\Exp{\ACalgB(G,S,\delta,k)}=t^k\;,
    \end{align*}
    where the last equality is by the guarantees of $\ACalgB$ given in
    \Cref{lemma3:A3}.
    In addition,
    \begin{align*}
        \Var{Y}
        =\Var{\frac{1}{\ell}\sum_{i\in \ell}\ACalgB(G,S,\delta,k)}
        =\frac{1}{\ell^2}\Var{\ACalgB(G,S,\delta,k)}
        \leq \frac{C\cdot (N_k\cdot b)^2}{\ell^2}\;.
    \end{align*}
    Where the last equality is again by the guarantees of $\ACalgB$ given in \Cref{lemma3:A3}.
    Therefore, by Chebyshev's inequality we have
    \begin{align*}
        \Pr{\abs{Y-\Exp{Y}}\geq t_G(S)\cdot \delta}
        \leq \frac{\Var{Y}}{\brak{t_G(S)\cdot \delta}^2}
        \leq \frac{C\cdot (N_k\cdot b)^2}{{\brak{\ell\cdot t_G(S)\cdot \delta}^2}}
        = \frac{C\cdot (N_k\cdot b)^2}{{\brak{\frac{b}{a\delta}\cdot3 C \cdot t_G(S)\cdot \delta}^2}}
        \leq \frac{(a\cdot N_k)^2}{{\brak{3 \cdot t_G(S)}^2}}
        \leq \frac{1}{9}.
    \end{align*}
    The last inequality follows because $t_G(S) \geq a\cdot N_k$.
    To see that, note that each vertex in $S$ participates in at least $a$ copies of $\FH$.
    Moreover, each copy can contain at most $h$ vertices from $S$, therefore, the number of copies which intersect $S$ is at least $a\cdot \abs{S}/k$, which means that $t_G(S) \geq a\cdot \abs{S}/k=a\cdot N_k$.

    The running time is bounded by $\TO{n^2\cdot \frac{b}{a}\cdot \frac{1}{\delta}}$, as we made $\TO{\frac{b}{a\delta}}$ calls to the $\ACalgB$ algorithm, where each call takes $\TO{n^2}$ time by the guarantees of $\ACalgB$ given in \Cref{lemma3:A3}.
\end{proof}

In what follows, we prove \Cref{lemma3:A3}.
To get a sample $X$ with $\Exp{X}=t^k$ and $\Var{X}\leq C\cdot(b\cdot N_k)^2$, we find samples $Y_v$ with $\Exp{Y_v}=t^k(v)$ and $\Var{Y_v}\leq C\cdot b^2$.
The following lemma states that we can do this efficiently.
\begin{lemma}\label{lemma3:A2}
    There is a randomized algorithm $\ACalgA$ whose input is $G,S,\delta,k$ and a vertex $v\in S$.
    Unlike $\ACalgB$, $\ACalgA$ additionally takes a vertex $v \in S$ as input.
    $\ACalgA$ computes a value $Y_v$ such that
    $\Exp{Y_v}=t^k(v)$ and $\Var{Y_v}\leq C\cdot b^2$, where $C$ is a constant.
    The running time of the algorithm is bounded by $\TO{n^2}$.
\end{lemma}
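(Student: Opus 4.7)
The plan is to implement $\ACalgA$ via the color-coding construction sketched in the technical overview. Color each vertex $u \in V(G) \setminus \{v\}$ independently and uniformly with a color from $[h-1]$, assign $v$ the color $h$, and let $V_i$ be the $i$-th color class (so $V_h = \{v\}$). Enumerate the $\binom{h-1}{k-1}$ subsets $I \subseteq [h-1]$ of size $k-1$: for each such $I$ form the auxiliary graph $G'_I$ by keeping $v$, keeping only the $S$-vertices of $V_i$ for $i \in I$, keeping only the non-$S$-vertices of $V_i$ for $i \notin I$, and discarding intra-class edges. Let $Z_v^I$ be the number of colorful $h$-cycles through $v$ in $G'_I$ (colorful meaning all $h$ colors appear among the $h$ vertices of the cycle), set $Z_v = \sum_I Z_v^I$, and output $Y_v := Z_v / c_{h,k}$ for the combinatorial constant $c_{h,k}$ determined below.

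For unbiasedness, fix an $h$-cycle $C$ through $v$ with $|C \cap S| = k'$. Under the random coloring, the probability that $C$ is colorful is $(h-1)!/(h-1)^{h-1}$, and conditional on that the non-$v$ vertices of $C$ are matched to $[h-1]$ by a uniformly random bijection. For $C$ to contribute to $Z_v^I$, the $k'-1$ non-$v$ vertices of $C \cap S$ must be mapped bijectively onto $I$; this fails unless $k' = k$, and summing over the $\binom{h-1}{k-1}$ choices of $I$ yields total probability exactly $(h-1)!/(h-1)^{h-1}$ for cycles with $|C \cap S| = k$. Cycles with $|C \cap S| \neq k$ contribute zero, so $\Exp{Z_v} = c_{h,k} \cdot t^k(v)$ with $c_{h,k} = (h-1)!/(h-1)^{h-1}$, giving $\Exp{Y_v} = t^k(v)$ after rescaling.

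For the variance, write $Z_v = \sum_C X_C$ over $h$-cycles $C$ through $v$. Indicators $X_C, X_{C'}$ are independent whenever $C \cap C' = \{v\}$, since they depend on disjoint sets of random colorings, so only the diagonal and pairs sharing some $u \neq v$ contribute to $\Var{Z_v}$. The diagonal gives $\sum_C \Exp{X_C} \leq \Exp{Z_v} = O(b)$. For off-diagonal pairs, let $m(v,u)$ denote the number of $h$-cycles through both $v$ and $u$; using $\Exp{X_C X_{C'}} \leq 1$ and the counting $\sum_{u \neq v} m(v,u)^2 \leq \max_u m(v,u) \cdot \sum_u m(v,u) \leq \tau_G \cdot (h-1) t_G(v) = O(b^2)$ yields $\Var{Z_v} = O(b^2)$, where $\tau_G = O(b)$ follows from the recursion invariant that strips $O(\Lambda)$-heavy vertices from the graph before $\Cialg$ is invoked. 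For the running time, computing $Z_v^I$ for fixed $I$ reduces to summing, over the $(h-1)!$ orderings of colors around the cycle, the $(v,v)$-entry of a product of $h$ inter-class adjacency matrices; since one factor is localized at $v$, we only propagate a $1 \times n$ row vector through the chain, so each ordering costs $O(h \cdot n^2)$, and with $(h-1)!$ and $\binom{h-1}{k-1}$ both constants the total is $\TO{n^2}$. The main obstacles I anticipate are (i) pinning down the constant $c_{h,k}$ exactly, taking care of the two orientations and cyclic rotations under which a single undirected cycle is counted by the matrix-product, and (ii) formally appealing to the recursion invariant to conclude $\tau_G = O(b)$, since the lemma's nominal hypothesis only bounds $t_G(v)$ for $v \in S$.
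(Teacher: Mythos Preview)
Your construction is essentially the paper's: color $V\setminus\{v\}$ uniformly in $[h-1]$, give $v$ the unique color $h$, and count colorful $h$-cycles with exactly $k$ vertices in $S$; your sum $\sum_I Z_v^I$ is exactly the paper's $t_\varphi^k$ (each colorful cycle with $|C\cap S|=k$ lands in precisely one $G'_I$), and your constant $c_{h,k}=(h-1)!/(h-1)^{h-1}$ is the paper's $q$. The expectation and running-time arguments match.

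Where you diverge is the variance bound, and here you are making life harder than necessary. Since $v$ is the only vertex of color $h$, every colorful $h$-cycle contains $v$, so deterministically $Z_v=t_\varphi^k\leq t_G(v)\leq b$. Hence $Y_v\leq b/q$ always, and $\Var{Y_v}\leq \Exp{Y_v^2}\leq (b/q)^2$ with no covariance analysis at all. This is exactly what the paper does.

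Your obstacle (ii) is therefore a non-issue, and in fact your own covariance argument already avoids it: you bound $\max_u m(v,u)$ by $\tau_G$, but $m(v,u)\leq t_G(v)\leq b$ directly (every cycle through both $v$ and $u$ is in particular a cycle through $v$), so no appeal to the recursion invariant is needed. Obstacle (i) is also moot once you observe $\sum_I Z_v^I=t_\varphi^k$: the auxiliary-graph counting in \Cref{prop:tk color} handles the orientations and rotations uniformly, and the normalizing constant is just $q$.
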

The reason that \Cref{lemma3:A2} is helpful is that we can sample a vertex $v$ uniformly at random from $S$, and get, using \Cref{lemma3:A2}, an unbias estimator for the number of copies of $\FH$ which contains $v$, and intersect $S$ exactly $k$ times, i.e., $t^k(v)$. By the law of total expectation, we get that the expected value of this quantity, is equal to $\frac{1}{\abs{S}}\sum_{u\in S} \Exp{t^k(u)}=t^k/N_k$. Therefore, we got an unbias estimator for $t^k$ upto a known value $N_k$.
Formally, the proof of \Cref{lemma3:A3} is as follows.
\begin{proof}[Proof of \Cref{lemma3:A3} using \Cref{lemma3:A2}]
    \sloppy{For brevity, we write $\ACalgA(v)$, instead of $\ACalgA(G,S,\delta,k,v)$.
        We implement $\ACalgB$ using a single call to $\ACalgA$.
        The algorithm $\ACalgB$ samples a vertex $u\in S$ uniformly at random, and simulates $\ACalgA(u)$.
        Let $Y_u$ denote the output of the simulation.
        The algorithm $\ACalgB$ then outputs $X\gets Y_u\cdot N_k$.}
    We need to show that $\Exp{X}=t^k$ and $\Var{X}\leq\BO{(N_k b)^2}$, given that for all $u\in S$ we have $\Exp{Y_u}=t^k(u)$ and $\Var{Y_u}=C\cdot b^2$.
    For $u\in S$, define
    \begin{align*}
        Z_u\triangleq \begin{cases}
                          Y_u & \text{If $\ACalgB$ samples $u$} \\
                          0   & \text{Otherwise.}
                      \end{cases}
    \end{align*}
    Note that $X=N_k\cdot\sum_{u\in S}Z_u$.
    We compute the expectation and variance of $X$.
    \begin{align*}
        \Exp{X}
          & = N_k\sum_{u\in S}{\Exp{Z_u}}                                     \\
          & = N_k\sum_{u\in S}\Pr{u \text{ is sampled}}\cdot \Exp{\ACalgA(u)}
        = \frac{N_k}{\abs{S}}\sum_{u\in S}\Exp{\ACalgA(u)}                    \\
        &=  \frac1{k}\sum_{u\in S}t^k(u)
        = t^k\;.
    \end{align*}
    We now prove the claim on the variance.
    Recall that, by the guarantees of $\ACalgA$ given in \Cref{lemma3:A2}, we have that for every $u$, $\Var{Y_u}\leq C\cdot b^2$.
    Note that $\Var{Z_u}=\frac{\Var{Y_u}}{\abs{S}}\leq \frac{C\cdot b^2}{\abs{S}}$.
    Moreover, note that for any two distinct vertices $u,w\in S$ we have $\Exp{Z_u\cdot Z_w}=0$, as only one of them was chosen by $\ACalgB$.
    Therefore, $\mathsf{Cov}\brak{Z_u,Z_w}<0$.
    We get
    \begin{align*}
        \Var{X/N_k}
        = \Var{\sum_{u\in{S}}Z_u}
        = \sum_{u\in{S}}\Var{Z_u} + \sum_{u,w\in{S}}\mathsf{Cov}\brak{Z_u,Z_w}
        \leq \sum_{u\in{S}}\Var{Z_u}
        \leq Cb^2\;,
    \end{align*}
    and therefore $\Var{X}=N_k^2\cdot \Var{X/N_k}\leq C\cdot (N_k\cdot b)^2$.
\end{proof}
It remains to prove \Cref{lemma3:A2}, which we do in what follows.

\subsection{Implementing the algorithm $\ACalgA$}
We utilize the color-coding technique introduced by \cite{alon1995color}. The high level approach of the technique is to randomly color vertices with $h$ colors and detect \emph{colorful} $h$-cycles that are ordered by, say, increasing colors. This additional structure allows for faster detection, at the cost of some probability of missing $h$-cycles that are colored out of order. The technique overcomes this by repeated experiments.

A pertinent question arises: why are the algorithms $\ACalgB,\ACalgA$ necessary? Why not just choose a random coloring, compute the number of colorful  copies of $\FH$ intersecting a set $S$ exactly $k$ times, and apply Chebyshev's inequality to conclude that repeating this process $\TO{n^2 \cdot\frac{b}{a\delta}}$ times suffices for a good approximation of $t^k$? The answer
lies in the execution time of the matrix multiplication algorithm for counting colorful copies, which is dominated by the sizes of the largest, second largest, and smallest color classes. Roughly speaking, the smaller the product of these sizes, the faster the algorithm runs.
Under random coloring, color classes each have a size of $\Omega(n)$ with high probability. Conversely, $\ACalgA$ produces a color class containing just a single vertex, which significantly improves the running time in the worst case, compared to the approach which does not use the algorithms $\ACalgB,\ACalgA$.

~\\
We need the following definitions to explain how color-coding works, and how the algorithm $\ACalgA$ uses rectangular matrix multiplication.

\begin{definition}[$\varphi$-Colorful, $t_\varphi$]
    Fix some coloring $\varphi:V\to[\ell]$ for some $\ell\in\mathbb{N}$ ($\ell$ will usually be $h$).
    Let $\FHs_\varphi$ denote the set of all copies $C\in \FHs$, such that $\varphi(C)=[\ell]$. If $C\in \FHs_\varphi$, we say that $C$ is \emph{$\varphi$-colorful}.
    Also define $\FHs_\varphi(v)=\FHs_\varphi\cap \FHs(v)$,
    $t_\varphi\triangleq \abs{\FHs_\varphi}$, and $t_\varphi(v)\triangleq \abs{\FHs_\varphi(v)}$.
\end{definition}
The subsequent definition relates colorful copies of $\FH$ with the set $S$.
\begin{definition}[$\FHs^k_\varphi$]
    Let $\FHs^k_\varphi\triangleq \FHs^k\cap \FHs_\varphi$. That is, $\FHs^k_\varphi$ is the set of copies of $\FH$ in $G$, where each such copy is colorful w.r.t. $\varphi$, and additionally intersects the set $S$ exactly $k$ times. Let $t^k_\varphi=\abs{\FHs^k_\varphi}$.
\end{definition}
We use the term \emph{random coloring}, to denote a coloring of $V$ that is sampled at random from the uniform distribution. The following definition formally defines this term.
\begin{definition}[Random Coloring]
    Let $A,B$ be two finite sets.
    We say that a function $\varphi:A\to B$ is a random coloring, if the value of each $a\in A$ is set to some value $b\in B$, where $b$ is chosen uniformly at random from $B$ and independently of values chosen for other elements in $A$.
\end{definition}
The following definition is used to quantify the complexity of computing $t^k_\varphi$ as a function of the sizes of the color classes that the coloring $\varphi$ induces.
Let $\varphi:V\to[h]$.
For $i\in[h]$, we denote by $\varphi^{-1}(i)$ the set of all vertices $v$ with $\varphi(v)=i$, and call this set the $i$-th color class.
Assume without loss of generality that the color classes are sorted according their cardinalities, in a non-decreasing order. That is, for every $i< h$ we have $\varphi^{-1}(i)\geq \varphi^{-1}(i+1)$.

\begin{lemma}\label{prop:tk color}
    Let $(\threeclass)$ denote the cardinality of the largest, second largest, and smallest color classes, respectively.
    \sloppy{For any fixed $k\in [h]$, there is a deterministic algorithm for computing $\set{t^k_\varphi(v)}_{v\in V}$ in time $\BO{(h!)^2\cdot h^2\cdot \MM{\threeclass}}$.}
\end{lemma}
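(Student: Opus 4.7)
My approach is a color-coding style matrix-multiplication counting, specialized to track the intersection pattern with $S$. A colorful $h$-cycle $C$ with $|C\cap S|=k$ is uniquely described by (i) a subset $K\subseteq[h]$ of size $k$ recording which colors of $C$ are realized by $S$-vertices, and (ii) the cyclic ordering of the colors around $C$. For each pair $(K,\pi)$, I build an auxiliary $h$-partite directed graph: in color class $i$, keep only the vertices of $\varphi^{-1}(i)\cap S$ if $i\in K$ and only those of $\varphi^{-1}(i)\setminus S$ otherwise; and keep only edges of $G$ directed from color $\pi(i)$ to color $\pi(i+1)$. Each colorful $h$-cycle through $v$ in this auxiliary graph corresponds uniquely to a desired cycle in $G$ with the given $(K,\pi)$ pattern. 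Summing the per-vertex counts over all $(K,\pi)$ (restricted to the $(h-1)!$ canonical orderings anchored at $\varphi(v)$) recovers $t^k_\varphi(v)$.

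\textbf{Per-configuration matrix computation.}
Fix $(K,\pi)$ and an anchor color $c=\pi(1)$, and let $B_1,\ldots,B_h$ be the bipartite adjacency matrices between consecutive color classes along $\pi$ in the auxiliary graph. The number of relevant cycles through $v\in\varphi^{-1}(c)$ is the $(v,v)$-diagonal entry of the cyclic product $B_1 B_2 \cdots B_h$. It suffices to compute the $(h-1)$-fold product $P=B_1 B_2\cdots B_{h-1}$ explicitly and then read off every diagonal entry of $P\cdot B_h$ in linear time, since only the diagonal is needed.

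\textbf{Balanced parenthesization (the key step).}
Pick an index $j^\star$ in the chain at which the incoming color class has size $\varphi_h$, and split $P=L\cdot R$ at that point, so $L$ is of dimensions $|\varphi^{-1}(\pi(1))|\times\varphi_h$ and $R$ is of dimensions $\varphi_h\times|\varphi^{-1}(\pi(h))|$. Build $L$ right-to-left and $R$ left-to-right, so that every intermediate matrix has $\varphi_h$ as one of its dimensions. Each of the resulting $\BO{h}$ rectangular products has the form $\MM{a,b,\varphi_h}$, where $a$ and $b$ are sizes of two \emph{distinct} color classes; since two distinct classes cannot both equal $\varphi_1$, the smaller of $a,b$ is at most $\varphi_2$, and by the symmetry of $\mathsf{MM}$ in its three arguments each such cost is at most $\MM{\varphi_1,\varphi_2,\varphi_h}$. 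The final multiplication $L\cdot R$ has outer dimensions equal to two distinct color class sizes, so the same bound applies. Degenerate cases in which $\varphi_h$ is realized only at an endpoint of the chain reduce to a one-sided accumulation, with the same per-step bound.

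\textbf{Aggregation.}
With $\BO{h}$ rectangular multiplications per triple $(K,\pi,c)$, $h$ anchor choices, at most $h!$ cyclic orderings $\pi$, and $\binom{h}{k}\leq h!$ choices of $K$, the total running time is $\BO{(h!)^2 h^2 \cdot \MM{\varphi_1,\varphi_2,\varphi_h}}$, as claimed. The main technical obstacle is the balanced parenthesization of Step~3: the key observation that adjacent matrix factors connect distinct color classes is what prevents any intermediate multiplication from having two dimensions as large as $\varphi_1$, and it is the reason the bound sharpens from $\MM{\varphi_1,\varphi_1,\varphi_h}$ to $\MM{\varphi_1,\varphi_2,\varphi_h}$.
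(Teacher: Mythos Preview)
Your proposal is correct and follows essentially the same approach as the paper: decompose by which $k$ color classes lie in $S$ (the paper's auxiliary graphs $H_x$ indexed by weight-$k$ strings), sum over color orderings, and compute each cyclic matrix product by anchoring the parenthesization at the smallest class so that every intermediate multiplication has $\varphi_h$ as one dimension and two \emph{distinct} class sizes as the others, forcing the second dimension down to $\varphi_2$. The only cosmetic differences are that the paper reorders so the smallest class sits at position~$1$ rather than splitting mid-chain, and it makes the $2h$ normalization for cycle automorphisms explicit where you leave the orientation double-count implicit.
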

\begin{corollary}\label{cor:y}
    \sloppy{For any fixed $k\in [h]$, there is a deterministic algorithm for computing $t^k_\varphi$ in time $\BO{(h!)^2\cdot \MM{\threeclass}}$.}
\end{corollary}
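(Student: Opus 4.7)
The plan is to derive this corollary directly from \Cref{prop:tk color} by summing and rescaling. Specifically, run the algorithm of \Cref{prop:tk color} to compute the vector $\set{t^k_\varphi(v)}_{v\in V}$ of per-vertex counts, then output $\frac{1}{h}\sum_{v\in V}t^k_\varphi(v)$.

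The correctness is a standard double-counting argument: every copy $C\in \FHs^k_\varphi$ consists of exactly $h$ distinct vertices (since $\FH$ is an $h$-cycle), and for each of those $h$ vertices $v\in C$ we have $C\in \FHs^k_\varphi(v)$. Therefore every $C\in \FHs^k_\varphi$ is counted exactly $h$ times in the sum, giving $\sum_{v\in V} t^k_\varphi(v) = h\cdot t^k_\varphi$. Dividing by $h$ recovers $t^k_\varphi$ exactly.

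For the running time, the call to the algorithm of \Cref{prop:tk color} takes $\BO{(h!)^2\cdot h^2\cdot \MM{\threeclass}}$ time and produces the vector $\set{t^k_\varphi(v)}_{v\in V}$. Summing its $n$ entries and performing a single division adds only $\BO{n}$ additional work, which is absorbed into $\MM{\threeclass}\geq n^2$ (reading the input of a matrix multiplication on $n$-vertex graphs already costs $\Omega(n^2)$). Since $h=\BO{1}$, the $h^2$ factor from \Cref{prop:tk color} is a constant and may be folded into the $\BO{\cdot}$, yielding the claimed runtime of $\BO{(h!)^2\cdot \MM{\threeclass}}$.

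There is no real obstacle here: the only substantive content is verifying that the overcounting factor is exactly $h$, which follows immediately from the fact that $h$-cycles have vertex-set size $h$. Everything else is a trivial post-processing of the vector produced by the preceding lemma.
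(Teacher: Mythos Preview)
Your proof is correct and follows essentially the same approach as the paper: invoke \Cref{prop:tk color} to obtain the per-vertex counts, then sum and rescale. The paper's one-line justification instead rescales by $\tfrac{1}{k}$ (implicitly summing only over $v\in S$, since each $C\in\FHs^k_\varphi$ has exactly $k$ vertices in $S$), whereas you sum over all of $V$ and rescale by $\tfrac{1}{h}$; both identities are valid and the argument is otherwise identical.
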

The corollary follows because for every $k\in[h]$, we have $\tfrac{1}{k}\sum_{v\in V}t^k(v)=t^k$.
Next, we explain how to implement the algorithm $\ACalgA$ given that we can compute the number $t^k_\varphi$ of colorful copies of $\FH$.
The algorithm works as follows.
It colors each vertex with a random color from the set $[h-1]$.
It then recolors the input vertex by a new color $h$.
Let $\varphi$ denote this coloring. The algorithm then computes $t^k_\varphi$ and outputs $t^k_\varphi/q$ for some constant $q$ such that $\Exp{t_\varphi/q}=t^k(v)$. We prove that the expectation and variance of this output satisfy the claimed requirements.

\begin{proof}[Proof of \Cref{lemma3:A2} using \Cref{prop:tk color}]
    The algorithm $\ACalgA(G,S,\delta,k,v)$ is given a graph $G$, the set $S$, a precision parameter $\delta$, and a vertex $v\in S$.
    The algorithm samples a random coloring $\varphi':V\setminus \set{v}\to[h-1]$.
    Define a coloring $\varphi:V\to[h]$ as follows:
    \begin{align*}
        \forall u\in V\quad \varphi(u)\triangleq\begin{cases}
                                                    \varphi'(u) & \text{If } u\neq v \\
                                                    h           & \text{Otherwise}
                                                \end{cases}
    \end{align*}
    \noindent The algorithm then outputs $t_{\varphi}^k/q$, for $q\triangleq \frac{(h-1)!}{(h-1)^{h-1}}$. This completes the description of the algorithm $\ACalgA$.

    Next, we prove the algorithm's correctness assuming we can exactly compute $t_{\varphi}^k/q$.
    Denote the output of the algorithm by $Y_v$, where $Y_v$ is a random variable as $t_{\varphi}^k$ is also a random variable, where the randomness is over the choice of $\varphi$.
    We show that $\Exp{Y_v}=t^k(v)$ and $\Var{Y_v}=\BO{b^2}$.

    Consider some copy $C$ which is in $\FHs^k(v)$.
    The probability that $C\in \FHs^k_\varphi(v)$ is the probability that $C$ is colorful given that $C\in \FHs^k(v)$, which is equal to $q\triangleq \frac{(h-1)!}{(h-1)^{h-1}}$.

    Consider some copy $C$ which is not in $\FHs^k(v)$. We show that $C\notin \FHs^k_\varphi$. If $v\notin C$, then $C$ is not colorful, and therefore not in $\FHs^k_\varphi$. If $v\in C$ but $C\not\in \FHs^k(v)$, then $C\notin \FHs^k$ which means $C$ is also not in $\FHs^k_\varphi$.
    Therefore, we have
    \begin{align*}
        \Exp{Y_v}
        = \frac{1}{q}\sum_{C\in \FHs_G}\Pr{C \text{ is $\varphi$-colorful }}
        = \frac{1}{q}\sum_{C\in \FHs^k(v)}\Pr{C \text{ is $\varphi$-colorful }}
        = \frac{1}{q}\sum_{C\in \FHs^k(v)}q=t^k(v)\;.
    \end{align*}
    For the variance, we know that $t^k(v)\leq t_G(v)\leq b$ and that $Y_v\leq (b/q)$, which means that $(Y_v)^2\leq (b/q)^2$, and therefore $\Var{Y_v}\leq \Exp{(Y_v)^2}\leq (b/q)^2$.
    This completes the proof of the correctness of the algorithm $\ACalgA$.

    The running time of the algorithm $\ACalgA$ is bounded by the running time of the algorithm specified in \Cref{prop:tk color}, which is bounded by $\TO{\MM{n,n,1}}$ as the smallest color class has only one vertex, and other color classes has at most $n$ vertices. Note that $\TO{\MM{n,n,1}}= n^{2+o(1)}$, which proves the running time of the algorithm $\ACalgA$ is bounded by $\TO{n^2}$.
\end{proof}

We are left with proving \Cref{prop:tk color}, which is the final step in the implementation of the algorithm $\ACalgA$.
To prove \Cref{prop:tk color}, we reduce the problem of computing $t^k_\varphi$ to the problem of computing $t_\varphi$ on an auxiliary graph, in which every $h$-cycle is colorful and also intersects the set $S$ exactly $k$ times.
We construct the auxiliary graph by randomly coloring the vertices with $h$ colors, then selecting $k$ color classes and keeping only vertices from $S$ in them, while discarding the rest of the vertices in those classes. For the remaining $h-k$ color classes, we retain only vertices that are not part of $S$.
We get a graph in which each color class is either contained in $S$ or disjoint from $S$.
This reduces the problem of computing $t^k_\varphi$ on the auxiliary graph, to computing $t_\varphi$ on it.
The next claim addresses the running time of computing $t_\varphi$ (on the auxiliary graph) instead of computing $t^k_\varphi$.
\newcommand{\threeclassu}{U_{(1)},U_{(2)},U_{(h)}}
\begin{claim}\label{prop:t sigma}
    Let $G$ be a graph and
    let $\sigma\triangleq \brak{U_1,\ldots ,U_h}$ be an (ordered) sequence of disjoint subsets of vertices of $V(G)$.
    Let $t^\sigma_G$ denote the number of copies $C\in\FHs_G$ where $C=(v_1,v_2,\ldots, v_h)$ and $v_i\in U_i$ for every $i\in[h]$.
    Let $\threeclassu$ denote the cardinality of the largest, second largest, and smallest subset, respectively.
    Then, there is a deterministic algorithm that outputs $\set{t^{\sigma}_G(v)}_{v\in V}$ in time $\BO{h^2\cdot \MM{\threeclassu}}$.
\end{claim}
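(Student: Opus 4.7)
The plan is to express $t^\sigma_G(v)$ as the $(v,v)$-entry of a cyclically rotated product of partite adjacency matrices, and then compute all such diagonals by routing every intermediate multiplication through the smallest class $U_c := U_{(h)}$. For each $i \in [h]$ build $A_i \in \{0,1\}^{U_i \times U_{i+1}}$ (indices mod $h$) by setting $A_i[u,w] = 1$ iff $uw \in E(G)$. A $\sigma$-respecting $h$-cycle $(v_1,\ldots,v_h)$ is exactly a closed walk obeying the partite orientation, so for $v \in U_i$ we have $t^\sigma_G(v) = [A_i A_{i+1} \cdots A_{i-1}](v,v)$, and $t^\sigma_G(v) = 0$ for $v \notin \bigcup_i U_i$.

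Fix $c$ with $|U_c| = |U_{(h)}|$. For each $i \neq c$, split the cyclic product at $U_c$ as $L_i R_i$, where $L_i = A_i \cdots A_{c-1}$ has dimensions $|U_i| \times |U_c|$ and $R_i = A_c \cdots A_{i-1}$ has dimensions $|U_c| \times |U_i|$. Then $t^\sigma_G(v) = \sum_{u \in U_c} L_i[v,u]\,R_i[u,v]$ is extractable in $O(|U_i|\cdot|U_c|)$ time once $L_i, R_i$ are in hand. The case $v \in U_c$ is handled by splitting instead at $U_{c+1}$, which reuses matrices already produced. All the $L_i$ (resp.\ $R_i$) are obtained from the cascades $L_i = A_i L_{i+1}$ and $R_{i+1} = R_i A_i$, initialized at $L_c = R_c = I$ and iterated around the cycle; this uses $O(h)$ rectangular multiplications in total.

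The main point of the cost analysis is a single pigeonhole observation: each cascade step has cost $\MM{|U_i|,|U_{i+1}|,|U_c|}$, and for any two distinct indices, the smaller of the two class sizes is at most $|U_{(2)}|$, since at most one index attains the top value $|U_{(1)}|$ (if $|U_{(1)}| = |U_{(2)}|$, the bound is trivial). Combined with monotonicity and the permutation symmetry of $\MM{\cdot,\cdot,\cdot}$, each such step therefore costs at most $\MM{|U_{(1)}|,|U_{(2)}|,|U_{(h)}|}$. Summing $O(h)$ cascade multiplications for $L$, $O(h)$ for $R$, plus $O(h)$ diagonal extractions each of cost $O(|U_{(1)}|\cdot|U_{(h)}|)$ (absorbed into one MM since $\MM{a,b,c} \geq ab$ whenever $c \geq 1$), yields the claimed $O(h^2 \cdot \MM{|U_{(1)}|,|U_{(2)}|,|U_{(h)}|})$ total. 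The only step that requires a genuine argument is the pigeonhole bound on $\min(|U_i|,|U_{i+1}|)$ together with its use via monotonicity; everything else is routine bookkeeping.
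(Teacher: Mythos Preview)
Your proof is correct and follows essentially the same approach as the paper: route every intermediate product through the smallest class $U_{(h)}$, build the partial products via two cascaded sequences of rectangular multiplications, and extract each per-vertex count as a diagonal entry computed in $O(|U_i|\cdot|U_{(h)}|)$ time. The paper fixes $U_1$ as the smallest class and writes the two cascades as $B^\sigma_i$ and $B^{\sigma'}_i$ (the latter for the reversed ordering), whereas you call them $L_i$ and $R_i$; your pigeonhole argument that $\min(|U_i|,|U_{i+1}|)\le |U_{(2)}|$ is exactly the paper's observation that ``$b'\le U_{(2)}$'' after assuming $b'\le a'$. Your handling of $v\in U_c$ by reusing $L_{c+1}$ is a slight streamlining, and your accounting in fact gives $O(h)\cdot\MM{\cdot}$ rather than $O(h^2)\cdot\MM{\cdot}$, but this only strengthens the stated bound.
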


Next, we prove \Cref{prop:tk color} using \Cref{prop:t sigma}.
\begin{proof}[Proof of \Cref{prop:tk color} using \Cref{prop:t sigma}]
    \newcommand{\tvk}{t^k_\varphi(v)}
    \newcommand{\Fvk}{\FHs^k_\varphi(v)}
    We are given a graph $G$, a coloring $\varphi$ and a set $S$. Our goal is to compute $\set{t^k_\varphi(v)}_{v\in V}$ using \Cref{prop:t sigma}.
    First, we remove all monochromatic edges in $G$, as none of those edges participates in any $\varphi$-colorful copy $C\in \FHs_\varphi$.
    Then, we build a set of auxiliary graphs $\HH$.
    For each auxiliary graph $H_x\in \HH$, we compute $\set{t_{H_x}(v)}_{v\in V}$ using \Cref{prop:t sigma}. We then claim that $\sum_{H_x\in \HH}t_{H_x}(v)=t_\varphi^k(v)$. %
    The auxiliary graphs are defined as follows.
    For a string $x\in \set{0,1}^h$, we define a graph $H_x$, which is an induced subgraph of $G$.
    For $i\in [h]$, we define a set $U^x_i$ as follows:
    \begin{align*}
        U_i^x \triangleq
        \begin{cases}
            \varphi^{-1}(i)\cap S,      & \text{if } x_i = 1 \\
            \varphi^{-1}(i)\setminus S, & \text{if } x_i = 0
        \end{cases}
    \end{align*}
    Let $V(H_x)=\bigcup_{i\in [h]} U_i^x$ and $H_x=G[V(H_x)]$.
    Let $w(x)$ denote the hamming weight of a binary string.
    Let $\XC_\ell\triangleq\set{x\in \set{0,1}^h\mid w(x)=\ell}$.
    The set of auxiliary graphs is $\HH=\set{H_x\mid x\in \XC_k}$.
    Note that $\abs{\HH}\leq 2^h=\BO{1}$.
    This completes the description of the algorithm specified in \Cref{prop:tk color}. We proceed with proving its correctness, and then we analyze its running time.

    Fix some auxiliary graph $H_x\in \HH$.
    We write $\FHs_{\varphi}(v;H_x)$ to denote the number of $\varphi$-colorful copies in $H_x$ that $v$ participates in. We also use $t_{\varphi}(v;H_x)=\abs{\FHs_{\varphi}(v;H_x)}$.
    We first claim that it suffices to compute $\set{t_{\varphi}(v;H_x)}_{v\in V}$ for every $H_x\in \HH$, as
    \begin{align*}
        \tvk=\sum_{x\in\XC_k} t_{\varphi}(v;H_x)\;.
    \end{align*}
    To show that, we need to show two properties:
    \begin{enumerate*}[label=(\roman*)]
        \item every $C\in\bigcup_{x\in \XC_k}\FHs_{\varphi}(v;H_x)$ is also in $\Fvk$, and
        \item every $C\in\Fvk$ is in $\FHs_{\varphi}(v;H_x)$ for a unique $x$.
    \end{enumerate*}
    To see the first, note that every $\varphi$-colorful $C$ in $H_x$ must satisfy $\abs{C\cap S}=k$;
    If it is $\varphi$-colorful, it intersects each color class exactly once.
    By the construction of $H_x$, there are exactly $k$ color classes which are contained in $S$ while the other color classes are disjoint from $S$.
    Therefore, every copy $C$ on the right side of the equation, appears also on the left side of the equation. In other words, $\Fvk\supseteq\bigcup_{x\in \XC_k}\FHs_\varphi(v;H_x)$.

    We prove the second property.
    Fix $C\in \Fvk$, and denote its vertices by $(v_1,\ldots,v_h)$, where $\varphi(v_i)=i$ for every $i\in [h]$.
    To see that every $C$ is $\varphi$-colorful for at least one $x$,
    define $\hat{x}$ as follows. For $i\in[h]$, let $\hat{x_i}$ be $1$ if $v_i\in S$, and otherwise let $\hat{x_i}$ be $0$.
    $C$ is clearly $\varphi$-colorful in $H_{\hat{x}}$.
    To see that $C$ is not $\varphi$-colorful in $H_{x'}$ for any $x'\neq \hat{x}$, let $j$ be the first index in which $x'\neq \hat x$. If $C$ is also $\varphi$-colorful in $H_{x'}$, then $v_j$ is both in $S$ and not in $S$ which is a contradiction.

    We are left with explaining how to compute $\set{t_{\varphi}(v;H_x)}_{v\in V}$ for every $H_x\in \HH$, which we do by using \Cref{prop:t sigma}.
    Let $\Pi_h$ denote the set of all permutations $\pi:[h]\to[h]$.
    For each permutation $\pi\in \Pi_h$, define an ordering $\sigma(\pi)=\brak{\varphi^{-1}(\pi(1)),\ldots, \varphi^{-1}(\pi(h))}$, and compute $\set{t_{H_x}^{\sigma(\pi)}(v)}_{v\in V}$ using \Cref{prop:t sigma}.
    To complete the proof, we need to show that for every $v$, and every $x\in\XC_k$ we have
    \begin{align*}
        2h\cdot t_{\varphi}(v;H_x) =\sum_{\pi\in \Pi_h}t_{H_x}^{\sigma(\pi)}(v)\;.
    \end{align*}
    Fix $H_x$, a vertex $v$, and a cycle $C = (v_1,\ldots,v_h) \in \FHs_{\varphi}(v;H_x)$, with $\varphi(v_i) = i$ for all $i \in [h]$.
    We show that there are exactly $2h$ permutations in $\Pi_h$ for which $C\in \FHs_{H_x}^{\sigma(\pi)}(v)$. Take any permutation $\pi\in \Pi_h$. It defines a permutation $\pi':V(C)\to V(C)$ as follows. The vertex colored by the $i$-th color, is mapped to the vertex colored by the $\pi(i)$-th color.
    $C$ will be in $\FHs_{H_x}^{\sigma(\pi)}(v)$ if and only if $\pi'$ is an isomorphism on $C$. That is, for every edge $(u,v)$ in $C$, the edge $(\pi
        '(u),\pi'(v))$ is also an edge in $C$. This set of automorphism of an $h$-cycle has $2h$ elements, which consist of a \emph{shift} permutations, and its composition with itself $i$ times for any $1\leq i\leq h$, and a \emph{reflection} permutation, which maps the vertex in the $i$-th position to the $h+1-i$-th position. The composition of each of the shift permutation with a reflection permutation forms another set of $h$ different permutation, making the total number of permutation that belong to the set of automorphism of $C$ equal to exactly $2h$.
    This completes the correctness of the lemma.
    \paragraph{Running Time.}
    For a given $x\in \XC_k$ and a specific permutation $\pi\in \Pi_h$, the computation of $\set{t_{H_x}^{\sigma(\pi)}(v)}_{v\in V}$ as per \Cref{prop:t sigma} requires $\BO{h^2\cdot \MM{\threeclassu}}$ time. Consequently, determining $\set{t_{\varphi}(v;H_x)}_{v\in V}$ demands $\BO{h! \cdot h^2\cdot \MM{\threeclassu}}$ time as $\abs{\Pi_h}=h$.
    Extending this to all $x\in \XC_k$ leads to a computational time of $\BO{(h!)^2\cdot h^2\cdot \MM{\threeclassu}}$, as $\abs{\XC_k}\leq h!$.
    Thus, the computation of $\set{\tvk}_{v\in V}$ also requires $\BO{(h!)^2\cdot h^2\cdot \MM{\threeclassu}}$ time. The proof concludes with the observation that $\MM{\threeclassu} \leq \MM{\threeclass}$.
\end{proof}

We now prove \Cref{prop:t sigma}, which is the final missing piece for the proof of \Cref{thm:cialg main}.
\begin{proof}[Proof of \Cref{prop:t sigma}]
    We assume without loss of generality that $\abs{U_1}=U_{(h)}$, i.e., that $U_1$ is the smallest set. %
    We first explain how to compute $\set{t^{\sigma}_G(v)}_{v\in U_1}$, and then we generalize this for $U_j$ for any $j\in[h]$.

    Let $A$ denote the adjacency matrix of $G$.
    For $X,Y\subseteq V(G)$ let $A[X,Y]$ denote the submatrix containing all rows $v$ for $v\in X$ and all columns $u$ for $u\in Y$.
    Combinatorially, define a new directed graph $H'$ with vertex set $X\cup Y$, and a directed edge $(x,y)$ between a pair of vertices $x\in X$ and $y\in Y$ if and only if $(x,y)$ is an edge in $G$. Note that $A[X,Y]$ is exactly the adjacency matrix of the new graph $H'$. %
    Define $B_0\triangleq I_{\abs{U_1}}$, and for $0\leq i< h$,
    define
    \begin{align*}
        A^\sigma_i\triangleq A[U_i,U_{i+1}]\;, &  &
        B^\sigma_{i}\triangleq B_{i-1}\cdot A_{i}
        \;.
    \end{align*}
    We compute $B^\sigma_i$ for $0\leq i<h$.
    Note that $B^\sigma_i[x,y]$ denotes the number of paths with $i$ edges between a vertex $x\in U_1$, and a vertex $y\in U_{i+1}$, %
    which are of the form $(x,u_2,u_3,\ldots, u_{i-1},y)$ where $u_j\in U_j$ for $2\leq j< i$.
    Let $A_h\triangleq A[U_h,U_1]$.
    After computing $B_{h}$, we compute $M = B^\sigma_{h}\cdot A^\sigma_h$ and return all entries on the diagonal of $M$. Note that for $v\in U_1$, we have that $M[v,v]=t^{\sigma}_G(v)$.

    Fix some $j\in[h]$. We explain how to compute $\set{t^\sigma_G(v)}_{v\in U_j}$, when $j\neq 1$.
    We apply the same algorithm as for $j=1$, with a small modification.
    We compute two matrices.
    The first matrix, denoted by $M_{j,1}$, satisfies that for all $x\in U_1$, and $y\in U_j$ we have that $M_{j,1}[x,y]$ is equal to the number of paths with $j-1$ edges between $x$ and $y$, which are of the form $(x,u_2,u_3,\ldots, u_{j-1},y)$ where $u_k\in U_k$ for $2\leq k< j$.
    The second matrix, denoted by $M_{j,2}$, satisfies that for all $x\in U_1$, and $y\in U_j$ we have that $M_{j,2}[x,y]$ is equal to the number of paths with $h-(i-1)$ edges between $x$ and $y$, which are of the form $(x,u_h,u_{h-1},\ldots, u_{j+2}, u_{j+1},y)$ where $u_k\in U_k$ for $j< k\leq h$.
    We then compute $D=M_{j,1}^{T}\cdot M_{j,2}^{T}$ where $M^T$ denotes the transposed matrix of $M$.
    Note that $M_{j,1}=B^\sigma_j$, so we can simply compute $B^\sigma_j$ as before.
    We then define $\sigma'\triangleq (U_1,U_h,U_{h-1},\ldots, U_2)$, and compute $B^{\sigma'}_{j}$
    where $j=h-i$. Then, set $M_{j,2}=B^{\sigma'}_{j}$.
    This completes the description of the algorithm.

    \paragraph{Running Time.}
    In the $i$-th iteration, for $1\leq i\leq h$, we compute the product of the matrices $B_{i-1}$ with the matrix $A_{i}$.
    We denote the dimensions of $A_{i}$ by $a',b'$.
    The dimensions of $B_{i-1}$ are ${U_{(h)},a'}$, and therefore the running time is $\MM{a',b',U_{(h)}}$.
    Without loss of generality, we can assume $b'\leq a'$, because $\MM{a',b',X}=\MM{X,b',a'}$ for any $a',b',X$. We also have that $a'\leq U_{(1)}$, and that $b'\leq U_{(2)}$.
    This bounds the running time of the $i$-th iteration by $\BO{h^2\cdot \MM{a',b',U_{(h)}}}\leq \BO{h^2\cdot \MM{\threeclassu}}$, which proves the claim.
\end{proof}

\section{Implementing the $\FindHalg$ Black Box}\label{sec:Find Heavy}
\renewcommand{\tv}{t_\varphi(v)}
\newcommand{\Tv}{\FHs_\varphi(v)}
\newcommand{\Tvv}[1]{\FHs_\varphi(#1)}
\newcommand{\Gpi}{G_\varphi(\pi)}
\newcommand{\Gc}{G_\varphi}
\newcommand{\Gpc}{G_\varphi}
\newcommand{\tvc}{t_{\Gpc}(v)}
\newcommand{\Gpiv}{G_\varphi(\pi_v)}
\newcommand{\ct}[1][h]{(2h\log(n))^{#1}}
\newcommand{\ctz}[1][h]{(2h\log(n))^{(#1)^2}}
\newcommand{\phval}[1][h-1]{(\eo)^{#1}}
\newcommand{\phvalq}[1][h-1]{(\eo)^{#1}\cdot \frac{q}{2}}
\newcommand{\sq}{s}
In this section, we prove the first part of \Cref{lemma:assumption}, as is specified in the following theorem.
\begin{theorem}\label{thm4:find heavy}
    There is an algorithm that implements the $\FindHalg{(G,\Lambda)}$ black box when $\FH$ is an $h$-cycle with $h=\BO{1}$, in time $\TO{\MM{n,n,n/\Lambda^{1/(h-2)}}}$.
\end{theorem}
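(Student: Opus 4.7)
The plan is to combine color-coding with a $\polylog{n}$-sized grid of non-uniform vertex-sampling probabilities. First draw a random coloring $\varphi:V\to[h]$, discard monochromatic edges and edges between non-cyclically-consecutive classes, and orient each remaining edge from $V_j$ to $V_{j+1\bmod h}$. Every fixed $h$-cycle in $G$ becomes a directed colorful cycle with constant probability $h!/h^h$, so by $\TO{1}$ independent colorings and taking the union of declared-heavy vertices it suffices to, for one fixed coloring, identify the heavy vertices of a single color class (say $V_1$); the other classes are handled symmetrically.

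Fix $V_1$ and define the grid
\begin{equation*}
\mathcal{G}=\left\{(p_2,\ldots,p_h)\;:\; p_j=2^{-i_j},\; i_j\in\{0,\ldots,\lceil\log(\Lambda)/(h-2)\rceil\},\; \textstyle\sum_{j=2}^h i_j\geq\lceil\log\Lambda\rceil\right\}.
\end{equation*}
The cap $i_j\leq\log(\Lambda)/(h-2)$ forces every sampled color class at every grid point to have expected size at least $n/\Lambda^{1/(h-2)}$, and $|\mathcal{G}|=\polylog{n}$. For each tuple, form the random subgraph $H$ retaining all of $V_1$ and each $v\in V_j$ ($j\geq 2$) independently with probability $p_j$, and compute for every $v\in V_1$ the exact number $X_v$ of colorful directed $h$-cycles through $v$ in $H$ by iterated rectangular matrix products along $V_1\to V_2\to\cdots\to V_h\to V_1$; the dominating multiplication has shared dimension $\Theta(n/\Lambda^{1/(h-2)})$, giving one-point cost $\TO{\MM{n,n,n/\Lambda^{1/(h-2)}}}$. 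Averaging $\TO{1}$ independent repetitions per grid point estimates $\Pr{X_v\geq 1}$ within a factor of $2$.

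The light direction is easy: $\Exp{X_v}=t_\varphi(v)\cdot\prod_j p_j\leq t_\varphi(v)/\Lambda$, so for any $(\Lambda/\polylog{n})$-light $v$ the first moment, and hence $\Pr{X_v\geq 1}$ by Markov, is at most $1/\polylog{n}$ uniformly over $\mathcal{G}$. The main obstacle is the matching lower bound for heavy $v$. The plan is to bucket the cycles through $v$ according to the profile vector $(\ell_2(v),\ldots,\ell_h(v))$, where $\ell_j(v)$ is a dyadic approximation of $\max_{u\in V_j}|\{C\colon C\ni v,u\}|$; averaging over the $\polylog{n}$ buckets, some bucket holds $\Lambda/\polylog{n}$ cycles through $v$. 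For the grid tuple matched to that bucket---setting each $p_j$ so that the heaviest neighbor of $v$ in layer $j$ together with its paired vertices survive with $\Theta(1)$ probability---the detection probability is lower bounded by a layer-wise coverage argument using $1-(1-p)^k\geq (1-1/e)\min(1,pk)$ rather than Paley--Zygmund (which is too weak against ``star''-shaped neighborhoods). One must also verify that the cap $i_j\leq\log(\Lambda)/(h-2)$ never prunes the matched grid point; this uses $\prod_j\ell_j(v)\geq\Lambda/\polylog{n}$ together with the fact that previous recursion steps have already peeled off vertices with $\ell_j$ substantially larger than $\Lambda^{1/(h-2)}$.

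Declaring $v\in V_1$ heavy when the empirical success probability at some grid point exceeds the threshold midway between the heavy and light bounds produces $V_\Lambda$ restricted to $V_1$. Iterating over the $h$ color classes, $\TO{1}$ colorings, $|\mathcal{G}|$ grid points, and $\TO{1}$ repetitions multiplies the single-matrix-product cost by only $\polylog{n}$, giving the claimed total time $\TO{\MM{n,n,n/\Lambda^{1/(h-2)}}}$. A union bound over the $\mathrm{poly}(n)$ random events yields the $1-1/n^4$ success probability demanded by the black box.
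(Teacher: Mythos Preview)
Your proposal has a genuine gap in the heavy-vertex direction, stemming from the cap $i_j\leq\lceil\log(\Lambda)/(h-2)\rceil$ that you impose on the grid $\mathcal{G}$. This cap forces every $p_j\geq 1/\Lambda^{1/(h-2)}$, which does make the running-time bound immediate, but it breaks correctness. Consider $h=4$ and a vertex $v\in V_1$ whose $\Lambda$ colorful $4$-cycles all pass through a \emph{single} vertex $u_2\in V_2$ and a \emph{single} vertex $u_3\in V_3$, fanning out to $\Lambda$ distinct vertices in $V_4$. To discover $v$ with constant probability one needs $p_2,p_3=\Theta(1)$ and $p_4=\Theta(1/\Lambda)$; under your cap the smallest admissible $p_4$ is $1/\sqrt{\Lambda}$, which then forces $p_2p_3\leq 1/\sqrt{\Lambda}$, so the discovery probability is at most $p_2p_3\leq 1/\sqrt{\Lambda}=o(1)$ at every grid point. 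Hence $v$ is never declared heavy even though $t_G(v)\geq\Lambda$, violating the black-box specification.

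Your escape clause---that ``previous recursion steps have already peeled off vertices with $\ell_j$ substantially larger than $\Lambda^{1/(h-2)}$''---does not rescue this. The black box $\FindHalg(G,\Lambda)$ must satisfy its guarantee for \emph{arbitrary} input $(G,\Lambda)$; in particular, the very first call of the template has no prior peeling, and in the counterexample above $u_2,u_3$ have $t_G(u_2)=t_G(u_3)=\Lambda$ exactly, so they would not have been removed at any earlier (larger) threshold either. More fundamentally, your profile-based bucketing is underspecified: as written, $(\ell_2(v),\ldots,\ell_h(v))$ is a single tuple per vertex, not a per-cycle quantity, so ``some bucket holds $\Lambda/\polylog n$ cycles'' has no clear meaning, and the claimed inequality $\prod_j\ell_j(v)\geq\Lambda/\polylog n$ is neither stated precisely nor proved.

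The paper avoids the cap entirely: its vector set $\Prod(\Lambda)$ allows each coordinate to range all the way down to roughly $1/(2\Lambda)$, and the running-time bound is obtained by a separate optimization argument (showing that $\MM{np_{(1)},np_{(2)},np_{(h)}}\leq\MM{n,n,n/\Lambda^{1/(h-2)}}$ whenever $\prod_i p_i\leq 1/\Lambda$). The heavy-direction lower bound is then established by induction on $h$: one buckets vertices $u\in V_h$ by the number of cycles through both $v$ and $u$, finds a dense bucket, contracts layer $h$ through such a $u$ to obtain an auxiliary $(h{-}1)$-partite graph $H_u$, and applies the inductive hypothesis there; a final pigeonhole over the $\polylog n$ vectors in $\Prodd_{h-1}$ aggregates enough $u$'s sharing a common good vector. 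This inductive structure is what handles the unbalanced configurations that defeat your capped grid.
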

An algorithm for \Cref{thm4:find heavy} is given a graph $G$ and a heaviness threshold $\Lambda$, and needs to output a superset of the $\Lambda$-heavy vertices, which contains no $\Lambda/\clog$-light vertices.
Our algorithm works as follows. The algorithm selects a vector
$P=(p_1,\ldots, p_h)$, where $p_i\in [0,1]$ for $i\in[h]$, which we explain shortly how to select. The algorithm then samples a uniform coloring $\varphi$ for the vertices, and keeps each vertex of the $i$-th color class with probability $p_i$.
We emphasize that not all vertices are kept with the same probability.
Let $F$ denote the obtained graph. %
If $v$ is a part of at least one $\varphi$-colorful copy of $\FH$ in $F$, we say that $v$ is $P$-\emph{discovered}. We can find all $P$-discovered vertices over $F$ using \Cref{prop:tk color}.
For every vertex $v\in V$, let $v[P]$ denote the probability that $v$ is
$P$-discovered.
The randomness is taken over the choice of the coloring and the sampling of vertices. We call this experiment the \emph{$P$-discovery experiment}.

We repeat this $P$-discovery experiment $k$ times. If $v$ is $P$-discovered more than $k\tau$ times, where $\tau$ is a threshold that we set later, then $P$ adds $v$ to the set of heavy vertices. In this case, we say that $v$ is \emph{$P$-added} to the set of heavy vertices.

The final step of our algorithm is to choose a vector $P$, or rather a set of vectors $\mathcal{P}$, and then to perform the $P$-discovery experiment with each vector in the set $k$ times. Before specifying how we should choose $\mathcal{P}$, $k$, and $\tau$, we state the properties we hope to achieve.
\begin{enumerate}
    \item Each vector $P\in\mathcal{P}$ induces a graph $F$, such that invoking \Cref{prop:tk color} for computing the set of $P$-discovered vertices, i.e., vertices that are part of $\varphi$-colorful copy of $\FH$, takes $\TO{\MM{n,n,n/\Lambda^{1/(h-2)}}}$ time.
    \item Each vertex with $t_G(v)\geq \Lambda$ has at least one vector $P\in\mathcal{P}$ that $P$-adds $v$ to the set of heavy vertices with high probability.
    \item For any vertex $v$ with $t_G(v)\leq \Lambda/\clog$ and any vector $P\in \mathcal{P}$, with high probability, $v$ is not $P$-added to the heavy vertex set.
    \item The set $\mathcal{P}$ has only $\TO{1}$ vectors, allowing us to avoid repeating this experiment too many times.
\end{enumerate}
Next, we give an explicit definition of the set of all vectors that our algorithm uses. This set is built in the following way. Take all the vectors $(p_1,\ldots, p_h)\in [0,1]^h$ that satisfy $\prod_{i\in[h]}p_i\leq \TO{\frac{1}{\Lambda}}$.  We take a small (finite) subset of this set, in which every vector in the original set is close to some vector in the reduced set.
To do that, we only keep vectors $(p_1,\ldots, p_h)\in [0,1]^h$ for which $p_i\in\set{2^{-j}\mid 0\leq j\leq \log(\Lambda) + 1}$ for $i\in[h]$.
\begin{definition}[Vector Set: $\Prod(\Lambda)$]
    For a non-negative real $\Lambda$, define the following subsets of $[0,1]^h$.
    \begin{align*}
        \mathcal{A}(\Lambda) & \triangleq\set{(p_1,\ldots,p_h)\in[0,1]^h\mid \forall i\in[h]\;,\quad p_i\in\set{2^{-j}\mid 0\leq j\leq \log(\Lambda)+1}} \\
        \mathcal{B}(\Lambda) & \triangleq\set{(p_1,\ldots,p_h)\in[0,1]^h\mid \prod_{i=1}^h p_i\leq \frac{1}{\Lambda}}                                    \\
        \Prod(\Lambda)       & \triangleq \mathcal{A}(\Lambda)\cap \mathcal{B}(\Lambda)
    \end{align*}
\end{definition}
Note that $\abs{\Prod(\Lambda)}\leq \abs{\AC(\Lambda)}\leq \brak{\log(\Lambda)+2}^h=\TO{1}$, where the last inequality follows because $\Lambda\leq n^h$, since no vertex in $G$ participates in more than $n^h$ copies of any $h$-vertex graph (that is, if the input was $\Lambda> n^h$, the algorithm could simply output an empty set). This means that $\log(\Lambda)\leq h\log n=\TO{1}$. This proves   Property (4) above.

The rest of the section is dedicated to proving that this set satisfies Properties (1)--(3).
We first prove the first property, which states that for each $P\in \Prd$, the $P$-discovery experiment %
can be implemented in the desired time. %
\begin{lemma}\label{lemma:simulate query}
    Let $G$ be a graph with $n$ vertices and let $\Lambda$ be some positive number.
    Let $P=(p_1,\ldots, p_h)$ be a vector in $[0,1]^h$ with
    $\prod_{i=1}^{h} p_i= \TO{\frac{1}{\Lambda}}$.
    Let $F$ be the (random) graph obtained in a $P$-discovery experiment.
    Then, we can find all the $P$-discovered vertices in time $\TO{\MM{n,n,n\cdot \Lambda^{-1/(h-2)}}}$, \whp.
\end{lemma}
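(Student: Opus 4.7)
The plan has three parts: (i) use Chernoff to control the color-class sizes in the sampled subgraph $F$; (ii) detect $\varphi$-colorful $h$-cycles in $F$ via \Cref{prop:tk color}; and (iii) translate the resulting matrix-multiplication cost into $\TO{\MM{n,n,n/\Lambda^{1/(h-2)}}}$ using the constraint $\prod_i p_i = \TO{1/\Lambda}$.

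For (i), each vertex of $G$ lands in color $i$ with probability $1/h$ and is then kept with probability $p_i$, so the $i$-th color class size $c_i$ of $F$ is a sum of $n$ independent Bernoulli$(p_i/h)$ variables with mean $np_i/h$. I would apply \Cref{thm:chernoff 6} separately in the two regimes $np_i/h \geq \log^3 n$ (where the relative Chernoff tail gives $c_i\leq 2np_i/h$) and $np_i/h<\log^3 n$ (where $\Pr{X\geq t}\leq 2^{-t}$ for $t\geq 6\Exp{X}$ gives $c_i\leq\log^4 n$), and union-bound over the $h=\BC$ colors, to conclude that $c_i\leq\TO{np_i+1}$ simultaneously for all $i$ with probability $1-\exp(-\log^2 n)$. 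For (ii), I would invoke \Cref{prop:tk color} on $F$ with $S=V(F)$ and $k=h$: since any copy $C\subseteq V(F)$ satisfies $\abs{C\cap V(F)}=h$, the output gives $t_\varphi(v)$ for every $v\in V(F)$ in time $\TO{\MM{c_{(1)},c_{(2)},c_{(h)}}}$, where $c_{(1)}\geq c_{(2)}\geq\cdots\geq c_{(h)}$ are the sorted color-class sizes of $F$. A vertex is $P$-discovered iff this value is positive.

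For (iii), relabel so that $p_{\sigma(1)}\geq\cdots\geq p_{\sigma(h)}$. Since $p_{\sigma(h)}$ is the smallest, $p_{\sigma(h)}^{h-2}\leq \prod_{j=3}^h p_{\sigma(j)}$, and multiplying by $p_{\sigma(1)} p_{\sigma(2)}$ gives $p_{\sigma(h)}^{h-2}\cdot p_{\sigma(1)}p_{\sigma(2)}\leq \prod_i p_i\leq\TO{1/\Lambda}$. Scaling by $n^h$ and using Step (i) yields the product inequality $c_{(h)}^{h-2}\cdot c_{(1)}c_{(2)}\leq \TO{n^h/\Lambda}$. Together with $c_{(1)},c_{(2)}\leq n$, the matrix-multiplication rescaling identity \Cref{claim:a2} (already used in the proof of \Cref{claim:z mm}) then gives $\MM{c_{(1)},c_{(2)},c_{(h)}}\leq\TO{\MM{n,n,n/\Lambda^{1/(h-2)}}}$. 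The main obstacle is precisely step (iii): the triple $(c_{(1)},c_{(2)},c_{(h)})$ can be very unbalanced (for instance close to $(n/\Lambda^{1/h})^3$ under uniform $P$), so transferring the slack between $c_{(1)},c_{(2)}$ and $n$ onto the smallest dimension, to reshape the matrix product into the target shape $(n,n,n/\Lambda^{1/(h-2)})$, is exactly what \Cref{claim:a2} is designed to do.
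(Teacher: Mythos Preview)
The proposal is correct and follows essentially the same route as the paper: bound the color-class sizes by Chernoff, invoke \Cref{prop:tk color} on $F$, and then use \Cref{claim:a2} together with the key inequality $p_{(1)}p_{(2)}p_{(h)}\le \TO{\Lambda^{-1/(h-2)}}$. Your derivation of that inequality via $p_{(h)}^{h-2}\le\prod_{j\ge 3}p_{(j)}$ is slightly more direct than the paper's optimization over the exponents $a_i=-\log_2 p_i$, though to actually feed into \Cref{claim:a2} you still need the one extra line $c_{(1)}c_{(2)}c_{(h)} = (c_{(1)}c_{(2)})^{(h-3)/(h-2)}\bigl(c_{(1)}c_{(2)}c_{(h)}^{h-2}\bigr)^{1/(h-2)}\le (n^2)^{(h-3)/(h-2)}(n^h/\Lambda)^{1/(h-2)}=n^3/\Lambda^{1/(h-2)}$, which you left implicit.
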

\begin{proof}[Proof of \Cref{lemma:simulate query}]
    Fix some $P\in [0,1]^{h}$ where $P=(p_1,\ldots,p_{h})$, and
    $\prod_{i=1}^{h} p_i\leq 1/X$, for $X\geq 1$.
    Assume without loss of generality that $p_{i}\geq p_{j}$ for $i > j$. This implies that $p_1$ is the largest coordinate.
    Let $F$ be the random graph, and let $F_1,F_2,F_h$ denote the first, second, and $h$-th color classes in $F$.
    First, note that we can use \Cref{prop:tk color} to computing the set of discovered vertices in time $\BO{\MM{\abs{F_1},\abs{F_2},\abs{F_h}}}$ (recall that $h$ is constant).
    This follows because for each vertex, we get the number of colorful copies over $F$ in which it participates, and therefore we learn which vertices are $P$-discovered. For the rest of the proof, we analyze the running time of the algorithm specified in \Cref{prop:tk color} on the random graph $F$.
    Using a standard Chernoff's inequality, we claim that $\abs{F_i}\leq 30\log n \cdot \max\set{\Exp{\abs{F_i}},1}$. This means that the running time is $\TO{\MM{np_1,np_2,np_h}}$ with high probability.
    We show that
    \begin{align}
        \label{eq:MM}
        \MM{np_1,np_2,np_h}
        \leq\MM{n,n,n\cdot (p_1p_2p_h)}
        \leq\MM{n,n,n/X^{1/(h-2)}}\;,
    \end{align}
    and complete the proof by setting $X\gets \TO{\Lambda}$.
    The first inequality in \Cref{eq:MM} follows by \Cref{claim:a2}.
    We prove the second inequality, by showing that $p_1\cdot p_2\cdot p_h\leq X^{-1/(h-2)}$.
    Define $a_i$ where $p_i=2^{-a_i}$ for $i\in[h]$.
    Note that while $p_i\geq p_{i+1}$, we have $a_i\leq a_{i+1}$.
    We need to show that $a_1+a_2+a_h\geq \frac{\log(X)}{h-2}$.
    To minimize $a_1+a_2+a_h$ while keeping $\sum_{i\in[h]} a_i \geq \log(X)$,
    we set $a_1=a_2=0$ and $a_i= \frac{\log(X)}{h-2}$ for $2< i\leq h$. Call this solution $R$. %
    Note that $a_1+a_2+a_h=\frac{\log(X)}{h-2}$ and therefore $p_1\cdot p_2\cdot p_h=X^{-1/(h-2)}$.
    We show that $R$ is a solution which obtains the minimal value for the function $a_1+a_2+a_h$ under the described constraint on $a_i$.
    To see this, take another solution $R'=\set{a'_i}_{i\in[h]}$ and look at the value of $a_h'$ in this solution.
    If $a_h'>a_h$ then the value of the solution $R'$ is larger than the value of $R$ because $a_1=a_2=0$.
    The other case is when $a_h'\leq a_h$, for which we denote $a_h'=a_h - \delta$ for some positive $\delta$.
    In this case, since $a_h'$ is the largest value among all $a'_i$,
    to maintain the sum constraint, the sum $a'_1+a'_2$ must satisfy $a'_1+a'_2\geq a_1 + a_2 + \delta(h-2)$.
    Therefore,
    \begin{align*}
        a'_1+a'_2+a'_h \geq a_1+a_2 +\delta(h-2) +a_h -\delta \geq a_1+a_2+a_h,
    \end{align*}
    where the last inequality is because $h\geq 3$.
    This completes the proof of the second inequality in \Cref{eq:MM}.
\end{proof}

The main part in this section is proving Properties $(2)$ and $(3)$ which we do next.

\begin{theorem}\label{thm5:square main2}
    Let $G$ be a graph and let $\sq\triangleq\frac{1}{h^h}$. For any $\Lambda$ where $\Lambda=\omega(1)$, we have
    \begin{enumerate}
        \item For every vertex $v$ with $t_G(v)\geq \Lambda\cdot \ctz[h-1]\cdot\frac2{\sq}$, there exists a vector $P\in \Prod(\Lambda)$, such that $v[P]\geq \phval \cdot \sq$.
        \item For every vertex $v$ with $t_G(v)\leq \Lambda/\log n$, and every vector $P\in \Prod(\Lambda)$, we have $v[P]\leq 1/\log n$.
    \end{enumerate}
\end{theorem}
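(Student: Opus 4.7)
\textbf{Part 2 (light vertices).} The plan is a one-line first-moment bound. Fix any $P=(p_1,\ldots,p_h)\in\Prod(\Lambda)$ and let $X$ count the $\varphi$-colorful $h$-cycles through $v$ in the random graph $F$. Each cycle $C$ through $v$ contributes to $X$ iff the random coloring $\varphi$ assigns distinct colors to $V(C)$ (probability $h!/h^h$) and every vertex of $C$ is retained (probability $\prod_{i}p_{i}\leq 1/\Lambda$). Linearity gives $\Exp{X}\leq t_G(v)\cdot (h!/h^h)/\Lambda\leq 1/\log n$, and hence $v[P]=\Pr{X\geq 1}\leq \Exp{X}\leq 1/\log n$ by Markov's inequality.

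\textbf{Part 1 (heavy vertices), locating a dyadic profile.} Fix $v$, orient each cycle, and regard the $2t_G(v)$ ordered sequences $(v,u_1,\ldots,u_{h-1})$ as the leaves of a depth-$(h-1)$ tree $T_v$ whose level-$i$ nodes are prefixes $(u_1,\ldots,u_i)$; for a node $\eta$ let $w(\eta)$ count the cycle-leaves in $\eta$'s subtree. Apply iterated dyadic pigeonhole on $w$: at level $1$ keep the dyadic bucket of $w$-values carrying the largest total leaf count; then, inside each surviving subtree, repeat at level $2$; and so on. After $h-1$ rounds one obtains a profile $(b_1,\ldots,b_{h-1})$ and an aligned sub-tree $T^{*}\subseteq T_v$ that retains at least a $1/\ctz[h-1]$ fraction of the cycle-leaves and in which every surviving depth-$(i-1)$ node has $\Theta(2^{b_i})$ profile-children at depth $i$. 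The hypothesis $t_G(v)\geq \Lambda\cdot \ctz[h-1]\cdot 2/\sq$ then forces $|T^{*}|\geq (2/\sq)\Lambda$.

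\textbf{Part 1, constructing $P$ and lower-bounding $v[P]$.} Define $P$ by taking $p_i$ (for $i\le h-1$) to be the power of $1/2$ closest to $h/2^{b_i}$, and $p_h$ to be the largest power of $1/2$ with $\prod_j p_j\leq 1/\Lambda$. Since $\prod 2^{b_i}$ is of order $|T^{*}|\leq t_G(v)$, such a choice lies in $\Prod(\Lambda)$. To bound $v[P]$ from below, condition on the event $\varphi(v)=h$ (probability $1/h\geq \sq$) and walk down $T^{*}$ one level at a time: given a surviving depth-$(i-1)$ prefix, each of its $\Theta(2^{b_i})$ profile-children at depth $i$ is (independently) assigned color $i$ and then retained, jointly with probability $p_i/h$. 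Because $p_i\cdot 2^{b_i}/h=\Theta(1)$, at least one such child survives with probability $\geq 1-1/e$. The $h-1$ levels touch pairwise disjoint vertex sets (distinct color classes on distinct depths of $T^{*}$), so these events are independent, and the joint survival probability is $\geq (1-1/e)^{h-1}=\phval$. Any surviving root-to-leaf path in $T^{*}$ is a $\varphi$-colorful $h$-cycle through $v$ in $F$, giving $v[P]\geq \sq\cdot\phval$.

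\textbf{Main obstacle.} The delicate step is the iterated pigeonhole: the profile must be produced so that the branching into the profile is $\Theta(2^{b_i})$ at \emph{every} surviving ancestor (not merely on average), since the level-wise $1-1/e$ estimate is applied per-ancestor. This necessitates an additional inner bucketing round inside every surviving subtree at each level, and is exactly what produces the $(h-1)^2$ exponent in $\ctz[h-1]$: one factor $h-1$ for depth and another for per-level inner pigeonhole. A secondary subtlety is the simultaneous rounding of each $p_i$ to a power of $1/2$ while enforcing $\prod_j p_j\leq 1/\Lambda$ and $p_i\cdot 2^{b_i}=\Theta(h)$; this is feasible because the threshold on $t_G(v)$ leaves constant-factor slack in both the numerator of $1/\Lambda$ and in $|T^{*}|$.
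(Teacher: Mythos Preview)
Your Part 2 argument is correct and matches the paper's first-moment proof.

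Your Part 1 argument has a genuine gap. The assertion that ``the $h-1$ levels touch pairwise disjoint vertex sets'' is false: a vertex $w\in V(G)$ may be the $i$-th vertex on one $h$-cycle through $v$ and the $j$-th vertex on another, so the candidate children at depth $i$ and at depth $j$ of $T^{*}$ (which you built from $G$ \emph{before} revealing $\varphi$) can overlap as subsets of $V(G)$. Since $\varphi(w)$ is a single random variable, the events ``some depth-$i$ child receives colour $i$ and is retained'' for $i=1,\ldots,h-1$ are not independent, and the product bound $(1-1/e)^{h-1}$ does not follow. Conditioning only on $\varphi(v)=h$ does nothing to separate the levels; you would need the full coloring.

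The paper sidesteps this by first fixing the random coloring $\varphi$ and passing to the layered digraph $G_\varphi$, whose colour classes $V_1,\ldots,V_h$ are genuinely disjoint. All of the dyadic bucketing and the induction on $h$ that manufactures the $(h-1)^2$ exponent is performed inside $G_\varphi$ (Lemma~\ref{prop:discover over Gpi}); the only remaining randomness is the per-class subsampling, which is independent across classes by construction, so the level-wise $1-1/e$ bounds compose cleanly (this is made explicit by summing over all sampled subsets of the last colour class in Claim~\ref{prop1:set}, rather than committing to a single branch). The factor $s=h^{-h}$ is then the cost of passing from $t_G(v)$ to $t_{G_\varphi}(v)$. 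Your iterated-pigeonhole tree is morally the same decomposition as the paper's induction, but it has to be run \emph{after} the coloring is fixed, not before.
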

We now prove \Cref{thm4:find heavy}.
\begin{proof}[Proof of \Cref{thm4:find heavy} Using \Cref{thm5:square main2}]
    We first deal with the case that $\Lambda=\TO{1}$.
    In this case, we need to find a set $S$ which contains all vertices $v$ with $t_G(v)>0$, and no vertex $u$ with $t_G(u)=0$ with probability at least $\fpr{1}{4}$.
    In this case we do not sample subsets of vertices, but rather we only sample a random coloring and find the set of vertices that participate in at least one colorful $h$-cycle.
    We add each of these vertices to the set $S$ and repeat this process with a new sampled random coloring.
    Obviously, no vertex $u$ with $t_G(u)=0$ gets added to $S$.
    We show that after running this algorithm $k=\kVal$ times, each vertex with $t_G(v)>0$ is in $S$ with probability at least $1-\frac{1}{\exp(\Omc[k])}$.
    Fix some vertex $v$ with $t_G(v)>0$. To show that $v$ is added to $S$,
    we need to compute the probability that a copy $C\in \FHs(v)$  becomes colorful under random coloring.
    This probability is exactly $q\triangleq \frac{h!}{h^h}$. Note that $1/q=\BO{1}$. By repeating this process $k$ times, we get that with probability at least $1-\exp(-kq)\geq \fpr{1}{6}$, $v$ is added to $S$.
    By a union bound over all vertices, we get that $S$ is valid with probability at least $\fpr{1}{5}$. This completes the proof for the case $\Lambda=\TO{1}$.
    \medskip
    \newcommand*{\tl}{\tilde{\Lambda}}

    We move on to case where $\Lambda=\omega(1)$.
    We want to distinguish between $\Lambda$-heavy and $\Lambda/\czlog$-light vertices. Let $\tl\triangleq\Lambda/\brak{\ctz[h-1]\cdot\frac2{s}}$.
    We apply \Cref{thm5:square main2} with $\tl$ to distinguish between
    $\Lambda_1$-heavy and $\Lambda_2$-light vertices, where
    $\Lambda_1=\Lambda = \tl\cdot \ctz[h-1]\cdot\frac2{s}$, and $\Lambda_2=\tl/\log n \geq \Lambda/\czlog$, using vectors in $\Prod(\tl)$.

    The algorithm for \Cref{thm4:find heavy} is as follows.
    For each vector $P\in \Prod(\tl)$, run the $P$-discovery experiment $k$ times, where $k=\kVal$.
    Let $\tau \triangleq k\cdot \phval\cdot \frac{s}{4}$.
    Let $S_P$ denote the set of vertices that are $P$-discovered more than $k\tau$ times and let $S=\bigcup_{P\in\Prd}S_P$.
    The algorithm outputs $S$ as the superset containing all $\Lambda_1$-heavy vertices and no $\Lambda_2$-light vertices.
    This completes the algorithm description.
    We set $\tau$ to (at least) half the expected number of times a heavy vertex $v$ is $P$-discovered by some $P$, while also keeping $\tau$ large enough such that $\exp(-\tau/10)\leq \frac{1}{n^6}$.
    We prove that $S$ is indeed a valid output \whp.

    Fix a $\Lambda_2$-light vertex $v$, and a vector $P\in\Prod(\tl)$.
    By \Cref{thm5:square main2} we have that $v[P]\leq 1/\log n$.
    We show that $v\not\in S_P$ with probability at least $1/n^6$. We then use a union bound over all $\Lambda_2$-light vertices and vectors, and get that $S$ does not contain any $\Lambda_2$-light vertices with probability at least $1-\TO{1}/n^5$.
    The probability that $v\in S_P$ is equal the probability that a Binomial random variable $X_v$ with $k$ trials, and success probability $v[P]$, obtains a value of at least $\tau$. Note that $\Exp{X_v}=k\cdot v[P]\leq \tau/\log n$, and therefore, by Chernoff's inequality (\Cref{thm:chernoff 6}), we get that $v\not\in S_P$ with probability at least $1-2^{-\tau}\leq \fpr{1}{6}$.

    Fix a $\Lambda_1$-heavy vertex $v$, and a vector $P_v\in\Prod(\tl)$, where $v[P_v]\geq \phval\cdot s$, where such a vector exists by \Cref{thm5:square main2}.
    We show that $v\in S_{P_v}$ with probability at least $1-1/n^6$. We then use a union bound over all $\Lambda_1$-vertices, and get that $S$ contains all $\Lambda_1$-hevay vertices with probability at least $\fpr{1}{5}$.
    The probability that $v\in S_P$ is equal the probability that a Binomial random variable $X_v$ with $k$ trials, and success probability $v[P_v]$, obtains a value of at least $\tau$. Note that $\Exp{X_v}=k\cdot v[P]\geq  4\tau$, and therefore, by Chernoff's inequality (\Cref{thm:chernoff 6}), we get that $v\in S_P$ with probability at least $1-\exp\brak{-2\tau/8}\leq \fpr{1}{6}$.
    Therefore, $S$ is a valid output with probability at least $\fpr{1}{4}$.

    To analyze the running time, we use \Cref{lemma:simulate query}, which states that for each $P\in\Prod(\tl)$, executing the $P$-discovery experiment $k = \log^4 n$ times takes
    $\TO{\MM{n,n,n\cdot \tl^{-1/(h-2)}}}=\TO{\MM{n,n,n\cdot \Lambda^{-1/(h-2)}}}$ time, \whp.
    Since $\abs{\Prod(\tl)}=\TO{1}$, we get total running time of $\TO{\MM{n,n,n\cdot \Lambda^{-1/(h-2)}}}$, \whp.
\end{proof}
For the rest of this section, we prove \Cref{thm5:square main2}, under the assumption that $\Lambda=\omega(1)$.
We begin by providing an upper bound on the probability that a light vertex is  $P$-discovered for any vector $P\in \Prd$.
\begin{proposition}
    Let $Y$ be any value that might depend on $n,h$ and $\Lambda$.
    Fix a vertex $v$ with $t_G(v)\leq \Lambda/Y$ and some vector $P\in \Prod(\Lambda)$. Then, $v[P]\leq \frac{q}{Y}$, where $q\triangleq\frac{h!}{h^h}$.
\end{proposition}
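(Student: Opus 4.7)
The plan is to use a straightforward union bound over the copies of $\FH$ that pass through $v$, after first computing the probability that any \emph{single} copy witnesses the $P$-discovery of $v$.

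Fix any copy $C \in \FHs(v)$. Observe that $C$ witnesses the $P$-discovery of $v$ precisely when the following two independent events both hold: (i) under the random coloring $\varphi$, the $h$ vertices of $C$ receive all $h$ colors (i.e., $\varphi$ restricted to $V(C)$ is a bijection onto $[h]$), and (ii) every vertex of $C$ is retained in the subsampling step governed by $P$. The first event has probability exactly $h!/h^h = q$, since among the $h^h$ equally likely colorings of $V(C)$, exactly $h!$ are bijective. Conditioned on a specific bijective coloring, the vertex of $C$ receiving color $i$ is retained independently with probability $p_i$, so the second event has conditional probability $\prod_{i=1}^{h} p_i$. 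Because the coloring step and the retention step are independent, the joint probability is $q \cdot \prod_{i=1}^{h} p_i$.

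I would then invoke the defining constraint of $\Prod(\Lambda)$, namely $\prod_{i=1}^{h} p_i \leq 1/\Lambda$, to conclude that for every individual copy $C \in \FHs(v)$, the probability that $C$ witnesses the $P$-discovery of $v$ is at most $q/\Lambda$. A union bound over all $t_G(v)$ copies of $\FH$ through $v$ then gives
\[
v[P] \;\leq\; \sum_{C \in \FHs(v)} \frac{q}{\Lambda} \;=\; t_G(v)\cdot \frac{q}{\Lambda} \;\leq\; \frac{\Lambda}{Y}\cdot \frac{q}{\Lambda} \;=\; \frac{q}{Y},
\]
using the hypothesis $t_G(v) \leq \Lambda/Y$ in the last inequality.

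This proposition is genuinely a one-step union-bound argument, so there is no serious obstacle; the only mild care needed is separating the coloring and sampling steps so that both $q$ and the product $\prod_i p_i$ appear cleanly, and noting that the bound $\prod_i p_i \leq 1/\Lambda$ holds for \emph{every} $P \in \Prod(\Lambda)$ by construction, which is what allows the final inequality to be independent of the specific vector $P$.
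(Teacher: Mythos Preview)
Your proof is correct and follows essentially the same approach as the paper's: both compute that a single copy $C\in\FHs(v)$ survives as a colorful copy in $F$ with probability exactly $q\cdot\prod_i p_i\le q/\Lambda$, then bound the discovery probability by summing over the at most $\Lambda/Y$ copies through $v$. The paper phrases the last step as Markov's inequality on the integer count $X=\sum_C X_C$ (i.e., $\Pr{X\ge 1}\le \Exp{X}$), which is exactly your union bound.
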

\begin{proof}
    Fix some vector $P\in\Prod(\Lambda)$ where $P\triangleq (p_i)_{i=1}^{h}$.
    Sample a random $h$-coloring $\varphi$.
    For every $C\in \FHs_G(v)$ let $X_C$ denote the random variable indicating whether $C$ is also $\varphi$-colorful in the random graph $F$ that is obtained by sampling according to $P$. The randomness is over the choice of $\varphi$, as well as the vertices sampled from each class.
    Fix some copy $C_0\in \FHs_G(v)$.
    The probability that $C_0$ is $\varphi$-colorful in $G$ is exactly $q=\frac{h!}{h^h}$.
    The probability that $C_0$ is $\varphi$-colorful in $F$ is therefore
    \begin{align*}
        \Pr{C_0 \text{ is $\varphi$-colorful in }F}=\Exp{X_{C_0}}=\Pr{C_0\in\FHs_\varphi}\cdot \prod_{i=1}^h p_i\leq q\cdot \frac{1}{\Lambda}\;,
    \end{align*}
    where the last inequality follows by the definition of $\Prod(\Lambda)$.
    Let $X$ denote the random variable equal to the number of copies $C\in \FHs_\varphi(v)$ which are also $\varphi$-colorful in $F$.
    That is, $X=\sum_{C\in \FHs_G(v)}X_C$. Note that $v$ is $P$-discovered if and only if $X>0$.
    We have
    \begin{align*}
        \Exp{X}=\sum_{C\in \FHs_G(v)}\Exp{X_C}\leq t_G(v)\cdot \Exp{X_{C_0}}
        \leq \frac{\Lambda}{Y}\cdot \frac{q}{\Lambda}= \frac{q}{Y}\;.
    \end{align*}
    Therefore, by Markov's inequality:
    $\Pr{X>0}= \Pr{X\geq1}\leq \Exp{X}\leq \frac{q}{Y}$,
    where the first equality follows because $X$ takes integral values.
    Since $v$ is $P$-discovered if and only if $X>0$, we get that $v[P]$ is bounded by $\frac{q}{Y}$, which completes the proof.
\end{proof}
For the rest of this section we show that any heavy vertex $v$ has a vector $P$ such that $v[P]$ is sufficiently large.
\begin{proposition}\label{prop:heavy main}
    Fix a vertex $v$ with $t_G(v)\geq \Lambda\cdot \ctz[h-1]\cdot 2/\sq$, where $\sq\triangleq \frac{1}{h^h}$.
    Then, there exists a vector $P\in \Prod(\Lambda)$ such that $v[P]\geq \phval\cdot s$.
\end{proposition}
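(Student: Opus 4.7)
My plan is to build $P \in \Prd$ by dyadically decomposing the $h$-cycles through $v$ and tuning the sampling probabilities to the dominant dyadic profile. For each position $i \in \{2,\ldots,h\}$ along a cycle through $v$ (with $v$ at position $1$) and each integer $j \in \{0,1,\ldots,\lceil \log \Lambda \rceil + 1\}$, define $U_{i,j}$ as the set of vertices $u$ for which the number of $h$-cycles through $v$ placing $u$ at position $i$ lies in $[2^{j-1}, 2^j)$. Every $h$-cycle through $v$ has a type $(j_2,\ldots,j_h)$ indicating which $U_{i,j_i}$ contains its $i$-th vertex. Because there are at most $(\log \Lambda + 2)^{h-1}$ types while there are $t_G(v)$ cycles, pigeonholing yields a dominant type $(j_2^\ast,\ldots,j_h^\ast)$ hosting $N \geq t_G(v)/(\log\Lambda+2)^{h-1}$ cycles. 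Setting $m_i := |U_{i,j_i^\ast}|$, the definitions force $N \leq 2^{j_i^\ast} m_i$ for every $i$, and trivially $N \leq \prod_{i\geq 2} m_i$.

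I would then set $P^\ast$ by $p_1^\ast = 1$ (so $v$ is retained whenever it receives color $1$) and, for each $i \geq 2$, $p_i^\ast := \min(1, 2^{-\lfloor \log m_i \rfloor})$, which satisfies $1/m_i \leq p_i^\ast \leq 2/m_i$ and lies in $\mathcal{A}(\Lambda)$. In particular $p_i^\ast m_i \geq 1$. The trivial bound $\prod m_i \geq N \geq t_G(v)/(\log \Lambda+2)^{h-1}$, combined with the hypothesis $t_G(v) \geq \Lambda\cdot \ctz[h-1]\cdot 2/s$, leaves enough polylogarithmic slack to absorb both the pigeonhole loss $(\log \Lambda + 2)^{h-1}$ and the rounding loss $2^{h-1}$, yielding $\prod m_i \geq 2^{h-1}\Lambda$ and hence $\prod p_i^\ast \leq 2^{h-1}/\prod m_i \leq 1/\Lambda$. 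Thus $P^\ast \in \Prod(\Lambda)$.

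To lower bound $v[P^\ast]$, I would condition on the event $\mathcal{E}$ that the random coloring $\varphi$ colors the vertices of some fixed dominant-type cycle $C_0 = (v, u_2^{(0)}, \ldots, u_h^{(0)})$ by $\varphi(v)=1$ and $\varphi(u_i^{(0)}) = i$ for $i\geq 2$; by independence of the coloring, $\Pr[\mathcal{E}] = 1/h^h = s$. Given $\mathcal{E}$, vertex $v$ is deterministically kept via $p_1^\ast = 1$; for each $i \geq 2$, since sampling is independent across color classes and $p_i^\ast m_i \geq 1$, the probability that at least one vertex of $U_{i,j_i^\ast}$ is retained is $1 - (1-p_i^\ast)^{m_i} \geq 1 - 1/e$, and these $h-1$ events are mutually independent. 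The critical step is to translate "at least one vertex survives per class" into "an actual dominant-type cycle through $v$ survives"; for this I would apply a second-moment / Paley--Zygmund argument on the count of surviving dominant-type cycles, controlling its variance via a cycle-pair counting estimate (in the spirit of \Cref{claim:H share i vertices}) that leverages the dyadic cap $N \leq 2^{j_i^\ast} m_i$ to bound how many cycles any shared vertex contributes. Combining the $(1-1/e)^{h-1}$ conditional factor with $\Pr[\mathcal{E}] = s$ then yields $v[P^\ast] \geq (1-1/e)^{h-1}\cdot s$. The main obstacle is precisely this last translation: the per-class survival events are trivially independent, but actually extracting a surviving $h$-cycle from them requires the dyadic structure together with a delicate variance estimate, since the naive inequality $\Exp{X}\geq 1$ on the surviving-cycle count is not immediate and $\Exp{X^2}$ must be bounded using that each $u\in U_{i,j_i^\ast}$ participates in only $O(2^{j_i^\ast})$ cycles of the dominant type.
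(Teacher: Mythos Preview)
Your dyadic per-position decomposition is attractive, but the choice $p_i^\ast \approx 1/m_i$ has a fatal gap precisely at the step you flagged. Take $h=3$ and suppose the cycles through $v$ are exactly $(v,u_i,w_{i,j})$ for $i\in[K]$, $j\in[K]$, so $t_G(v)=K^2$; choose $\Lambda$ a polylog factor below $K^2$ so the hypothesis is satisfied. There is a single type, with $m_2=K$ and $m_3=K^2$, and your vector is $P^\ast=(1,\Theta(1/K),\Theta(1/K^2))\in\Prod(\Lambda)$. Even ignoring the coloring, the expected number of surviving dominant-type cycles is $\Exp{X}=N\cdot p_2^\ast p_3^\ast=\Theta(K^2/K^3)=\Theta(1/K)$; since $X$ is a nonnegative integer, Markov already gives $v[P^\ast]\le\Pr{X>0}\le\Exp{X}=\Theta(1/K)\to 0$. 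No Paley--Zygmund estimate can rescue this once $\Exp{X}\to 0$, so the second-moment route is closed. (As a side remark, your $(1-1/e)^{h-1}$ factor only bounds the probability that \emph{some} vertex from each $U_{i,j_i^\ast}$ survives; surviving representatives from different positions need not lie on a common cycle, which is exactly the issue.)

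The root cause is that the profile $(m_i,j_i^\ast)_i$ forgets all cross-position structure: in the example the $K^2$ position-$3$ vertices split into $K$ blocks of size $K$, each reachable from a single $u_i$, and $p_3^\ast=1/K^2$ is far too small to hit the correct block after committing to a random $u_i$. The paper's proof avoids this by induction on $h$ rather than fixing all $p_i$ at once: it buckets one coordinate (the last color class) dyadically into sets $Q_k$, chooses $p_h=2^{-k}$ for a bucket with $\abs{Q_k}\ge 2^k$, and then recurses into the $(h{-}1)$-partite auxiliary graph $H_u$ of cycles through both $v$ and a specific $u\in Q_k$ to determine the remaining coordinates; a final pigeonhole over $\Prodd_{h-1}$ synchronizes the inner vector across $\ge 2^k$ vertices of $Q_k$. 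In the example this correctly produces $P=(1,1/K,1/K)$, for which $v[P]\ge(1-1/e)^2$. This inductive coupling of the later sampling rates to the structure revealed after earlier choices is the ingredient your direct approach is missing.
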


We prove the proposition by induction on $h$, for which the base case is $h=3$.

The intuition for the proof is as follows. Consider the base case of triangles. Let $V_i$ denote the $i$-th color class.
Consider finding the heavy vertices in the first color class.
Consider all vectors $P_i=(1,2^{-i},\frac{2^i}{\Lambda})$, for $0\leq i\leq\log(\Lambda) + 1$. %
These vectors form a subset of $\Prd$.
Fix some $\Lambda$-heavy vertex $v\in V_1$.
We want to prove that for at least one $i\in[\log(\Lambda)]$ we have $v[P_i]\geq \frac{q(\eo)^2}{2}$.
Our choice for the vectors is designed to deal with the following two extreme cases.
The first case is that every $C\in\FHs_\varphi(v)$ intersects one specific vertex $u\in V_2$. For this case, the vector $P_0 = (1,1,\frac{1}{\Lambda})$ is the right choice for discovering $v$ because it maximizes the probability we hit $u$ and a common neighbor of $v$ and $u$, under the constraint that the sampling probability $p_1\cdot p_2\cdot p_3\leq 1/\Lambda$.
The second case is that among each of $V_2\cap N(v)$ and $V_3\cap N(v)$, there are $\sqrt{\Lambda}$ vertices that
connected as a complete bipartite graph (contained in $V_2\times V_3$).
For this case, the vector $P_{i}$ for $i=\log(\Lambda)/2$ is the right choice for discovering $v$, because $P_i = (1,\frac{1}{\sqrt{\Lambda}},\frac{1}{\sqrt{\Lambda}})$.

For a larger $h$, the ``hard'' case is the following generalization of the above. From each color class $V_i$ for $i\in\{3,\dots,h\}$ we take $k=\Lambda^{1/(h-2)}$ vertices and connect them by a complete $(h-2)$-partite graph. We then take $v\in V_1$ and $u\in V_2$, we add an edge between $v$ and $u$, and we connect $v$ to all the aforementioned $k$ vertices from $V_{h}$ and we connect $u$ to all the aforementioned $k$ vertices from $V_{3}$. The vector $(1,1,\frac{1}{k},\dots,\frac{1}{k})$ is the right choice for this case.

Roughly speaking, we show the set $\Prd$ gradually shifts from handling one extreme case to another, and therefore ``covers'' all cases in between, which are the different ways to split vertices in $h$-cycles into $h$ pieces whose product of sizes is $\Lambda$.

We need a final technical step before presenting the proof.
First, we construct an $h$-partite graph which is easier to work with.
For this, we sample uniform $h$ coloring of the vertices of $G$. For $i\in [h]$ let $V_i$ denote the set of vertices colored in the $i$-th color.
For every $i\in[h]$ we keep only edges between the vertices of $V_i$ and $V_{i+1 \mod h}$, and direct those edges from $V_i$ to $V_{i+1}$.
Let $\Gc$ denote the obtained directed graph.
We emphasize that $t_{\Gc}(v)\leq \tv$, as the latter counts all colorful cycles in which $v$ participates in, whereas the former counts only colorful cycles with edges between $V_i$ and $V_{i+1}$ in which $v$ participates in.
We prove that a heavy vertex will be $P$-discovered on $\Gpc$, with probability at least $\Omc$, by some $P\in\Prd$, whereas light vertices will only be discovered with probability $\BO{1/\log n}$. In other words, since the gap between the heavy and light vertices is sufficiently big, we can still distinguish between the two in $\Gpc$.

The rest of this section is dedicated to proving the following proposition
\begin{lemma}\label{prop:discover over Gpi}
    Fix a vertex $v$, and an $h$-coloring of the vertices of $G$.
    Suppose $\tvc\geq \Lambda\cdot \ctz[h-1]$.
    Then, there exists a vector $P\in \Prod(\Lambda)$ such that the probability that $v$ gets $P$-discovered in $\Gpc$ is at least $\phval$.
\end{lemma}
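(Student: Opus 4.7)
The plan is to reduce the problem to a dyadic bucketing of the colorful $h$-cycles through $v$ in $\Gpc$ into uniform ``shape classes'' and then to match the coordinates $p_j$ of a vector $P\in\Prod(\Lambda)$ to the dyadic sizes identified by the bucketing. The polylog slack $\ctz[h-1]=(2h\log n)^{(h-1)^2}$ in the heaviness hypothesis is exactly what absorbs the losses from (i) the pigeonhole step in the bucketing, (ii) the discretization of $\Prod(\Lambda)$ to powers of two, and (iii) the second-moment analysis of the discovery event.

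\textbf{Bucketing and choice of $P$.} I would iterate the dyadic bucketing position by position. Starting with $\mathcal{C}_1=\FHs_\varphi(v)$, at step $j=2,\ldots,h$ consider the prefix statistic $D_j(u_2,\ldots,u_{j-1})$ counting the ways to extend the partial path $v\to u_2\to\cdots\to u_{j-1}$ into a cycle in $\mathcal{C}_{j-1}$, bucket the prefixes by $D_j\in[2^i,2^{i+1})$, pigeonhole to fix an index $i_j^\star$, and restrict to cycles whose prefix lands in that bucket. After $h-1$ iterations, one obtains dyadic scales $s_2,\ldots,s_h$ and a sub-collection $\mathcal{C}^\star\subseteq \FHs_\varphi(v)$ of size at least $\tvc/(\log n)^{O(h^2)}$ with the uniform branching property: for every prefix $(u_2,\ldots,u_{j-1})$ appearing in $\mathcal{C}^\star$, there are $\approx s_j$ extensions $u_j$ that keep the cycle in $\mathcal{C}^\star$. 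This forces the product identity $|\mathcal{C}^\star|\approx \prod_{j\ge 2} s_j$ (up to polylog factors absorbed by $\ctz[h-1]$). Set $p_1=1$ and $p_j=\max(2^{-\lceil\log s_j\rceil},\,2^{-\log\Lambda-1})$, so each $p_j$ lives in the allowed discretization and $p_j s_j\in[1/2,1]$ whenever $s_j\le 2\Lambda$; if instead some $s_j>2\Lambda$ is clipped, then already $p_j=1/(2\Lambda)$ makes $\prod p_j\le 1/(2\Lambda)$, and otherwise $\prod p_j\le 1/\prod s_j\le 1/\Lambda$ by the product identity combined with $|\mathcal{C}^\star|\ge \Lambda\cdot\ctz[h-1]/(\log n)^{O(h^2)}\ge \Lambda\cdot 2^{h-1}$.

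\textbf{Discovery analysis.} Let $X$ count the cycles of $\mathcal{C}^\star$ that survive the $P$-sampling (all their non-$v$ vertices retained). By the product identity $E[X]=|\mathcal{C}^\star|\prod_{j\ge 2}p_j\ge |\mathcal{C}^\star|/(2^{h-1}\prod s_j)=\Omega(\ctz[h-1]/(\log n)^{O(h^2)})$, which is polylogarithmically large. For the second moment, pairs $(C,C')\in\mathcal{C}^\star\times\mathcal{C}^\star$ that disagree on exactly the coordinates $I\subseteq\{2,\ldots,h\}$ survive jointly with probability $\prod_{j\in I}p_j^2\cdot\prod_{j\notin I}p_j$, and the number of such pairs is at most $|\mathcal{C}^\star|\cdot\prod_{j\in I}(2s_j)$ by the uniform branching. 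Summing over $I$ gives $E[X^2]\le E[X]\cdot\prod_{j\ge 2}(1+2s_j p_j)\le 3^{h-1}E[X]$, since $p_j s_j\le 1$ after the rounding. Paley--Zygmund then yields $v[P]\ge \Pr[X\ge 1]\ge E[X]^2/E[X^2]\ge E[X]/3^{h-1}$, which exceeds $\phval$ once $n$ is sufficiently large; for small $n$ the lemma is either trivial or handled by tuning the constants in $\ctz[h-1]$.

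\textbf{Main obstacle.} The central technical hurdle is the iterated pigeonhole step: one must argue that restricting to a single bucket at every position leaves enough uniform branching intact that the product identity $|\mathcal{C}^\star|\approx \prod s_j$ genuinely holds, and simultaneously track the $(h-1)$ layers of $\Theta(\log n)$ pigeonhole losses. The exponent $(h-1)^2$ in $\ctz[h-1]$ appears precisely because each of the $h-1$ bucketing levels may lose up to $(\log n)^{h-1}$ factors (one per subsequent position that needs to ``align its scale'' with the current restriction), and this must be dominated by the polylog slack in the heaviness. A secondary subtlety is keeping each factor $(1+2s_jp_j)$ in the second-moment sum bounded by $3$, which is why the rounding of $p_j$ to $2^{-\lceil\log s_j\rceil}$ (rather than a finer grid) is chosen.
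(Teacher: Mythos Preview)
Your second-moment approach contains a genuine quantitative gap that prevents it from reaching the constant $(1-1/e)^{h-1}$ required by the lemma.

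\textbf{The error in the expectation computation.} You set $p_j\approx 1/s_j$ and simultaneously assert the product identity $|\mathcal{C}^\star|\approx\prod_j s_j$. These two facts together force
\[
E[X]=|\mathcal{C}^\star|\prod_{j\ge 2}p_j\;\in\;\Bigl[\,2^{-(h-1)},\,2^{h-1}\Bigr],
\]
a bounded constant, \emph{not} $\Omega(\ctz[h-1]/(\log n)^{O(h^2)})$ as you claim. The polylog slack in the hypothesis is consumed entirely by ensuring $\prod_j p_j\le 1/\Lambda$ after the bucketing loss; none of it transfers into $E[X]$. The displayed equality $|\mathcal{C}^\star|/(2^{h-1}\prod s_j)=\Omega(\text{polylog})$ is simply false under the product identity.

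\textbf{Why the constant falls short.} With $E[X]=\Theta(1)$, Paley--Zygmund gives only $\Pr[X>0]\ge E[X]/3^{h-1}$, which is below $(1-1/e)^{h-1}$. More structurally: with dyadic buckets (branching in $[s_j,2s_j)$) one has $E[X^2]\le E[X]\prod_j(1+2s_jp_j)$ and $E[X]\ge\prod_j s_jp_j$, so
\[
\Pr[X>0]\;\ge\;\prod_{j\ge 2}\frac{s_jp_j}{1+2s_jp_j}\;<\;(1/2)^{h-1},
\]
since $x/(1+2x)<1/2$ for all finite $x$. Because $1/2<1-1/e$, the dyadic second-moment route is \emph{intrinsically capped below} the target $(1-1/e)^{h-1}$, regardless of how you tune $p_j$ (including boosting $p_j$ using the polylog slack, which only drives the bound toward its $(1/2)^{h-1}$ ceiling). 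Your approach does establish a weaker constant lower bound on $v[P]$, which would suffice for the algorithm, but it does not prove the lemma as stated.

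\textbf{Contrast with the paper.} The paper avoids second moments altogether and proceeds by induction on $h$. In the inductive step it builds, for each $u\in V_h$, an auxiliary $(h-1)$-partite graph $H_u$ and invokes the hypothesis there; a pigeonhole over the (polylog-sized) vector set $\Prodd_{h-1}$ then finds a single $(h-1)$-vector that works for $\ge 2^k$ many $u$'s simultaneously. Appending the coordinate $p_h=2^{-k}$ yields the desired $P\in\Prod(\Lambda)$, and the discovery probability factorizes exactly as (probability some good $u$ is sampled)$\times$(probability $v$ is discovered in $H_u$), giving one clean $(1-1/e)$ factor per level and hence $(1-1/e)^{h-1}$ overall. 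This direct ``hit at least one'' analysis is what extracts the sharp constant that Paley--Zygmund cannot.
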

\begin{proof}[Proof of \Cref{prop:discover over Gpi}]
    We prove \Cref{prop:discover over Gpi} by induction on $h$.
    We start with the base case, i.e., $h=3$.
    We fix some $h$-coloring $\varphi$.
    Recall that we denote the $i$-th color class by $V_i$.
    Consider a vertex $v\in V_1$ with $\tvc\geq \Lambda\cdot \ct[4]$.
    To simplify the notation,
    we use $G'=\Gpc$.
    We will prove that there exists a vector $P\in \Prd$, such that the probability that the vertex $v$ is $P$-discovered over $G'$ is at least $\phval=\phval[2]$.
    Let $K_0=\log(\Lambda)+1$.
    We consider a subset of vectors in $\Prd$ of the form $P_i=(1,p_i,q_i)$, for $i\in\zrn{K_0}$, where $p_{i}\triangleq 2^{-i}\;, q_{i}\triangleq \tfrac{2^i}{\Lambda}$.

    We partition $V_2$ into classes as follows. For $k\in\zrnone{K_0}$, let
    \begin{align*}
        Q_k=\set{u\in V_2\mid \abs{\FHs_{G'}(v)\cap \FHs_{G'}(u)}\in\left[\frac1{q_k},\frac2{q_k}\right)}\;, &  & Q_0=\set{u\in V_2\mid \abs{\FHs_{G'}(v)\cap \FHs_{G'}(u)}\geq 1/q_0}\;.
    \end{align*}
    In words, the set $Q_0$ consist of all vertices $u\in V_2$, such that the number of $3$-cycles in $G'$ that contain both $v$ and $u$ is at least $1/q_0=\Lambda$, and for $k>0$, we have that $u\in Q_k$ if and only if the number of $3$-cycles in $G'$ that contain both $v$ and $u$ is at least $1/q_k=\Lambda/2^i$ and less than $2/q_k=\Lambda/{2^{i-1}}$.
    We claim that there exists $k\in\set{0,1,\ldots, K_0}$ such that $\abs{Q_k}\geq 1/p_k$, which we will later prove.
    Given that this holds, we show that for such $k$, we have $v[P_k]\geq \phval[2]$.

    \renewcommand{\DD}{\mathcal{D}}
    Fix $k\in\set{0,1,\ldots, K_0}$ where $\abs{Q_k}\geq 1/p_k$.
    For any non-empty subset $S\subseteq Q_k$, let $\Em{1}(S)$ denote the event that $\set{Q_k\cap V_2[p_k]=S}$.
    That is, $\Em{1}(S)$ denotes the event that during the $P_k$-discovery experiment, the set of vertices that were sampled from $V_2\cap Q_k$ is exactly the set $S$.
    Let $R(S)$ denote the subset of vertices in $V_3$ which participate in a $3$-cycle (in $G'$) that contains $v$ and some additional vertex from $S$.
    Let $\Em{2}(S)$ denote the event that $\set{V_3[q_k]\cap R(S)\nemp}$.
    That is, $\Em{2}(S)$ denotes the event that during the $P_k$-discovery experiment, the set of vertices that were sampled from $V_3\cap R(S)$ is not empty.
    We show that
    \begin{align}
        v[P_k]\geq \sum_{S: \emptyset\subsetneq S\subseteq Q_k} \Pr{\Em{1}(S)\cap \Em{2}(S)} \geq (\eo)^2 \label{eq3:v Pk}\;.
    \end{align}
    For every fixed non-empty $S\subseteq Q_k$, the events $\Em{1}(S)$ and $\Em{2}(S)$ are independent, since $\Em{1}(S)$ addresses sampling vertices from $V_2$, while $\Em{2}(S)$ addresses sampling vertices from $V_3$, where the two samples are independent of each other.
    Also note that $\Em{1}(S)\cap \Em{2}(S)$ is contained in the event that $v$ is $P_k$-discovered, and since the events $\set{\Em{1}(S)}_{S\subseteq Q_k}$ are pairwise disjoint, so are the events $\set{\Em{1}(S)\cap \Em{2}(S)}_{S\subseteq Q_k}$. Therefore, the event that $v$ is $P_k$-discovered, contains the union of the following pairwise disjoint events $\sset{\bigcup_{S: \emptyset\subsetneq S\subseteq Q_k}\Em{1}(S)\cap\Em{2}(S)}$.
    We get
    \begin{align*}
        v[P_k]
         & \geq \Pr{\bigcup_{S: \emptyset\subsetneq S\subseteq Q_k}\Em{1}(S)\cap\Em{2}(S)} \\
         & = \sum_{S: \emptyset\subsetneq S\subseteq Q_k} \Pr{\Em{1}(S)\cap \Em{2}(S)}     \\
         & = \sum_{S: \emptyset\subsetneq S\subseteq Q_k} \Pr{\Em{1}(S)}\Pr{\Em{2}(S)}\;.
    \end{align*}
    To complete the proof, we need to show that
    \begin{enumerate}
        \item $\sum_{S: \emptyset\subsetneq S\subseteq Q_k}\Pr{\Em{1}(S)}\geq \eo$, and
        \item for every non-empty $S\subseteq Q_k$, we have $\Pr{ \Em{2}(S)}\geq \eo$.
    \end{enumerate}
    The first claim follows as
    \begin{align*}
        \sum_{S: \emptyset\subsetneq S\subseteq Q_k}\Pr{\Em{1}(S)}=\Pr{Q_k[p_k]\nemp}=1-(1-p_k)^{\abs{Q_k}}\geq \eo\;,
    \end{align*}
    where the inequalities hold for the following reasons. The first two equalities follows from definition, and the last inequality follows from the assumption that $\abs{Q_k}\geq 1/p_k$.

    The second claim follows as every non-empty subset $S\subseteq Q_k$ satisfies $\abs{R(S)}\geq 1/q_k$.
    To see this, fix some vertex $u\in S$. We have $R(u)\subseteq R(S)$, and $\abs{R(u)}\geq 1/q_k$ because $u\in Q_k$.
    Therefore, for any such $S$, we have
    \begin{align*}
        \Pr{\Em{2}(S)}=\Pr{R(S)[q_k]\nemp}=1-(1-q_k)^{\abs{R(S)}}\geq 1-(1-q_k)^{1/q_k}\geq \eo\;.
    \end{align*}
    This completes the proof of \Cref{eq3:v Pk}.

    \medskip
    To complete the proof of the induction base, it remains to show that our claim indeed holds, i.e., that there exists $k\in\set{0,1,\ldots, K_0}$ such that $\abs{Q_k}\geq 1/p_k$.
    Assume towards a contradiction that this is not the case, that is, that $v$ is a vertex with $t_{G'}(v)\geq \Lambda \cdot \ct[4]$ and for every $k\in\zrn{K_0}$, we have $\abs{Q_k}\cdot p_k<1$. Note that this implies that $Q_0$ is empty.
    Then,
    \begin{align*}
        t_{G'}(v)
         & =   \sum_{u\in[V_2]}\abs{\FHs_{G'}(v)\cap \FHs_{G'}(u)}                            \\
         & <   \sum_{u\in[V_2]}\sum_{1\leq k\leq K_0}\chi_{\set{u\in Q_k}}\cdot \frac{2}{q_k} \\
         & =   \sum_{1\leq k\leq K_0}\sum_{u\in[V_2]}\chi_{\set{u\in Q_k}}\cdot \frac{2}{q_k} \\
         & =   \sum_{1\leq k\leq K_0}\abs{Q_k}\cdot \frac{2}{q_k}
    \end{align*}
    Because we assumed towards a contradiction that for every $k\in\set{0,1,\ldots, K_0}$ we have $\abs{Q_k}<1/p_k$, we get
    \begin{align*}
        \sum_{1\leq k\leq K_0}\abs{Q_k}\cdot \frac{2}{q_k}<
        \sum_{1\leq k\leq K_0}\frac{1}{p_k}\cdot \frac{2}{q_k}=
        \sum_{1\leq k\leq K_0}2\Lambda=2K_0\cdot \Lambda\;.
    \end{align*}
    This is a contradiction, as $2K_0\cdot \Lambda\leq \Lambda \cdot \ct[4]\geq \tvc$.
    This completes the proof of the base case.

    \paragraph*{Induction Step.}
    We are now ready to prove the induction step.
    The induction hypothesis is as follows.
    For an $h$-coloring $\varphi$ and a vertex $v$ with $\tvc\geq \Lambda\cdot \ctz[h-1]$, there exists a vector $P\in \Prd$ such that the probability that $v$ gets $P$-discovered over $\Gpc$ is at least $\phval$.
    We assume the induction hypothesis is true for $h-1$, and prove it for $h$.
    For brevity, we write $G'$ instead of $\Gpc$.

    We define a new set of auxiliary graphs $H_u$ for every $u\in V_h$.
    The graph $H_u$ is an $(h-1)$-partite graph, obtained by removing the vertices and edges of the vertex set $V_h$.
    Then, for every pair of vertices $y\in V_{h-1}$, and $x\in V_1$, add an edge from $y$ to $x$ if and only if $y\to u\to x$ is a directed path in $G'$.
    For any vector $X\in[0,1]^{h-1}$, we say that $X$-discovers $v$ over $H_u$ if after sampling vertices from $H_u$ according to $X$, the vertex $v$ is part of an $(h-1)$-cycle in the sampled subgraph. We denote the probability that $v$ is $X$-discovered over $H_u$ by $v_{H_u}[X]$.

    First, let us provide some insights into how we utilize these auxiliary graphs.
    Let $Y'=(p_1,p_2,\ldots,p_h)$ be some vector in $[0,1]^{h}$, and let $Y$ be its restriction on the first $h-1$ coordinates.
    Informally, for every $u\in V_h$, we reduce the $Y'$-discovery experiment over $G'$, into two independent experiments, one over $V_h$, and one over $H_u$.
    This allows us to show that the probability that $v$ is $Y'$-discovered over $G'$ is at least the probability that $u$ is sampled in the $Y'$-discovery experiment over $G'$, multiplied by the probability that $v$ is $Y$-discovered over $H_u$.
    This can be formalized by the following inequality:
    \begin{align}
        v[Y']\geq v_{H_u}[Y]\cdot p_h\;. \label{eq:claim1}
    \end{align}
    \newcommand{\szy}{S_Y}
    We'll prove a stronger claim, which generalizes \Cref{eq:claim1}.
    The claim is as follows.
    Let $Y'=(p_1,p_2,\ldots,p_h)$ be some vector in $[0,1]^{h}$, let $Y=(p_1,p_2,\ldots, p_{h-1})$, its restriction on the first $h-1$ coordinates, and let $\alpha\geq 0$. Define the set $\szy$ as the set of vertices $u\in V_h$ such that $v_{H_u}[Y]\geq \alpha$.
    Then, $v[Y']\geq \alpha\cdot \Pr{\szy[p_h]\nemp}$.
    We prove this statement in \Cref{prop1:set}.

    ~\\ For the rest of the proof, we find $\szy$ with the ``right'' parameters, such that $v[Y']\geq \phval$.
    In other words, we prove that there exists an index $k\in \zrnone{\log(\Lambda)+1}$, a vector $X\in\Prodd_{h-1}(\frac{2^k}{\Lambda})$, and a set $Q_{k,X}\subseteq V_h$ of size at least $2^k$, such that every vertex $u\in Q_{k,X}$ satisfies $v_{H_u}[X]\geq \phval[h-2]$.
    We prove this statement in \Cref{thm:key}.

    Using \Cref{thm:key,prop1:set}, we can prove the induction step, which we do next.
    Fix some $k$, a vector $X$, and a set $Q_{k,X}$ as specified in \Cref{thm:key}.
    Using $X$, we can define a new vector $X'$ which has $h$ entries, one additional entry compared to $X$, as follows.
    For $i\in [h-1]$, the $i$-th entry of $X'$ is equal the $i$-th entry of $X$.
    The $h$-th entry of $X'$ is equal to $1/2^k$.
    As $X\in\Prodd_{h-1}(\frac{2^k}{\Lambda})$, we have that $X'\in \Prd$. We claim that $v[X']\geq\phval$, which completes the proof of the induction step.
    This follows by \Cref{prop1:set}, as $v[X']\geq \alpha\cdot \Pr{Q_{k,X}[1/2^k]\nemp}$, where $\alpha\geq \phval[h-2]$, by the definition of $Q_{k,X}$. Also, $\Pr{Q_{k,X}[p_h]\nemp}\geq \eo$, as $\abs{Q_{k,X}}\geq 2^k$.
    This completes the proof of the induction step, and therefore the proof of \Cref{prop:discover over Gpi}.
    
\end{proof}
We are left with proving \Cref{prop1:set,thm:key} which contains the statements we used in the induction step. We use the notation from the induction step.
\newcommand{\szy}{S_Y}
\begin{claim}\label{prop1:set}
    Let $Y'=(p_1,p_2,\ldots,p_h)$ be some vector in $[0,1]^{h}$, and let $Y=(p_1,p_2,\ldots, p_{h-1})$.
    Let $\alpha\geq 0$ and let $\szy$ be a subset of vertices from $V_h$, where for every vertex $u\in \szy$, we have $v_{H_u}[Y]\geq \alpha$.
    Then, $v[Y']\geq \alpha\cdot \Pr{\szy[p_h]\nemp}$.
\end{claim}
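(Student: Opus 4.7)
The plan is to decouple the sampling on $V_h$ from the sampling on $V_1,\ldots,V_{h-1}$ in the $Y'$-discovery experiment on $G_\varphi$, and then use the auxiliary graph $H_u$ to lift $(h-1)$-cycles on $H_u$ to $h$-cycles on $G_\varphi$.

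First, I would name the random objects of the $Y'$-discovery experiment on $G_\varphi$: let $T\subseteq V_h$ be the random subset obtained by keeping each vertex of $V_h$ independently with probability $p_h$, and for $i\in[h-1]$ let $W_i\subseteq V_i$ be the random subset obtained by keeping each vertex of $V_i$ independently with probability $p_i$. These $h$ subsets are mutually independent, and the tuple $(W_1,\ldots,W_{h-1})$ is exactly what drives a $Y$-discovery experiment on any $H_u$: its distribution does not depend on $u$.

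Second, I would perform a pointwise coupling. Condition on any realization of $T$ with $T\cap S_Y\nemp$, and let $u^\star=u^\star(T)$ be an arbitrary (say, smallest-index) vertex of $T\cap S_Y$. By construction of $H_{u^\star}$, any colorful $(h-1)$-cycle $(v_1,\ldots,v_{h-1})$ in $H_{u^\star}$ with $v_i\in W_i$ for every $i$ is obtained by replacing the directed length-$2$ path $v_{h-1}\to u^\star\to v_1$ of $G_\varphi$ by its shortcut edge in $H_{u^\star}$; hence $(v_1,\ldots,v_{h-1},u^\star)$ is a colorful $h$-cycle of $G_\varphi$, and since $u^\star\in T$ and $v_i\in W_i$ every one of its vertices is kept by the $Y'$-experiment. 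Consequently, conditional on this realization of $T$, the event ``$v$ is $Y$-discovered on $H_{u^\star}$'' implies the event ``$v$ is $Y'$-discovered on $G_\varphi$''.

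Third, I would take expectations. Because $(W_1,\ldots,W_{h-1})$ is independent of $T$, the conditional probability over the $W_i$'s that $v$ is $Y$-discovered on $H_{u^\star(T)}$ is at least $\alpha$ whenever $T\cap S_Y\nemp$, by the defining property of $S_Y$. Combined with the implication above, this gives
\[
\Pr{v \text{ is } Y'\text{-discovered in } G_\varphi \,\big|\, T}\;\geq\;\alpha \quad \text{whenever } T\cap S_Y\nemp.
\]
Taking expectation over $T$ yields $v[Y']\geq \alpha\cdot\Pr{T\cap S_Y\nemp}=\alpha\cdot\Pr{S_Y[p_h]\nemp}$, which is the claim. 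The only subtle point is stating explicitly that ``$v$ is $Y$-discovered on $H_u$'' depends only on $(W_1,\ldots,W_{h-1})$ (hence is independent of $T$), and that the lift $(v_1,\ldots,v_{h-1})\mapsto(v_1,\ldots,v_{h-1},u)$ really lands inside the $Y'$-sampled subgraph of $G_\varphi$ when $u\in T$; once those two items are checked the displayed inequality plus the law of total probability close the proof.
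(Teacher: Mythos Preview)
Your proof is correct and follows essentially the same approach as the paper: both decouple the $V_h$-sampling from the sampling on $V_1,\ldots,V_{h-1}$, use independence, and lift an $(h-1)$-cycle in $H_u$ (for some sampled $u\in S_Y$) to an $h$-cycle in $G_\varphi$. The only cosmetic difference is that the paper partitions explicitly over the realized subset $S=S_Y\cap W_h$ and sums $\sum_S \Pr{\mathcal{E}_1(S)}\Pr{\mathcal{E}_2(S)}$, whereas you condition on the full sample $T$ and pick a single representative $u^\star(T)$; your formulation is slightly more streamlined but the argument is the same.
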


\begin{proof}[Proof of \Cref{prop1:set}]
    Fix $Y'$ and $\szy$.
    We sample vertices from $V(G')$ according to $Y'$. Let $W$ denote the random variable equal to the set of sampled vertices, and let $W_i\triangleq W\cap V_i$ for $i\in[h]$.

    For any non-empty subset $S\subseteq \szy$, let $\Em{1}(S)$ denote the event that $\set{\szy\cap W_h=S}$.
    That is, $\Em{1}(S)$ denotes the event that during the $Y'$-discovery experiment over $G'$, the set of vertices that were sampled from $V_h\cap \szy$ is exactly the set $S$.

    For every $u\in \szy$, let $\Em{2}(u)$ denote the event that the graph $H_u[W-V_h]$ has an $(h-1)$-cycle, which contains $v$.
    Note that $\Em{2}(u)$ is exactly the event that $v$ is $Y$-discovered over $H_u$. Therefore, we have $\Pr{\Em{2}(u)}=v_{H_u}[Y]\geq \alpha$.
    For any non-empty subset $S\subseteq \szy$, let $\Em{2}(S)$ denote the event that $\set{\bigcup_{u\in S}\Em{2}(u)}$. That is, the event that at for at least one vertex $u\in S$ the random subgraph $H_u[W-V_h]$ has an $(h-1)$-cycle, which contains $v$.
    Next, we show that
    \begin{align}
        v[Y']\geq \sum_{S: \emptyset\subsetneq S\subseteq \szy} \Pr{\Em{1}(S)\cap \Em{2}(S)} \geq \alpha \cdot \Pr{\szy[p_h]\nemp} \label{eq4:vY}\;.
    \end{align}
    We prove several properties of $\Em{1}(S)$ and $\Em{2}(S)$ for every fixed non-empty $S\subseteq Q_k$.
    \begin{enumerate}
        \item $\Em{1}(S)$ and $\Em{2}(S)$ are independent.
        \item $\Em{1}(S)\cap \Em{2}(S)$ is contained in the event that $v$ is $Y'$-discovered.
        \item The events $\set{\Em{1}(S)\cap \Em{2}(S)}_{S\subseteq \szy}$ are pairwise disjoint.
    \end{enumerate}
    The first property follows since $\Em{1}(S)$ addresses sampling vertices from $V_h$, while $\Em{2}(S)$ addresses sampling vertices from $V-V_h$, where the two samples are independent of each other.

    The second property follows from the definitions:
    For any such $S$, the event $\Em{2}(S)$ implies that $W$ contains
    a subset of vertices $(v,v_2,v_{h-1})$ which is an $(h-1)$-path in $H_u$ for some $u\in S$.
    This means that the set of vertices $(v,v_2,v_{h-1})$ is an $(h-1)$-path in $G'$ and that $(v,v_2,v_{h-1},u)$ is an $h$-cycle in $G'$.
    The event $\Em{1}(S)$ implies that $u\in S$ was sampled into $W$, and therefore $v$ is a part of an $h$-cycle in $G'[W]$, which means $v$ was discovered.

    The third property follows because the events $\set{\Em{1}(S)}_{S\subseteq \szy}$ are pairwise disjoint, and therefore so are the events $\set{\Em{1}(S)\cap \Em{2}(S)}_{S\subseteq \szy}$.

    \medskip
    Therefore, the event that $v$ is $Y'$-discovered, contains the union of the following pairwise disjoint events $\sset{\bigcup_{S: \emptyset\subsetneq S\subseteq \szy}\Em{1}(S)\cap\Em{2}(S)}$.
    We get
    \begin{align*}
        v[Y']
         & \geq \Pr{\bigcup_{S: \emptyset\subsetneq S\subseteq \szy}\Em{1}(S)\cap\Em{2}(S)} \\
         & = \sum_{S: \emptyset\subsetneq S\subseteq \szy} \Pr{\Em{1}(S)\cap \Em{2}(S)}     \\
         & = \sum_{S: \emptyset\subsetneq S\subseteq \szy} \Pr{\Em{1}(S)}\Pr{\Em{2}(S)}\;,
    \end{align*}
    which proves \Cref{eq4:vY}.
    To complete the proof of \Cref{prop1:set}, we note that for every non-empty set $S\subseteq \szy$, we have $\Pr{\Em{2}(S)}\geq \alpha$.
    We also note that
    \begin{align*}
        \sum_{S: \emptyset\subsetneq S\subseteq \szy} \Pr{\Em{1}(S)}=
        \Pr{\szy[p_h]\nemp}\;,
    \end{align*}
    by definition.
    This completes the proof of \Cref{prop1:set}.
\end{proof}

\begin{proposition}\label{thm:key}
    There exists an index $k\in \zrnone{\log(\Lambda)+1}$, a vector $X\in\Prodd_{h-1}(\frac{2^k}{\Lambda})$, and a set $Q_{k,X}\subseteq V_h$ of size at least $2^k$, such that every vertex $u\in Q_{k,X}$ satisfies $v_{H_u}[X]\geq \phval[h-2]$.
\end{proposition}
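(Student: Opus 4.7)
My plan for \Cref{thm:key} is a bucket-and-pigeonhole argument on $V_h$, invoking the inductive hypothesis of \Cref{prop:discover over Gpi} applied to each auxiliary $(h-1)$-partite graph $H_u$. The first step will be to decompose the lower bound assumption: every $h$-colorful cycle through $v$ in $G'$ passes through a unique $u \in V_h$, and removing $u$ leaves an $(h-1)$-colorful cycle through $v$ in $H_u$, so $\sum_{u \in V_h} t_{H_u}(v) = t_{G'}(v) \geq \Lambda \cdot \ctz[h-1]$.

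Setting $c = \ctz[h-2]$ and $K_0 = \log \Lambda + 1$, I will bucket each $u \in V_h$ by dyadic heaviness: $Q_k = \{u : t_{H_u}(v) \in [\Lambda c/2^k,\ \Lambda c/2^{k-1})\}$ for $k \in \{2, \dots, K_0\}$, and $Q_1 = \{u : t_{H_u}(v) \geq \Lambda c/2\}$ to absorb the super-heavy tail. For each $u \in Q_k$, the induction hypothesis applied to the $(h-1)$-partite graph $H_u$ at threshold $\Lambda/2^k$ (permitted since $t_{H_u}(v) \geq (\Lambda/2^k)\, c$) yields a witness $X_u \in \Prodd_{h-1}(2^k/\Lambda)$ with entries in $\mathcal{A}(\Lambda/2^k)$ and $v_{H_u}[X_u] \geq \phval[h-2]$. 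Grouping by witness gives $Q_{k,X} = \{u \in Q_k : X_u = X\}$, and the objective reduces to locating some $(k, X)$ with $|Q_{k,X}| \geq 2^k$.

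I will argue by contradiction: assume $|Q_{k,X}| < 2^k$ for every valid $(k, X)$. Using $|\Prodd_{h-1}(2^k/\Lambda)| \leq (\log \Lambda + 2)^{h-1}$, this forces $|Q_k| < 2^k (\log \Lambda + 2)^{h-1}$, so each $k \geq 2$ bucket contributes at most $2 \Lambda c (\log \Lambda + 2)^{h-1}$, summing over the $K_0$ levels to $O(\Lambda c (2h\log n)^h)$, while vertices below all buckets contribute at most $cn/2$. The gap $\ctz[h-1]/c = (2h\log n)^{2h-3}$ dominates the pigeonhole overhead $(2h\log n)^h$ for every $h \geq 3$, producing $\sum_u t_{H_u}(v) < \Lambda \ctz[h-1]$, the desired contradiction. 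The main technical obstacle will be the super-heavy portion of $Q_1$, where individual $t_{H_u}(v)$ may exceed the natural cap $\Lambda c$; the plan is to further subdivide $Q_1$ into dyadic super-levels $Q_1^{(j)} = \{u : t_{H_u}(v) \in [2^{j-1}\Lambda c,\ 2^j \Lambda c)\}$ and combine the pigeonhole bound $|Q_1^{(j)}| < 2 (\log \Lambda + 2)^{h-1}$ with the range constraint $j = O(\log n)$ coming from $t_{H_u}(v) \leq n^{h-2}$, so that the aggregated super-heavy contribution still fits inside the slack from $\ctz[h-1]/c$.
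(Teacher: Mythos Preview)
Your overall plan---dyadically bucket $V_h$ by $t_{H_u}(v)$, apply the induction hypothesis to each $H_u$ inside a bucket to obtain a witness vector, then pigeonhole on the witnesses and run a counting contradiction---is exactly what the paper does (the paper isolates the counting step as \Cref{claim:ythm} and then pigeonholes inside the resulting large bucket, but this reorganization is immaterial).

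The real gap is your handling of the super-heavy tail $Q_1=\{u:t_{H_u}(v)\ge \Lambda c/2\}$. The sub-bucketing into levels $Q_1^{(j)}$ buys you nothing: the only size bound you have is the global one $|Q_1^{(j)}|\le|Q_1|<2(\log\Lambda+2)^{h-1}$, independent of $j$. Summing $|Q_1^{(j)}|\cdot 2^j\Lambda c$ over $j$ is therefore a geometric series dominated by its top level, and since a single $t_{H_u}(v)$ may be as large as $n^{h-2}$, the best you can conclude is $\sum_{u\in Q_1}t_{H_u}(v)\lesssim (\log\Lambda+2)^{h-1}\,n^{h-2}$. This is \emph{not} controlled by $\Lambda\cdot\ctz[h-1]$ when $\Lambda$ is small (e.g.\ polylogarithmic in $n$), so the contradiction does not close. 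Your ``below all buckets'' residual $cn/2$ has the same defect: nothing in the hypotheses forces $n\cdot c\le\Lambda\cdot\ctz[h-1]$.

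The paper sidesteps the super-heavy issue by treating it as a direct exit rather than a contribution to be bounded. If some $u$ has $t_{H_u}(v)\ge M=\Lambda c$ (the $Q_0$ branch of \Cref{claim:ythm}), one applies the induction hypothesis at threshold $\Lambda$ to that single $u$ and outputs the singleton set---no counting needed. Only after ruling out such a vertex does the counting argument run, and then every individual $t_{H_u}(v)$ is capped by $M$, so each bucket contributes at most $\Lambda c$ times polylogarithmic factors and the contradiction goes through. You should restructure your argument the same way: first dispose of any super-heavy vertex directly, then assume none exists before bounding the remaining buckets.
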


Before proving \Cref{thm:key}, we need a few definitions.
Let $\beta\triangleq 2h\log(n)$.
Recall that $t_{G'}(v)\geq \Lambda\cdot \beta^{(h-1)^2}$, and let $M=\Lambda\cdot \beta^{(h-2)^2}$.
We follow the line of proof of the base case.
Let $K_0=\log(\Lambda) + 1$.
For $0\leq i\leq K_0$, define $p_{i}\triangleq 2^{-i}\;, q_{i}\triangleq 2^i/M$, and partition the vertices of $V_h$ into classes as before, by defining
\begin{align*}
    Q_k=\set{u\in V_h\mid \abs{\FHs_{G'}(v)\cap \FHs_{G'}(u)}\in\left[\frac1{q_k},\frac2{q_k}\right)}\;, &  & Q_0=\set{u\in V_h\mid \abs{\FHs_{G'}(v)\cap \FHs_{G'}(u)}\geq 1/q_0}\;.
\end{align*}
We also need the following claim.

\begin{claim}\label{claim:ythm}
    We claim that at least one of the following holds:
    \begin{enumerate}
        \item The set $Q_0$ is non-empty.
        \item There exists an index $k$ such that $\abs{Q_k}\geq \beta^{h-1}/p_k$.
    \end{enumerate}
\end{claim}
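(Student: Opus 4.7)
The plan is to prove the claim by contradiction, directly extending the double-counting argument used in the base case ($h=3$) of \Cref{prop:discover over Gpi}. Suppose both assertions fail: $Q_0 = \emptyset$ and $|Q_k| < \beta^{h-1}/p_k$ for every $k \in \{1,\ldots,K_0\}$. I will derive an upper bound on $t_{G'}(v)$ that contradicts the hypothesis $t_{G'}(v) \geq \Lambda \cdot \beta^{(h-1)^2}$.

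Because $G'$ is the colorful directed graph where edges go only between consecutive color classes, every $h$-cycle through $v$ contains exactly one vertex of $V_h$, giving the identity $t_{G'}(v) = \sum_{u \in V_h} |\FHs_{G'}(v) \cap \FHs_{G'}(u)|$. Under the contradiction hypothesis, each $u \in V_h$ either lies in some $Q_k$ with $k \in [K_0]$---contributing less than $2/q_k$---or is ``residual'', with contribution less than $1/q_{K_0}$. Summing and using the handy identity $p_k q_k = 1/M$, this gives
\begin{align*}
t_{G'}(v) < \sum_{k=1}^{K_0} \frac{\beta^{h-1}}{p_k} \cdot \frac{2}{q_k} + n \cdot \frac{1}{q_{K_0}} = 2 K_0 \beta^{h-1} M + \frac{n}{q_{K_0}}.
\end{align*}

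Substituting $M = \Lambda \beta^{(h-2)^2}$ and using the exponent identity $(h-1)^2 - ((h-2)^2 + h - 1) = h - 2$, the leading term becomes $2K_0 \Lambda \beta^{(h-1)^2 - (h-2)}$. Since $K_0 = \log\Lambda + 1 \leq h\log n + 1 \leq \beta$, and since the induction step concerns $h \geq 4$ (the case $h = 3$ is the explicit base case), this gives $2K_0 \beta^{h-1} M \leq \Lambda \beta^{(h-1)^2 - (h-3)} < \Lambda \beta^{(h-1)^2}$, so the first term is dominated by the required lower bound with a slack factor of at least $\beta$. The residual $n\beta^{(h-2)^2}/2$ is likewise absorbed by the remaining slack; indeed, the generous exponent $(h-1)^2$ in the hypothesis is chosen precisely to accommodate both the $K_0$ overhead from summing over classes and this residual term.

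The main obstacle is the careful arithmetic of $\beta$-exponents. Unlike the base case, where $q_{K_0} = 2$ ensures that every vertex with at least one common cycle falls inside some $Q_k$, here $1/q_{K_0} = \beta^{(h-2)^2}/2$ can exceed $1$, so a residual set of uncaptured vertices must be handled explicitly; its aggregate contribution is controlled via the large exponent gap $\beta^{h-2}$ available only when $h \geq 4$. Everything else mirrors the base case verbatim.
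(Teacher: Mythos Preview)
Your argument is the paper's argument: assume both conclusions fail, expand $t_{G'}(v)=\sum_{u\in V_h}|\FHs_{G'}(v)\cap\FHs_{G'}(u)|$, bound each $|Q_k|$ by $\beta^{h-1}/p_k$, invoke $p_kq_k=1/M$ to collapse the sum to $2K_0\beta^{h-1}M$, and compare exponents. Your exponent bookkeeping $(h-1)^2-\bigl((h-2)^2+(h-1)\bigr)=h-2$ together with $2K_0\le\beta$ is exactly what the paper does.

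Where you go beyond the paper is in flagging the residual contribution from vertices $u\in V_h$ with $0<|\FHs_{G'}(v)\cap\FHs_{G'}(u)|<1/q_{K_0}=\beta^{(h-2)^2}/2$; the paper's displayed inequality silently drops these (it writes $\sum_u|\cdot|<\sum_u\sum_k\chi_{\{u\in Q_k\}}\cdot 2/q_k$, which is false for residual $u$). However, your claim that the residual $n\beta^{(h-2)^2}/2$ is ``absorbed by the remaining slack'' does not go through as stated: absorbing it into $\Lambda\beta^{(h-1)^2}$ would require $\Lambda>n/(2\beta^{2h-3})$, and the only standing assumption on $\Lambda$ is $\Lambda=\omega(1)$. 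For instance $\Lambda=\log^{10}n$ satisfies the hypotheses of \Cref{prop:discover over Gpi} yet leaves the residual far larger than $\Lambda\beta^{(h-1)^2}$. So this step is a genuine gap in your write-up---one the paper shares by omission---and needs a different repair (e.g., extending the range of $k$ so that $1/q_{K_0}<1$, which in turn would require widening $\mathcal{A}(\Lambda)$ in the definition of $\Prd$).
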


\begin{proof}[Proof of \Cref{thm:key} Using \Cref{claim:ythm}]
    Assume that \Cref{claim:ythm} is true.
    More specifically, that $Q_0$ is not empty, and let $u$ be some vertex in $Q_0$, where $\abs{\FHs_{G'}(v)\cap \FHs_{G'}(u)}\geq 1/q_0=M$.
    We apply the induction hypothesis on $H_u$, which states that there exists at least one vector $P\in \Prodd_{h-1}(\Lambda)$ that $P$-discovers $v$ over $H_u$, with probability at least $\phval[h-2]$.
    In order to use the induction hypothesis on $H_u$, we need to verify that $\FHs_{H_u}(v)\geq \Lambda\cdot\beta^{(h-1)^2}$, which follows
    as $\Lambda\cdot\beta^{(h-1)^2}=M$.
    In words, the number of $(h-1)$-cycles in $H_u$ which contain $v$ is at least $M$.
    We can therefore define the set $Q_{0,P}$, which matches the conditions specified in \Cref{thm:key}.

    Next, we prove \Cref{thm:key} using \Cref{claim:ythm}, assuming
    that there exists $k\in\set{0,1,\ldots,K_0}$ for which $\abs{Q_k}\geq \beta^{h-1}/p_k$, and fix such $k$.
    We apply the induction hypothesis on $H_u$ for every $u\in Q_k$.
    We claim that for every vertex $u\in Q_k$, there exists at least one vector $P\in \Prodd_{h-1}(2^k/\Lambda)$ that $P$-discovers $v$ over $H_u$, with probability at least $\phval[h-2]$. In short, there exists $P\in \Prodd_{h-1}(2^k/\Lambda)$ such that $v_{H_u}[P]\geq \phval[h-2]$.
    In order to use the induction hypothesis on $H_u$, we need to verify that $\FHs_{H_u}(v)\geq \Lambda\cdot\beta^{(h-1)^2}/2^k $, which follows
    by the definition of the set $Q_k$, as for every $u\in Q_k$, we have that $\FHs_{H_u}(v)\geq 1/q_k=\frac{M}{2^k}=\Lambda\cdot\frac{\beta^{(h-1)^2}}{2^k}$.
    In words, the number of $(h-1)$-cycles in $H_u$ which contain $v$ is at least $1/q_k$.

    The final step in the proof is grouping together vertices from $u\in Q_k$ which share the same discovering vector $X$ (over different graphs) to form the set $Q_{k,X}$ which matches the conditions specified in \Cref{thm:key}.

    We use the pigeonhole principle, with the vertices of $Q_k$ as pigeons, and the vectors of $\Prodd_{h-1}(2^k/\Lambda)$ as holes.
    Each vertex in $Q_k$ has at least one vector $P\in \Prodd_{h-1}(2^k/\Lambda)$ such that $v_{H_u}[P]\geq \phval[h-2]$.
    For every vector $X\in \Prodd_{h-1}(2^k/\Lambda)$, let $Q_{k,X}$ denote the set of vertices from $Q_k$ for which $v_{H_u}[X]\geq \phval[h-2]$.
    There are $|\Prodd_{h-1}(2^k/\Lambda)|\leq \beta^{h-1}$ such sets, and each vertex from $Q_k$ is in at least one set, where $Q_k$ has at least $2^k\beta^{h-1}$ vertices.
    Therefore, at least one set contains at least $2^k$ vertices, which completes the proof.

\end{proof}
\begin{proof}[Proof of \Cref{claim:ythm}]
    Assume towards a contradiction that this is not the case, that is, that $v$ is a vertex with $\tvc\geq \Lambda \cdot \beta^{(h-1)^2}$ and
    we have that $Q_0$ is empty, and that for every $k\in\zrnone{K_0}$, we have $\abs{Q_k}\cdot p_k /\beta^{h-1}<1$. Then,
    \begin{align*}
        \tvc
         & =   \sum_{u\in[V_h]}\abs{\FHs_{G'}(v)\cap \FHs_{G'}(u)}                            \\
         & <   \sum_{u\in[V_h]}\sum_{1\leq k\leq K_0}\chi_{\set{u\in Q_k}}\cdot \frac{2}{q_k} \\
         & =   \sum_{1\leq k\leq K_0}\sum_{u\in[V_h]}\chi_{\set{u\in Q_k}}\cdot \frac{2}{q_k} \\
         & =   \sum_{1\leq k\leq K_0}\abs{Q_k}\cdot \frac{2}{q_k}
    \end{align*}
    Because we assumed towards a contradiction that for every $k\in\set{1,2,\ldots, K_0}$ we have $\abs{Q_k}<\frac{\beta^{h-1}}{p_k}$, we get
    \begin{align*}
        \sum_{1\leq k\leq K_0}\abs{Q_k}\cdot \frac{2}{q_k}<
        \sum_{1\leq k\leq K_0}\frac{\beta^{h-1}}{p_k}\cdot \frac{2}{q_k}=
        \sum_{1\leq k\leq K_0}2\cdot \beta^{h-1}\cdot M=
        2K_0\cdot \Lambda\beta^{(h-1)+(h-2)^2}
        \leq 4\Lambda\cdot \beta^{h^2 -3h + 4}\;.
    \end{align*}
    This is a contradiction, as
    $4\Lambda\cdot \beta^{h^2 -3h + 4}
        \leq \Lambda\cdot \beta^{(h-1)^2}\;.$
    The last inequality follows because $h\geq 4$.
\end{proof}

\section{Fine-grained Lower Bound}
\label{sec:lower bound}
A popular hardness hypothesis in fine-grained complexity is that detecting a triangle in an $n$ node graph requires $n^{\omega-o(1)}$ time (see e.g. \cite{vsurvey}). The motivation behind the hypothesis is that the only known subcubic time algorithms for triangle detection use fast matrix multiplication. In fact, it is known that triangle detection and {\em Boolean} matrix multiplication are subcubically equivalent \cite{focsy}: any truly subcubic time algorithm for one problem implies a truly subcubic time algorithm for the other.

Suppose now that one is given a tripartite graph whose three partitions have sizes $n, n^a, n^b$ respectively. Then the fastest known algorithm for finding a triangle in such a graph is again a reduction to fast matrix multiplication, but in this case it is to {\em rectangular} matrix multiplication: $n\times n^a$ by $n^a\times n^b$.

We will shortly see that a faster algorithm would imply (for instance) that $k$-cliques can be detected faster and would be an important breakthrough.
The fastest known algorithm for $k$-clique detection for any constant $k\geq 3$ is a reduction (see \cite{nesetril,vsurvey}) to detecting a triangle in a tripartite graph with partition sizes $n^{\lfloor k/3\rfloor}, n^{\lceil k/3\rceil}$ and $n^{k-\lceil k/3\rceil-\lfloor k/3\rfloor}$.

Thus, the fastest running time for $k$-clique is $\MM{n^{\lfloor k/3\rfloor}, n^{\lceil k/3\rceil}, n^{k-\lceil k/3\rceil-\lfloor k/3\rfloor}}$.
In fact, this running time is conjectured to be optimal in several papers (see e.g. \cite{AbboudBW18,BackursT17,BringmannW17,dalirrooyfard2023listing}). This conjecture is known as the $k$-Clique Hypothesis.

Let's look at $k$ of the form $k=3p-1$ for integer $p\geq 2$. Then $k$-clique detection in $N$-node graphs reduces to triangle detection in a tripartite graph with parts of sizes $N^p,N^{p-1},N^p$. Letting $n=N^p$, the sizes are $n,n^{1-1/p},n$. Thus, if triangle detection in an unbalanced tripartite graph with parts of size $n,n^{1-1/p},n$ can be solved in $\MM{n,n^{1-1/p},n}^{1-\delta}$ time for some $\delta>0$, then the $k$-Clique Hypothesis for $k=3p-1$ would be false.

This motivates the following hypothesis.

\begin{hypothesis}\label{hyp:unblanced}
    Let $G$ be a tripartite graph with vertex set $A\sqcup B\sqcup C$,
    where $\abs{A}=\abs{C} = n$ and $\abs{B}=n^b$ for $b\leq 1$.
    Any randomized algorithm for triangle detection in $G$, in the word-RAM with $O(\log n)$ bit words needs
    $\MM{n,n^b,n}^{1-o(1)}$
    time.
\end{hypothesis}

For $b=1-1/p$ for integer $p\geq 2$ the hypothesis is implied by the $k$-Clique Hypothesis, as mentioned above. When $b\leq \alpha$, the hypothesis is actually true since by definition in this case $\MM{n,n^b,n}\neq n^{2+o(1)}$, and it is not hard to show that to determine whether a tripartite graph with parts of sizes $n,n^b,n$ contains a triangle, any (even randomized) algorithm needs $\Omega(n^2)$ time.

A related hypothesis to Hypothesis \ref{hyp:unblanced} was considered by Bringmann and Carmeli \cite{CarmeliUnbalanced} who proposed that triangle detection in a tripartite graph with partition sizes $n,n^a,n^a$ for $a\leq 1$ requires $\omega(n^{1+a})$ time. Notably, if $\omega=2$, the respective unbalanced matrix multiplication time is $\MM{n,n^a,n^a}=n^{1+a+o(1)}$. The hypothesis of \cite{CarmeliUnbalanced} states that not only is $\MM{n,n^a,n^a}$ time needed, but that the extra $+o(1)$ factor in the exponent is necessary when $\omega=2$.

Other related unbalanced versions of triangle detection appear for instance in \cite{KopelowitzW20} and \cite{monotriangles}, where the tripartite graph is unbalanced not only in the partition sizes but in the number of edges between the pairs of partitions.

We reduce the above unbalanced triangle detection problem for any choice of $b$ to the problem of approximating the number of triangles in an $n$ node graph that contains $0$ triangles if the unbalanced instance had $0$ triangles and at least $n^{1-b}$ triangles if the unbalanced instance had at least one triangle.

Let $t=n^{1-b}$.
We build a new graph $G'$ with vertex set $A\cup B'\cup C$
where
\begin{align*}
    B'\triangleq    & \set{(x,i)\mid x\in B \wedge i\in[t]}                                                    \\
    E(G')\triangleq & E(A,C)\cup \set{(v,(x,i))\mid v\in A\cup C\; \wedge (x,i)\in B' \wedge (v,b)\in E(G)}\;.
\end{align*}
In words, $B'$ contains $t$ copies of each vertex $x\in B$, and $E(G')$ contains all edges between a pair of vertices in $A,C$, and also an edge $(u,(x,i))$ for every $u\in A\cup C$ if $(u,x)\in E(G)$.
Constructing $G'$ takes $\BO{n^2}$ time and $G'$ has $O(n)$ vertices.
\begin{claim}\label{claim:reduce}
    If $G$ has no triangles, then neither does $G'$. If
    $G$ contains a triangle then $G'$ contains $\geq t$ triangles.
\end{claim}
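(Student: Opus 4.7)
\textbf{Proof proposal for \Cref{claim:reduce}.} The plan is to prove both implications by directly analyzing the structure of $G'$. The key observation is that $G'$ is tripartite with parts $A$, $C$, and $B'$, since the construction does not introduce any edges within $A$, within $C$, or within $B'$: the first two follow because $G$ was tripartite so $E(A,A) = E(C,C) = \emptyset$ and these are preserved, while $B'$ is internally edgeless by the explicit description of $E(G')$. Consequently, any triangle in $G'$ must use exactly one vertex from each of $A$, $C$, and $B'$.

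For the forward direction, suppose $G'$ contains a triangle $\{u, w, (x,i)\}$ with $u \in A$, $w \in C$, $(x,i) \in B'$. Then $(u,w) \in E(G')$ forces $(u,w) \in E(A,C) \subseteq E(G)$, and the edges $(u,(x,i))$ and $(w,(x,i))$ in $E(G')$ force $(u,x), (w,x) \in E(G)$ by the defining condition of the second edge set. Hence $\{u, x, w\}$ is a triangle in $G$, so the contrapositive gives that if $G$ has no triangles then neither does $G'$.

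For the reverse direction, suppose $G$ contains a triangle; as $G$ is tripartite on $A \sqcup B \sqcup C$, such a triangle has the form $\{u, x, w\}$ with $u \in A$, $x \in B$, $w \in C$, and all three edges $(u,w), (u,x), (w,x)$ lie in $E(G)$. For every $i \in [t]$, the triple $\{u, (x,i), w\}$ then forms a triangle in $G'$: $(u,w) \in E(A,C)$ is preserved, and both $(u,(x,i))$ and $(w,(x,i))$ belong to $E(G')$ by the defining condition. Since the $t$ copies $(x,1), \ldots, (x,t)$ are distinct vertices of $B'$, these yield $t$ distinct triangles in $G'$.

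The main (and only) obstacle is the forward direction, which requires being careful that no ``spurious'' triangles can appear in $G'$ beyond those traceable back to $G$; the above argument handles this by using tripartiteness of $G'$ to force a canonical vertex assignment, after which each of the three edges pulls back to an edge of $G$ by the construction. No routine calculations or concentration arguments are needed; the entire proof is structural.
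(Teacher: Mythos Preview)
Your proof is correct and follows essentially the same approach as the paper: both argue that any triangle in $G'$ must have the form $(a,(x,i),c)$ and hence projects to a triangle $(a,x,c)$ in $G$, and conversely that a triangle in $G$ yields $t$ triangles in $G'$ via the $t$ copies of the middle vertex. Your write-up is in fact more careful than the paper's, since you explicitly justify the tripartiteness of $G'$ that forces the canonical form of any triangle.
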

\begin{proof}
    If $G$ has a triangle, it must be of the form $(a,x,c)\in A\times B\times C$. Such a triangle creates $t$ new triangles in $G'$, each of the form
    $(a,(x,i),c)$ in $G'$, showing $G'$ has $t$ triangles.
    For the other direction, note that any triangle $(a,(x,i),c)$ in $G'$, implies that $(a,x,c)$ is a triangle in $G$.
\end{proof}
Any algorithm that can obtain an estimate $t'$ of the number of triangles $t$ in $G$ such that $(1-\eps)t\leq t'\leq (1+\eps)t$ can distinguish between graphs with no triangles and graphs with at least $t$ triangles. Thus we obtain:

\begin{corollary}
    Under Hypothesis 1, any randomized algorithm that can $(1\pm \eps)$-approximate the number of triangles in a graph with $t$ triangles requires $\MM{n,n/t,n}^{1-o(1)}$ time.
\end{corollary}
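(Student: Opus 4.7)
The plan is to give a direct reduction that turns any $(1\pm\eps)$-approximate triangle counter into a triangle-detection algorithm for the unbalanced tripartite instances considered in Hypothesis~\ref{hyp:unblanced}, and then invoke the hypothesis. All of the combinatorial work has already been done in Claim~\ref{claim:reduce}; the corollary is essentially a packaging step.

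Concretely, suppose we are given a randomized algorithm $\mathcal{A}$ that, on any $N$-vertex graph, returns a value $\hat{t}\in t'\apm$ (for some constant $\eps<1$) with, say, probability at least $2/3$, where $t'$ is its true triangle count, and runs in time $T(N,t')$. To solve unbalanced triangle detection in a tripartite graph $G=A\sqcup B\sqcup C$ with $|A|=|C|=n$ and $|B|=n^b$, first set $t\gets n^{1-b}$ (so that $n^b=n/t$) and build the graph $G'$ exactly as in the text preceding Claim~\ref{claim:reduce}. This construction takes $O(n^2)$ time and produces a graph on $|A|+|C|+t\cdot n^b=3n$ vertices.

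Now run $\mathcal{A}$ on $G'$ and output \yes iff the returned estimate is strictly positive. By Claim~\ref{claim:reduce}, if $G$ is triangle-free then $G'$ is triangle-free and $\mathcal{A}$ must return $0$, while if $G$ contains a triangle then $G'$ contains at least $t$ triangles and $\mathcal{A}$ returns a value $\geq(1-\eps)t>0$. Hence this procedure decides triangle detection on the unbalanced instance, in time $T(3n,\,\Theta(t))+O(n^2)$, with probability at least $2/3$, which suffices for the randomized version of Hypothesis~\ref{hyp:unblanced}.

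Applying Hypothesis~\ref{hyp:unblanced} yields
\[
T(3n,\Theta(t))+O(n^2)\;\geq\;\MM{n,n^b,n}^{1-o(1)}\;=\;\MM{n,n/t,n}^{1-o(1)}.
\]
Since $\MM{n,n/t,n}\geq n^2$ (even reading the output requires this), the $O(n^2)$ additive term is absorbed, and rescaling $n$ by the constant factor $3$ only affects the hidden $o(1)$. Thus $T(n,t)\geq \MM{n,n/t,n}^{1-o(1)}$, which is the claimed bound. The only non-routine point to double-check is that the approximation gap $(1-\eps)t$ is strictly greater than $0$ for distinguishing \yes/\no---this holds as long as $\eps<1$ and $t\geq 1$, both of which are guaranteed in the regime of interest (for $t<1$ the bound $\MM{n,n/t,n}=n^{\omega\pm o(1)}$ is already matched by any detection lower bound).
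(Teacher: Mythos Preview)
Your proposal is correct and follows essentially the same approach as the paper: construct $G'$ from the unbalanced tripartite instance via the vertex-duplication described before Claim~\ref{claim:reduce}, feed $G'$ to the approximate counter, and invoke Hypothesis~\ref{hyp:unblanced}. The paper states this in a single sentence just above the corollary (``Any algorithm that can obtain an estimate $t'$ \ldots\ can distinguish between graphs with no triangles and graphs with at least $t$ triangles''), while you spell out the vertex-count bookkeeping and the absorption of the $O(n^2)$ term; these are routine but welcome elaborations of the same argument.
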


\subsection{Generalizing for Directed and Odd Cycles}

We now prove our hardness result for $h$-cycles.

\begin{theorem}
    \label{hyp:unblanced k}
    Assume \Cref{hyp:unblanced} and the Word-RAM model of computation with $O(\log n)$ bit words. Let $h\geq 3$ be any constant integer.
    Any randomized algorithm that, when given an $n$ node directed graph $G$, can
    distinguish between $G$ being $C_h$-free and containing $\geq t$ $C_h$s needs
    $\MM{n,n/{t}^{1/(h-2)},n}^{1-o(1)}$
    time. The same result holds for undirected graphs as well whenever $h$ is odd.
\end{theorem}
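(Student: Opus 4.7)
The plan is to reduce the unbalanced triangle detection problem of \Cref{hyp:unblanced} to the decision version of approximate $h$-cycle counting, generalizing the triangle-case gadget captured in \Cref{claim:reduce}. Given an instance on parts $A \sqcup B \sqcup C$ with $\abs{A}=\abs{C}=n$ and $\abs{B}=n^b$ for $b \leq 1$, calibrate $t \triangleq n^{(1-b)(h-2)}$, so that $n/t^{1/(h-2)} = n^b$ and the target lower bound $\MM{n,n/t^{1/(h-2)},n}^{1-o(1)}$ coincides with $\MM{n,n^b,n}^{1-o(1)}$ from the hypothesis. The core idea is to replace the single ``middle'' $B$-vertex of each would-be triangle by a length-$(h-2)$ directed path through $h-2$ new layers, each layer carrying $s \triangleq t^{1/(h-2)} = n^{1-b}$ parallel copies of every $x \in B$; this creates a blow-up factor of $s^{h-2} = t$ on the number of $h$-cycles per triangle, which is exactly the separation the theorem needs.

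\textbf{Construction.} Build $G'$ on the vertex set $A \sqcup V_1 \sqcup \cdots \sqcup V_{h-2} \sqcup C$, where $V_j \triangleq \set{(x,j,i) : x \in B,\ i \in [s]}$. Add four families of directed edges: $a \to (x,1,i)$ for every $i \in [s]$ whenever $(a,x) \in E(A,B)$; $(x,j,i) \to (x,j{+}1,i')$ for every $i,i' \in [s]$ and every $1 \leq j \leq h-3$ (complete bipartite inside each same-$x$ group, and no edges between different $x$-groups); $(x,h-2,i) \to c$ for every $i \in [s]$ whenever $(x,c) \in E(B,C)$; and $c \to a$ whenever $(c,a) \in E(C,A)$. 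The total number of vertices is $2n + (h-2)\cdot s \cdot n^b = 2n + (h-2)\cdot n = O(n)$, and $G'$ can be written down in $O(n^2)$ time in the word-RAM model.

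\textbf{Correctness and conclusion.} Identify $A = V_0$ and $C = V_{h-1}$. Since every edge of $G'$ goes from layer $i$ to layer $i{+}1 \pmod h$, any directed $h$-cycle must visit each layer exactly once in cyclic order. The same-$x$ wiring between consecutive intermediate layers forces the blow-up vertices used by such a cycle to all be copies of one common $x \in B$; hence directed $h$-cycles in $G'$ biject with pairs $(T,\mathbf{i})$ where $T=(a,x,c)$ is a triangle in $G$ and $\mathbf{i}=(i_1,\ldots,i_{h-2}) \in [s]^{h-2}$. Therefore $G'$ is $h$-cycle-free iff $G$ is triangle-free, and any single triangle in $G$ produces exactly $s^{h-2} = t$ distinct directed $h$-cycles in $G'$. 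Any $(1\pm\eps)$-approximation to the $h$-cycle count (with $\eps < 1$) separates these two cases, so under \Cref{hyp:unblanced} it requires $\MM{n,n^b,n}^{1-o(1)} = \MM{n,n/t^{1/(h-2)},n}^{1-o(1)}$ time.

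\textbf{Undirected odd $h$.} For the undirected statement, apply exactly the same construction but forget edge orientations. The only thing to verify is that no spurious $h$-cycles arise. The layer indices along any closed walk of length $h$ form a closed walk on $\mathbb{Z}/h\mathbb{Z}$ with $\pm 1$ steps, so if $p$ steps are $+1$ and $h-p$ are $-1$ then $2p \equiv 0 \pmod h$; when $h$ is odd this forces $p \in \set{0,h}$, i.e., the cycle wraps monotonically around the layers once. Hence every undirected $h$-cycle still visits each layer exactly once, the bijection above goes through verbatim (each triangle yielding $s^{h-2}=t$ undirected $h$-cycles), and the same lower bound follows. The main obstacle is precisely this parity argument: it is what breaks for even $h$, where $p = h/2$ solutions create ``fold-back'' $h$-cycles through the complete-bipartite intermediate layers, and it is the reason the undirected half of the theorem is restricted to odd $h$.
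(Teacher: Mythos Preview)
Your proposal is correct and follows essentially the same approach as the paper: the same layered blow-up of $B$ into $h-2$ layers of $t^{1/(h-2)}$ copies per vertex, with the same edge structure and the same counting argument. Your parity argument for the undirected odd-$h$ case (tracking $\pm 1$ steps on $\mathbb{Z}/h\mathbb{Z}$) is a slightly more explicit version of the paper's observation that deleting any one layer leaves a bipartite graph, but the content is the same.
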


\begin{proof}
    Let $\ell\triangleq{t}^{1/(h-2)}$.
    Given a graph $G$ with vertex set $A,B,C$ where $\abs{A}=\abs{C}=n$ and $\abs{B}=n/\ell$ in which we want to find a triangle, we build a new graph $G'$ as follows.
    We first define a new set of vertices $B'$ obtained by duplicating each vertex in $B$  $\ell$ times.
    We then duplicate $B'$  $h-2$ times. We denote the copies by $\mathcal{B} \triangleq B_1',\dots,B_{h-2}'$.
    We also write
    \begin{align*}
        \mathcal{B}\triangleq\set{(b,i,j)\mid b\in B\wedge i\in[\ell]\wedge j\in[h-2]}\;.
    \end{align*}
    That is, for any $j\in[h-2]$, we have $B_j\triangleq\set{(b,i,j)\mid b\in B\wedge i\in[\ell]}$ and we have $\mathcal{B}=\cup_j B_j$.
    The vertex set of $G'$ is $A\cup \mathcal{B}\cup C$. We next define the edge set of $G'$.
    \begin{enumerate}
        \item $E_1 \triangleq E(C,A)$.
        \item $E_2 \triangleq \set{(a,(b,i,1))\in A\times B_1'\mid (a,b)\in E}$.
        \item $E_3 \triangleq \set{((b,i,h-2),c)\in B_{h-2}'\times C\mid (b,c)\in E}$.
        \item $E_4 \triangleq \set{((b,i,j),(b,i',j+1))\mid j\in [h-3], i,i'\in [\ell]}$. In other words, all copies of $b$ in $B_j$ point to all copies of $b$ in $B_{j+1}$.
    \end{enumerate}
    We define $E(G')\triangleq E_1\cup E_2\cup E_3\cup E_4$.

    If $G'$ has a $h$-cycle, then this $h$-cycle is of the form

    $$(a\in A)\rightarrow (b,i_1,1)\rightarrow (b,i_2,2)\rightarrow \ldots \rightarrow (b,i_{h-2},h-2)\rightarrow (c\in C)\rightarrow a,$$
    for various choices of $i_j\in [\ell]$.
    This is clear if $G'$ is directed. If $G'$ is undirected and $h$ is odd, this is also the only possible $h$-cycle because $G'$ is layered: the $h$-cycle cannot skip any one of the parts $A,C,B_1,\ldots,B_{h-2}$ because removing any one of these parts makes $G'$ bipartite and so no odd cycle can be contained there.

    On the other hand, a $h$-cycle of the form $(a\in A)\rightarrow (b,i_1,1)\rightarrow (b,i_2,2)\rightarrow \ldots \rightarrow (b,i_{h-2},h-2)\rightarrow (c\in C)\rightarrow a,$ means that $(a,b),(b,c),(c,a)$ are edges of $G$ and hence $G$ has a triangle.

    On the other hand, if $G$ has a triangle $a\in A,b\in B, c\in C$, then for every choice $(i_1,\ldots,i_{h-2})\in [\ell]^{h-2}$, we have a $h$-cycle in $G'$: $a\rightarrow (b,i_1,1)\rightarrow (b,i_2,2)\rightarrow \ldots \rightarrow (b,i_{h-2},kh-2)\rightarrow c\rightarrow a.$
    Thus every triangle in $G$ gives rise to $\ell^{h-2}=t$ $h$-cycles in $G'$.

    Hence, if one can distinguish between $0$ and $t$ $h$-cycles in $G'$ in $O(\MM{n,n/\ell,n}^{1-\delta})$ time for any $\delta>0$, then one can find a triangle in $G$ in asymptotically the same time (as creating $G'$ from $G$ only takes $O(n^2)$ time).

\end{proof}

\bibliographystyle{alpha}
\addcontentsline{toc}{section}{Bibliography}
\bibliography{refs.bib}

\numberwithin{claim}{section}
\begin{appendices}

    \section{Some Matrix Multiplication Tools}\label{sec:appendix}
    \begin{theorem}\label{thm:MM balance is slower}
        Let $\eps>0, a \geq 0, a+\eps \leq 1$. If $\omega(1,1, a)=r$, then $\omega(1,1-\eps, a+$ $\eps) \leq r$.
    \end{theorem}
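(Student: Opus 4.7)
My plan is to derive this from two standard properties of the exponent function $\omega(\cdot,\cdot,\cdot)$ of matrix multiplication: the permutation symmetry already recorded in the preliminaries, and the (equally standard) fact that $\omega$ is a convex function of its three exponent arguments. Both come from the tensor structure of matrix multiplication, and together they yield the result via a one-line interpolation.

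First I would invoke the symmetry to note that $\omega(1,1,a) = \omega(1,a,1) = r$, so both the starting shape $(1,1,a)$ and its transposition $(1,a,1)$ admit $n^{r+o(1)}$-time algorithms. Next I would write the target triple as a convex combination of these two shapes:
\begin{equation*}
(1,\ 1-\eps,\ a+\eps) \;=\; \lambda\,(1,1,a) + (1-\lambda)\,(1,a,1), \qquad \lambda \;=\; 1 - \tfrac{\eps}{1-a}.
\end{equation*}
The hypotheses $\eps>0$ and $a+\eps\leq 1$ guarantee $a<1$ and $\lambda\in[0,1]$, and the identity is verified coordinatewise in one line. Applying convexity of $\omega$ then gives
\begin{equation*}
\omega(1,\,1-\eps,\,a+\eps) \;\leq\; \lambda\,\omega(1,1,a) + (1-\lambda)\,\omega(1,a,1) \;=\; \lambda r + (1-\lambda)r \;=\; r,
\end{equation*}
which is exactly the claim.

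The only substantive ingredient I would need to justify in passing is the convexity of $\omega$. It is a consequence of two elementary properties: \emph{subadditivity}, obtained by tensoring an algorithm for shape $(x_1,y_1,z_1)$ with one for $(x_2,y_2,z_2)$ to get an algorithm for the coordinatewise sum whose running time is the product of the two (so $\omega(u+v)\leq \omega(u)+\omega(v)$); and \emph{homogeneity}, $\omega(\lambda x,\lambda y,\lambda z)=\lambda\,\omega(x,y,z)$, which is immediate from the substitution $n\mapsto n^{1/\lambda}$. Combining these yields $\omega(\lambda u+(1-\lambda)v)\leq \omega(\lambda u)+\omega((1-\lambda)v)=\lambda\,\omega(u)+(1-\lambda)\,\omega(v)$. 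I do not expect any step to be a real obstacle; the degenerate cases $a=0$ and $a+\eps=1$ either fall inside the argument unchanged or reduce $\lambda$ to an endpoint of $[0,1]$ so that the conclusion follows from a single invocation of the symmetry, and I would dispatch them with a short parenthetical remark.
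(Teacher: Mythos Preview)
Your proof is correct and is essentially the same argument as the paper's, just packaged at a higher level of abstraction: the paper writes out the explicit tensor splitting $\langle N^q,N^q,N^{aq}\rangle\otimes\langle N,N^a,N\rangle$ with $q/(q+1)=1-\eps/(1-a)=\lambda$, which is precisely your convex combination $\lambda(1,1,a)+(1-\lambda)(1,a,1)$ instantiated. Your ``subadditivity via tensoring plus homogeneity'' justification of convexity is exactly the mechanism the paper uses, so the two proofs coincide.
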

    \begin{proof}
        Consider multiplying matrices of dimension $n \times n^{1-\eps}$ by matrices of dimension $n^{1-\eps} \times n^{a+\eps}$.
        Define $N=n^{\eps /(1-a)}$. This is possible since $a<1, \eps>0$.
        Further define $q=(1-a-\eps) / \eps \geq 0$. This is also fine because $a+\eps \leq 1$ and $\eps>0$.
        Note that these parameters give that $n=N^{q+1}$.
        Then
        \begin{align*}
            \left\langle n, n^{1-\eps}, n^{a+\eps}\right\rangle=\left\langle N^{q+1}, N^{q+a}, N^{a q+1}\right\rangle=\left\langle N^q, N^q, N^{a q}\right\rangle \otimes\left\langle N, N^a, N\right\rangle .
        \end{align*}
        Thus the rank of $\left\langle n, n^{1-\eps}, n^{a+\eps}\right\rangle$ is at most the product of the ranks of $\left\langle N^q, N^q, N^{a q}\right\rangle$ and $\left\langle N, N^a, N\right\rangle$.

        If $\omega(1, a, 1)=r$, then $\omega(1,1, a)=r$ as well.
        Thus, $\operatorname{Rank}\left(\left\langle N^q, N^q, N^{a q}\right\rangle\right) \leq$ $N^{q r+o(1)}$ and $\operatorname{Rank}\left(\left\langle N, N^a, N\right\rangle\right) \leq N^{r+o(1)}$. Hence, $\operatorname{Rank}\left(\left\langle n, n^{1-\eps}, n^{a+\eps}\right\rangle\right) \leq$ $N^{(q+1) r+o(1)}=n^{r+o(1)}$.
        Thus $\omega(1,1-\eps, a+\eps) \leq r$.
    \end{proof}

    \begin{claim}\label{claim:a2}
        For any $p_1,p_2,p_3\in[0,1]$ we have
        $\MM{n p_1,n p_2,n p_3}\leq \MM{n,n ,n \cdot(p_1\cdot p_2\cdot p_3)}$.
    \end{claim}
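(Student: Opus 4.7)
I would start by taking logarithms: write $p_i = n^{-e_i}$ with $e_i \geq 0$ (the degenerate case $p_i=0$ is handled separately by padding), so that the claim becomes the exponent inequality $\omega(1-e_1, 1-e_2, 1-e_3) \leq \omega(1, 1, 1-e_1-e_2-e_3)$. When $e_1 + e_2 + e_3 > 1$ the right-hand side is essentially trivial and the inequality follows by padding; otherwise I restrict to $\sum_i e_i \leq 1$. The plan is to prove the inequality by chaining two applications of a Strassen-style tensor decomposition: one is exactly Theorem \ref{thm:MM balance is slower}, and the other is a mild generalization of it in which the ``fixed'' dimension is $n^\alpha$ rather than $n$.

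First I would prove the following extension: for every $\alpha \geq 0$, $\eps \geq 0$, and $a \geq 0$ with $a+\eps \leq 1$,
\[
\omega(\alpha, 1-\eps, a+\eps) \leq \omega(\alpha, 1, a).
\]
The proof mirrors that of Theorem \ref{thm:MM balance is slower}: with $N = n^{1/(q+1)}$ and $q = (1-a-\eps)/\eps$, the identity
\[
\langle n^\alpha, n^{1-\eps}, n^{a+\eps}\rangle = \langle N^{q\alpha}, N^q, N^{aq}\rangle \otimes \langle N^\alpha, N^a, N\rangle
\]
holds by the same exponent bookkeeping used in the original proof. Setting $K = N^q$, the first factor rewrites as $\langle K^\alpha, K, K^a\rangle$ with rank at most $K^{\omega(\alpha, 1, a)+o(1)}$, and the second has rank $N^{\omega(\alpha, 1, a)+o(1)}$ by the permutation symmetry $\omega(\alpha, a, 1) = \omega(\alpha, 1, a)$. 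Multiplying the two ranks and using $n = N^{q+1}$ yields the stated bound.

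With this extension in hand, the claim follows by a short chain. Theorem \ref{thm:MM balance is slower} applied with $a = 1-e_1-e_2-e_3$ and $\eps = e_3$ gives $\omega(1, 1-e_3, 1-e_1-e_2) \leq \omega(1, 1, 1-e_1-e_2-e_3)$. Permutation symmetry of $\omega$ rewrites the left side as $\omega(1-e_3, 1, 1-e_1-e_2)$. Applying the extension with $\alpha = 1-e_3$, $\eps = e_1$, $a = 1-e_1-e_2$ gives $\omega(1-e_3, 1-e_1, 1-e_2) \leq \omega(1-e_3, 1, 1-e_1-e_2)$, and a final permutation identifies the left-hand side with $\omega(1-e_1, 1-e_2, 1-e_3)$, completing the chain.

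I expect the main obstacle to be the extension step. One must verify that the exponent identities in the proof of Theorem \ref{thm:MM balance is slower} remain balanced after inserting the factor $\alpha$ into the first dimension, which amounts to checking that the defining constraints $N^{q+1}=n$, $N^{q+a}=n^{1-\eps}$, and $N^{aq+1}=n^{a+\eps}$ never involved the first coordinate in the first place. One also needs to recognize the first tensor factor $\langle N^{q\alpha}, N^q, N^{aq}\rangle$ as a standard matrix-multiplication tensor whose rank is controlled by $\omega(\alpha, 1, a)$ after the change of base $K = N^q$. Once these two points are dispatched, the rest is the two-line chain above together with the symmetry of $\omega$ under permutations of its arguments.
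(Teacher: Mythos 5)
Your proposal is correct and essentially the same as the paper's: the paper proves the exponent inequality $\omega(1-(x+y+z),1,1)\geq\omega(1-x,1-y,1-z)$ by chaining two applications of Theorem~\ref{thm:MM balance is slower}, first to pass from $\omega(1,1,1-(x+y+z))$ to $\omega(1-(x+y),1,1-z)$ and then from there to $\omega(1-x,1-y,1-z)$, with permutation symmetry in between. One substantive observation in your favour: the second step of the paper's chain is not a literal instance of Theorem~\ref{thm:MM balance is slower} as stated, since it balances two coordinates while holding a third coordinate fixed at $1-z$ rather than at $1$. You correctly notice this and state and prove the needed extension $\omega(\alpha,1-\eps,a+\eps)\leq\omega(\alpha,1,a)$, which the paper silently relies on; your verification that the tensor decomposition $\langle N^{q\alpha},N^q,N^{aq}\rangle\otimes\langle N^\alpha,N^a,N\rangle$ still balances is exactly the right check. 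So your proof is, if anything, the more rigorous version of the paper's argument rather than a different route.
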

    \begin{proof}[Proof of \Cref{claim:a2}]
        We use \Cref{thm:MM balance is slower}, which states that
        $\omega(1,1, a)\geq \omega(1,1-\eps, a+\eps)$, when $a+\eps\leq 1$.
        We will prove that $ \omega(1-(x+y+z),1,1)\geq \omega(1-x,1-y,1-z)$.
        Assume $1\geq x\geq y\geq z\geq 0$.
        We will prove the following transitions:
        \begin{align*}
            \omega(1-(x+y+z),1,1)
            \geq \omega(1-(x+y),1,1-z)
            \geq \omega(1-x,1-y,1-z)\;.
        \end{align*}
        Recall that $\omega(x_1,x_2,x_3)=\omega(x_{\pi(1)},x_{\pi(2)},x_{\pi(3)})$ for any permutation $\pi:[3]\to[3]$.
        \begin{itemize}
            \item The first transition follows by setting $a=1-(x+y+z)$ and $\eps=z$.
                  To apply \Cref{thm:MM balance is slower}, we need to show that $a+\eps\leq 1$, which follows as $a+\eps= 1-x-y\leq 1$.
            \item The second transition follows by setting $a=1-(x+y)$ and $\eps=y$.
                  To apply \Cref{thm:MM balance is slower}, we need to show that $a+\eps\leq 1$, which follows as $a+\eps= 1-x\leq 1$.
        \end{itemize}
        To complete the proof, we simply change the notation. Let $p_1=n^{-x},p_2=n^{-y},p_3=n^{-z}$. Then,
        \begin{align*}
            \MM{n p_1,n p_2,n p_3}=n^{\omega(1-x,1-y,1-z)}\leq n^{\omega(1,1,1-(x+y+z))}= \MM{n,n ,n \cdot(p_1\cdot p_2\cdot p_3)}.
        \end{align*}
    \end{proof}
    \section{The Median Trick}
    \label{app:median}
    Here we prove the median trick as used in the proof of \Cref{lemma3:A4,prop:ffall}.
    \begin{claim}\label{claim:med trick0}
        Let $Y$ be a random variable, and let $[a,b]$ be some interval, where $\Pr{Y\in [a,b]}\geq 2/3$.
        Define $Z$ to be the median of $r=\rr$ independent samples of $Y$. Then, $\Pr{Z\in [a,b]}\geq 1-\frac{1}{n^5}$.
    \end{claim}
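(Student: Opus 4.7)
The plan is to reduce the claim to a standard Chernoff-type concentration bound on independent indicator variables, exactly in the spirit of the usual ``median trick.''

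First, I would introduce indicator random variables $X_i \triangleq \mathbb{1}[Y_i \in [a,b]]$ for the $r$ independent samples $Y_1,\dots,Y_r$ of $Y$. By hypothesis $\Pr{X_i=1} \geq 2/3$, and the $X_i$ are independent, so their sum $X = \sum_{i=1}^r X_i$ satisfies $\Exp{X} \geq 2r/3$.

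The central observation is a deterministic containment: if the median $Z$ lies outside $[a,b]$, say $Z > b$, then at least $\lceil r/2\rceil$ of the samples must exceed $b$ and hence lie outside $[a,b]$; similarly if $Z < a$. In either case at most $\lfloor r/2 \rfloor$ samples lie in $[a,b]$, i.e.\ $X \leq r/2$. Hence $\Pr{Z \notin [a,b]} \leq \Pr{X \leq r/2}$, and it suffices to upper bound the latter.

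For this I would apply the lower-tail Chernoff bound from \Cref{thm:chernoff 6}, namely $\Pr{X \leq (1-\eps)\Exp{X}} \leq \exp(-\eps^2 \Exp{X}/2)$. Writing $r/2 = (1-1/4)\cdot (2r/3)$ and using $\Exp{X} \geq 2r/3$, this gives
\[
    \Pr{X \leq r/2} \;\leq\; \exp\!\bigl(-(1/4)^2 \cdot (2r/3)/2\bigr) \;=\; \exp(-r/48).
\]
Plugging in $r = 400\log n$ (logarithm base $2$, per the paper's convention) yields $\exp(-400\log n/48) = \exp\!\bigl(-(25/3)\log_2 n\bigr) \leq n^{-5}$, since $25/(3\ln 2) > 5$. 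This completes the bound $\Pr{Z \in [a,b]} \geq 1 - 1/n^5$.

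There is no real obstacle here; the only thing to watch is the deterministic step relating ``median outside $[a,b]$'' to ``fewer than half the samples inside $[a,b]$,'' which must be stated carefully (the argument handles ties by the $\leq r/2$ bookkeeping above, and does not depend on any tie-breaking convention for the median). Everything else is a direct application of the Chernoff bound already recorded in \Cref{thm:chernoff 6}.
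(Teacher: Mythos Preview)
Your proposal is correct and follows essentially the same approach as the paper's own proof: indicator variables for ``sample lands in $[a,b]$,'' the deterministic observation that the median escapes $[a,b]$ only if at most half the samples are inside, and the lower-tail Chernoff bound from \Cref{thm:chernoff 6} with the identical arithmetic $r/2=(1-1/4)\cdot(2r/3)$ yielding $\exp(-r/48)\leq n^{-5}$.
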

    \begin{proof}%
        Let $Y_i$ be the $i$-th sample. We define an indicator random variable $X_i$ to be $1$ if $Y_i\in[a,b]$ for $i\in [r]$.
        Let $X=\sum_{i\in[r]} X_i$.
        We claim that if more than half of the samples fall inside the interval $[a,b]$ then so is the median of all samples. In other words, the rest of the samples do not affect the median.
        Therefore, $\Pr{Z\in[a,b]}\geq \Pr{X>\frac{r}{2}}$.
        We lower bound the probability that $\Pr{X>\frac{r}{2}}$.
        Note that $X$ is the sum of $r$ independent $p$-Bernoulli random variables with $p\geq 2/3$. Then,
        \begin{align*}
            \Pr{X\leq\frac{r}{2}}
            \leq \Pr{\Bin{r}{\frac{2}{3}}\leq \frac{r}{2}}
            = \Pr{\Bin{r}{\frac{2}{3}}\leq \frac{2r}{3}\cdot(1-\frac{1}{4})}\;,
        \end{align*}
        and by the following version of Chernoff's inequality:
        $\Pr{X\leq (1-\delta)\Exp{X}}\leq \exp\brak{-\frac{\delta^2\cdot \Exp{X}}{2}}$, we have
        \begin{align*}
            \Pr{\Bin{r}{\frac{2}{3}}\leq \frac{2r}{3}\cdot(1-\frac{1}{4})}
            \leq \exp\brak{-\frac{r}{48}}
            \leq \exp\brak{-8\log n}
            \leq n^{-5}\;.
        \end{align*}
    \end{proof}
\end{appendices}

\end{document}